%
%
%
%

\documentclass{sig-alternate}
\pdfoutput=1

\usepackage{graphicx}
\usepackage{balance}  
\usepackage{times,amsmath,epsfig,epstopdf,algorithm}
\usepackage{algorithmic}
\usepackage{amssymb} 
\usepackage{listings} 
\usepackage{subcaption}
\usepackage{caption}
\usepackage{graphics} 
\usepackage{color}

\newtheorem{thm}{Theorem}[section] 
\newtheorem{defn}{Definition}

\newtheorem{exmp}{Example}
\graphicspath{ {./Charts/} } 
\epstopdfsetup{outdir=./Charts/}

\newcommand{\eat}[1]{}

\makeatletter
\renewcommand{\fnum@figure}{Fig. \thefigure}
\makeatother

\begin{document}
\title{Supporting Window Analytics over Large-scale Dynamic Graphs}

\numberofauthors{4}
\author{%
{Qi Fan{\small $~^{\natural}$}, Zhengkui Wang{\small $~^{\S}$}, Chee-Yong Chan{\small $~^{\sharp}$} }, Kian-Lee Tan{\small $~^{\sharp\natural}$}
\vspace{1.6mm}\\
\fontsize{12}{12}\selectfont\itshape
$^{\natural}$\, NUS Graduate School of Integrative Science and Engineering, National University of Singapore\\
\fontsize{12}{12}\selectfont\itshape
$^{\sharp}$\, School of Computing, National University of Singapore\\
\fontsize{12}{12}\selectfont\itshape
$^{\S}$\, Singapore Institute of Technology\\
}

\maketitle

\begin{abstract}
In relational DBMS, window functions have been widely used to facilitate data analytics. Surprisingly, while similar concepts have been employed for graph analytics, there has been no explicit notions of graph window analytic functions. In this paper, we formally introduce window queries for graph analytics. In such queries, for each vertex, the analysis is performed on a window of vertices defined based on the graph structure.
In particular, we identify two instantiations, namely the {\em k-hop window} and the {\em topological window}. 
We develop two novel indices, {\em Dense Block} index ({\em DBIndex}) and {\em Inheritance} index ({\em I-Index}), to facilitate efficient processing of these two types of windows respectively. Extensive experiments are conducted over both real and synthetic datasets with hundreds of  millions of vertices and edges. Experimental results indicate that our proposed index-based query processing solutions achieve four orders of magnitude of query performance gain than the non-index algorithm and are superior over EAGR\cite{mondal2014eagr} wrt scalability and efficiency.    
\end{abstract}

\category{H.2.4}{Systems}{Query processing}
\category{H.2, E.5}{Database}{Optimization}

\terms{Graph Database, Query Processing, Large Network}

\section{Introduction}
Information networks such as social networks, biological networks and
phone-call networks are typically modeled as graphs \cite{chen2008graph}
where the vertices correspond to objects and the edges
capture the relationships between these objects.
For instance, in social networks, every user is represented by
a vertex and the friendship between two users is reflected by an edge between
the vertices. In addition, a user's profile can be maintained as
the vertex's attributes. Such graphs contain a wealth of valuable 
information which can be analyzed to discover interesting patterns. 
For example, we can find the top-k influential users who can 
reach the most number of friends within 2 hops. With increasingly
larger network sizes, it is becoming significantly challenging to 
query, analyze and process these graph data. Therefore, there is an urgent need 
to develop effective and efficient mechanisms over graph data to draw out
information from such data resources.
 
Traditionally, in relational DBMS, window functions have been commonly
used for data analytics \cite{cao2012optimization, bellamkonda2013adaptive}. Instead of performing analysis (e.g. ranking, 
aggregate) over the entire data set, a window function returns for each 
input tuple a value derived from applying the function over a window of 
neighboring tuples. For instance, users may be interested in finding 
each employee's salary ranking within the department. Here,
each tuple's neighbors are essentially records from the same department.


Interestingly, the notion of window functions turns out to be not uncommon
in graph data. For instance, in a social network, it is important to detect 
a person's social position and influence among his/her social community. 
The ``social community'' of the person is essentially his/her ``window'' 
comprising neighbors derived from his/her k-hop friends.
However, as illustrated in this example, the structure of a graph
plays a critical role in determining the neighboring data of a vertex.
In fact, it is often useful to quantify a structural range to each vertex 
and then perform analytics over the range. 
Surprisingly, though such a concept of window functions has been widely
used, the notion has not been explicitly formulated. 
In this paper, we are motivated to extend the window queries in traditional 
SQL for supporting graph analysis. However, the window definition in 
SQL is no longer applicable in a graph context, as it does not capture 
the graph structure information.
Thus, we seek to formulate the notion of graph windows and to develop
efficient algorithms to process them over large scaled graph structures. 

We have identifed two instantiations of graph windows, namely 
$k${\em -hop} and {\em topological} windows. 
We first demonstrate these window semantics with the following examples. 
\begin{exmp}
\label{query:linkedin-2-hop-window}
\emph{(k-hop window)} In a social network ( such as Linked-In and Facebook etc.), users are normally modeled as vertices and connectivity relationships are modeled as edges. In social network scenario, it is of great interest to summarize the most relevant connections to each user such as the neighbors within 2-hops. Some analytic queries such as summarizing the related connections' distribution among different companies, and computing age distribution of the related friends can be useful. In order to answer these queries, collecting data from every user's neighborhoods within 2-hop is necessary.
\end{exmp}

\begin{exmp}
\label{query:bio-dag-window}
\emph{(Topological window)} In biological networks ( such as Argocyc, Ecocyc etc.\cite{keseler2005ecocyc}), genes, enzymes and proteins are vertices and their dependency in a pathway are edges. Because these networks are directed 
and acyclic, in order to study the protein regulating process, one may be 
interested to find out the statistics of molecules in each protein 
production pathway. For each protein, we can traverse the graph to find 
every other molecule that is in the upstream of its pathway. Then we can group and count the number of genes and enzymes among those molecules. 
\end{exmp}

A common feature among these examples is that data aggregation is needed based on a set of vertices (which is the {\em graph window}) 
defined according to each vertex.  To illustrate, in example \ref{query:linkedin-2-hop-window}, every user needs to gather data from its friends and friends-of-friends. 
The \emph{2-hop neighbors} form its window. Likewise, in example \ref{query:bio-dag-window}, every protein needs to count the number of particular type of genes preceding it in the regulating pathway. For every protein, the set of
\emph{preceding molecules} forms its window. 

To support the analyses in the above-mentioned examples, we propose a new 
type of query, \emph{Graph Window Query} (GWQ in short),
over a data graph. \emph{GWQ} is defined with respect to a graph structure 
and is important in a graph context. Unlike the traditional window in SQL, 
we identify two types of useful graph windows according to the 
graph structures, namely k-hop Window $W_{kh}$ and Topological Window $W_t$. 
A k-hop window forms a window for one vertex by using its k-hop neighbors. 
k-hop neighbors are important to one vertex, as these are the vertices 
showing structural closeness as in Example 1. The k-hop neighbors window 
we define here is similar to the egocentric-network in network analysis \cite{burt2009structural} \cite{mondal2014eagr}. A topological window, on the
other hand, forms a window for one vertex by using all its preceding 
vertices in a directed acyclic graph. The preceding vertices of one vertex are normally those which influence the vertex in a network as illustrated in Example 2. 

To the best of our knowledge, existing graph databases or graph query languages do not directly support our proposed GWQ. There are two major challenges in processing GWQ. First, we need an efficient scheme to  calculate the window of each vertex. Second, we need
efficient solutions to process the aggregation over a large number 
of windows that may overlap. This offers opportunities to share the 
computation. However, it is non-trivial to address these two challenges.  

For $k$-hop window like query, the state-of-the-art processing algorithm
is EAGR \cite{mondal2014eagr}. EAGR builds an overlay graph including 
the shared components of different windows. This is done 
in multiple iterations, each of which performs the following.
First, EAGR sorts all the vertices according to their $k$-hop 
neighbors based on their lexicographic order. 
Second, the sorted vertices are split into equal sized chunks each of which is further built as one frequent-pattern tree to mine the shared components. 
However, EAGR requires all the vertices' $k$-hop 
neighbors to be pre-computed and resided in memory during 
each sorting and mining operation;
otherwise, EAGR incurs high computation overhead if the pre-computed structure needs to be shuffled to/from disk.
This limits the efficiency and scalability of EAGR.
 For instance, 
a LiveJournal social network graph\footnote{Available at http://snap.stanford.edu/data/index.html, which is used in \cite{mondal2014eagr}} 
(4.8M vertices, 69M edges) generates over 100GB neighborhood information 
for k=2 in adjacency list representation. In addition, the overlay graph construction is not a one-time task,
but periodically performed after a certain number of structural updates in order to maintain the overlay quality. The high memory consumption renders the scheme impractical 
when $k$ and the graph size increases.

In this paper,
we propose \textit{Dense Block Index (DBIndex)} 
to process queries efficiently.
Like EAGR, DBIndex seeks to exploit common
components among different windows to salvage 
partial work done. However, unlike EAGR,
we identify the window similarity utilizing a hash-based 
clustering technique instead of sorting. This ensures 
efficient memory usage, as the window information of each vertex can 
be computed on-the-fly. On the basis of the clusters, we develop different optimizations 
to extract the shared components which result in an efficient index construction. 

Moreover, we provide another \emph{Inheritance Index (I-Index)} tailored 
to topological window query. I-Index differentiates itself from
DBIndex by integrating more descendant-ancestor relationships 
to reduce repetitive computations. This results in
more efficient index construction and query processing.

Our contributions are summarized as follows:
\begin{itemize}
\item{We propose a new type of graph analytic query, \emph{Graph Window Query} and formally define two graph windows: k-hop window and topological window. We illustrate how these window queries would help users better query and
understand the graphs under these different semantics.}

\item{ To support efficient query processing, we further propose two different types of indices: \emph{Dense Block Index} (DBIndex) and \emph{Inheritance Index} (I-Index). The
\emph{DBIndex} and \emph{I-Index} are specially 
optimized to support k-hop window and topological window query processing. 
We develop the indices by integrating the window aggregation sharing techniques to salvage partial work done for efficient computation. In addition, we develop space and performance efficient techniques for index construction. 
Compared to EAGR \cite{mondal2014eagr}, the state-of-the-art index method for k-hop window queries, our DBIndex is much more memory efficient and scalable towards handling the large-scale graphs. }

\item{We perform extensive experiments over both real and synthetic datasets
with hundreds of millions of vertices and edges on a single machine. Our experiments 
indicate that our proposed index-based algorithms outperform the naive non-index
algorithm 
by up to four orders of magnitude. In addition, our experiments also show 
that DBIndex is superior over EAGR in terms of both
scalability and efficiency. In particular, 
DBIndx saves up to 80\% of indexing time as 
compared to EAGR, 
and performs well even when EAGR fails due
to memory limitations. 
}
\end{itemize}

\section{Related Work}
Our proposed graph window functions (GWFs) for graph databases is inspired by the usefulness of window functions in relational analytic queries
\cite{zemke2012s}.

A window function in SQL typically specifies a set of partitioning attributes $A$ and an aggregation function $f$.
Its evaluation first partitions the input records based on $A$ to compute $f$ for each partition,
and each input record is then associated with the aggregate value corresponding to the partition that contains the record.
Several optimization techniques \cite{cao2012optimization, bellamkonda2013adaptive}
have also been developed to evaluate complex SQL queries involving multiple window functions.

However, the semantics and evaluation of window functions are very different between relational and graph contexts.
Specifically, the partitions (i.e., subgraphs) associated with GWFs are not necessarily disjoint; thus,
the evaluation techniques developed for relational context \cite{cao2012optimization, bellamkonda2013adaptive} are not applicable to GWFs. 

GWFs are also different from graph aggregation \cite{zhao2011graph,wang2014pagrol,chen2008graph,tian2008efficient} in graph OLAP.
In graph OLAP, information in a graph are summarized
by partitioning the graph's nodes/edges (based on some attribute values) and computing aggregate values for each partition.
GWFs, on the other hand, compute aggregate values for each graph node w.r.t. the subgraph associated with the node.
Indeed, such differences also arise in the relational context, where different 
techniques are developed to evaluate OLAP and window function queries.

In \cite{yan2010top}, the authors investigated the problem of finding the vertices that have top-k highest aggregate values over their h-hop neighbors. They proposed mechanisms to prune the computation by using two properties: First, the locality between vertices is used to propagate the upper-bound of aggregation; Second, the upper-bound value of aggregates can be estimated from the distribution of attribute values. However, all these pruning techniques are not applicable in our work, as we need to compute the aggregation value for every vertex. In such a scenario, techniques in \cite{yan2010top} degrade to the non-indexed approach as described in Section 4.


Indexing techniques have been proposed to efficiently determine whether an input pair of vertices is within a distance of k-hops (e.g. k-reach index \cite{cheng2012k}) or reachable (e.g. reachability index \cite{yu2010graph}). However, such techniques are not efficient for computing the k-hop window or topological window for a set of $n$ vertices with a time complexity of $O(n^2)$.

\eat{\cite{mondal2014eagr} proposed an EAGR system which evaluates neighborhood-based aggregation
queries over low hops (k = 1). They follow a in-memory model and assume the k-hop information 
for each vertex is pre-computed. Based on the neighborhood information, it first sort each vertex
using lexicographic order. Then vertices are split into equal sized chunks. For each chunk, it 
then builds a Frequent-Pattern Tree for iteratively mining the shared components between each vertex's 
neighborhoods. Lastly, it builds an overlay graph based on the shared components for efficient computing.
}

\cite{mondal2014eagr} proposed an EAGR system, which uses the famous VNM heuristic and Frequent-Pattern
Tree to find the shared component among each vertex's neighborhoods. It starts by building an
overlay graph as a bipartite graph representing the vertex-neighbor mapping. Then it aims to
find the bi-cliques in the overlay graph. Each bi-clique represents a set of vertices whose neighborhood
aggregates can be shared. Once a bi-clique is found, it is inserted back to the overlay graph as
a virtual node to remove redundant edges. EAGR find bicliques in iterations. During each iteration,
it sorts each vertices in overlay graph by their neighborhood information. Then the sorted vertices 
are split into equal-sized chunks. For each chunk, it then builds a FP-Tree to mining the large bi-cliques. 
As the algorithm iterates, the overlay graph evolves to be less dense.

The main drawback of EAGR is its demands of high memory usage on overlay construction. It
requires the neighborhood information to be pre-computed, which is used in the sorting phase
of each iteration. In EAGR the neighborhood information is assumed to be stored in memory.
However, the assumption does not scale well for computing higher hop windows (such as k $\geq 2$). 
For instance, a LiveJournal social network graph \footnote{Available at http://snap.stanford.edu/data/index.html, which is used \cite{mondal2014eagr}} (4.8M nodes, 69M edges) generates over 100GB mapping information for k=2 in adjacency list 
representation. If the neighborhood information is resided in disk, the performance of EAGR will
largely reduced. Similarly, if the neighborhood information is computed on-the-fly, EAGR needs
to perform the computation in each iteration, which largely increases indexing time.

We tackle this drawback by adapting a hash based approach that clusters each vertex based
on its neighborhood similarity. During the hashing, the vertex's neighborhood information
is computed on-the-fly. As compared to sorting based approach, we do not require vertex's neighborhood
to be pre-reside in memory. In order to reduce the repetitive computation of vertex's neighborhood,
we further propose an estimation based indexing construction algorithm that only require a vertex's small
hop neighborhood to be computed during clustering. As our experiments show, our proposed methods 
can perform well even when EAGR algorithm fails when neighborhood information overwhelms system's memory.
To further reduce the neighborhood access, we adapted a Dense Block heuristic process each vertex
in one pass. Experiments shows that the performance of our heuristic is comparable to EAGR's, but with
much shorter indexing time.

\section{Problem Formulation}

In this section, we provide the formal definition of graph window query.
We use $G = (V,E)$ to denote a directed/undirected data graph, where $V$ is its vertex set and $E$ is its edge set.
Each node/edge is associated with a (possibly empty) set of attribute-value pairs.


Fig.~\ref{fig:attributed} shows an undirected graph representing a social network that we will use as our running example in this paper. 
The table shows the values of the five attributes (User, Age, Gender, Industry, and Number of posts) associated with each vertex. For convenience, each node is labeled with its user attribute value;
and there is one edge between a user X and another user Y if X and Y are connected in the social network.

\begin{figure}[t]
\centering
\includegraphics[width=0.48\textwidth]{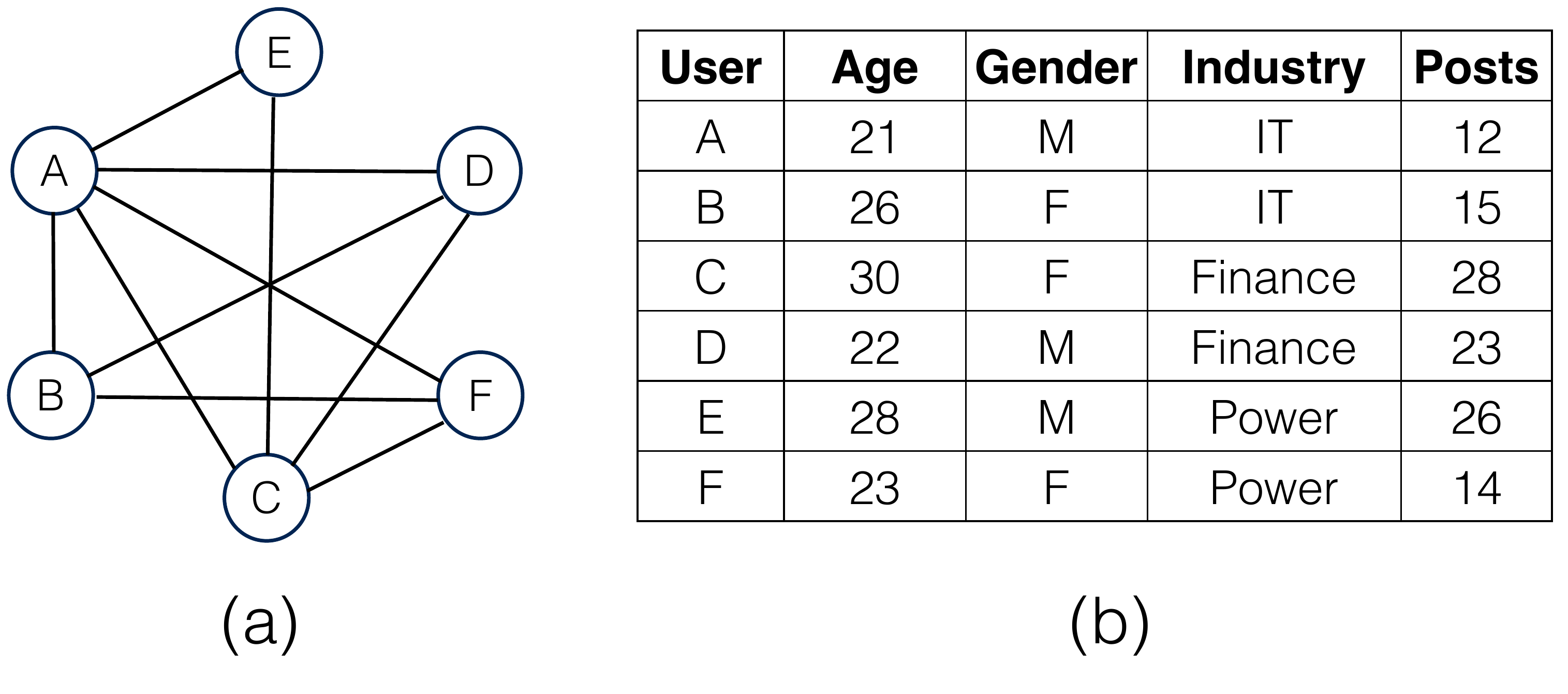}
	\caption{Running Example of Social Graph. (a) provides the graph structure; (b) provides the attributes associated with the vertices of (a).} 
	\label{fig:attributed}
\end{figure}

Given a data graph $G = (V,E)$,
a \emph{Graph Window Function (GWF)} over $G$ can be expressed 
as a quadruple $(G, W, \Sigma, A)$, where 
$W(v)$ denotes a \emph{window specification} for a vertex $v \in V$ 
that determines the set of vertices in some subgraph of $G$,
$\Sigma$ denotes an \emph{aggregation function}, and $A$ denotes 
a \emph{vertex attribute}.
The evaluation of a GWF $(G, W, \Sigma, A)$ on $G$
computes for each vertex $v$ in $G$, the aggregation $\Sigma$ on the 
values of attribute $A$  
over all the vertices in $W(v)$, which we denote by $\Sigma_{v' in W(v)} v'.A$.

Note that, in this paper, we focus on the attribute-based aggregation with distributive or algebraic aggregation functions. 
In other words, $W(v)$ refers to a set of vertices, and the aggregation function 
$\Sigma$ operates on the values of attribute $A$ over all the vertices in $W(v)$. Meanwhile, the aggregation function $\Sigma$ is distributive or algebraic (e.g., sum, count, average), as these aggregation functions are widely used in practice. 
 
In the following, we introduce two useful types of window specification (i.e., $W$), namely, 
\emph{k-hop window} and \emph{topological window}.

\begin{defn}[K-hop Window] 
Given a vertex $v$ in a data graph $G$, 
the $k$-hop window of $v$, denoted by $W_{kh}(v)$ (or $W(v)$ when there is no ambiguity),
is the set of neighbors of $v$ in $G$ which can be reached within $k$ hops.
For an undirected graph $G$,
a vertex $u$ is in $W_{kh}(v)$  iff there is a $\alpha$-hop path between $u$ and $v$ where $\alpha \leqslant k$.
For a directed graph $G$,
a vertex $u$ is in $W_{kh}(v)$  iff there is a $\alpha$-hop directed path from $v$ to $u$ \footnote{
Other variants of k-hop window for directed graphs are possible; e.g.,
a vertex $u$ is in $W_{kh}(v)$  iff there is a $\alpha$-hop directed path from $u$ to $v$ where $\alpha \leqslant k$.
} where $\alpha \leqslant k$.
\end{defn}

Intuitively, a k-hop window selects the neighboring vertices of a 
vertex within a  k-hop distance. 
These neighboring vertices typically represent the most important 
vertices to a vertex with regard to their structural relationship in a graph. 
Thus, k-hop windows provide meaningful specifications for many applications, such as customer behavior analysis \cite{Briscoe2013Credit,Dai2012Predict} , digital marketing \cite{Ma2009Marketing} etc.

As an example, in Fig.~\ref{fig:attributed}, the $1$-hop window of vertex \emph{E} is $\{A,C,E\}$ and the $2$-hop window of vertex \emph{E} is $\{A,B,C,D,E,F\}$.  

\begin{defn}[Topological Window] 
Given a vertex $v$ in a DAG $G$, the topological window of $v$, denoted by $W_t(v)$,
refers to the set of ancestor vertices  of $v$ in $G$;
i.e., a vertex $u$ is in $W_t(v)$ iff there is directed path from $u$ to $v$ in $G$.
\end{defn}

There are many directed acyclic graphs (DAGs) in real-world applications (such as biological networks, citation networks and dependency networks)
where topological windows represent meaningful relationships that are of interest.
For example, in a citation network where (X,Y) is an edge  iff paper $X$ cites paper $Y$, 
the topological window of a paper represents the citation impact of that paper \cite{Ma2008Pagerank,Holsapple2003CAI,Campanario:2011ESJ}.

As an example, Fig.~\ref{fig:topological} shows a small example of a Pathway Graph from a biological network. 
The topological window of $E$ $W_t(E)$ is $\{A, B, C, D, E\}$ and $W_t(H)$ is $\{A, B, D, H\}$.

\begin{figure}[h]
\centering
 \includegraphics[width=45mm,height=45mm]{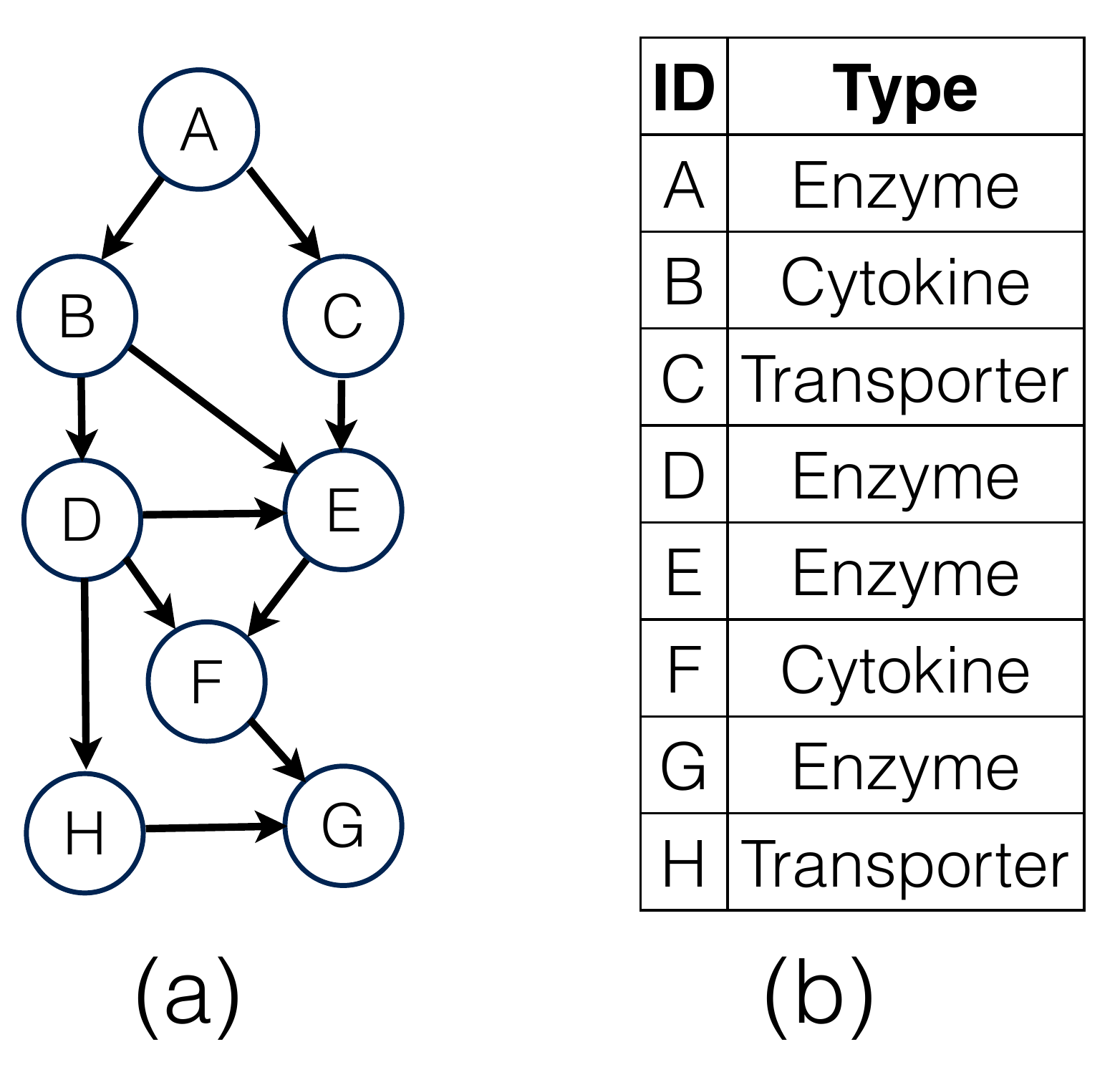}
	\caption{Running Example of Pathway DAG. (a) provides the DAG structure; (b) provides the attributes associated with the vertices of (a).}
	\label{fig:topological}
\end{figure}

\begin{defn}[Graph Window Query] 
A graph window query on a data graph $G$ is of the form
$GWQ(G, W_1, \Sigma_1, A_1,\cdots,$ \\
$W_m, \Sigma_m, A_m)$, where $m \geq 1$
and
each quadruple $(G, W_i,\Sigma_i,A_i)$ is a graph window function on $G$.
\end{defn}
In this paper, we focus on graph window queries with a single window 
function that is either a k-hop or topological window. 
The evaluation of complex graph window queries with multiple window 
functions can be naively processed as a sequence of window functions one
after another. We leave the optimization of processing multiple
window functions for further study in the future.

\newcommand{\DBIndex}{DBIndex}
\newcommand{\blockset}{{\cal B}}
\newcommand{\clusterset}{{\cal C}}

\section{Dense Block Index}
A straightforward approach to process a graph window query 
$Q = (G, W, \Sigma, A)$, where $G = (V,E)$,
is to dynamically compute the window $W(v)$ for each vertex $v \in V$ and
its aggregation 
$\Sigma_{v' \in W(v)} v'.A$ 
independently from other vertices. We refer to this approach as \emph{Non-Indexed} method.

Given that many of the windows would share many common nodes (e.g., the k-hop windows of two adjacent nodes),
such a simple approach would be very inefficient due to the lack of sharing of the aggregation computations. 

To efficiently evaluate graph window queries, we propose an indexing technique named \emph{dense block index} (\textit{\DBIndex}), which is both space and query efficient. 
The main idea of \DBIndex\ is to try to reduce the aggregation computation cost by identifying subsets of nodes that are shared by more than one window 
so that the aggregation for the shared nodes could be computed only once instead of multiple times.

For example, consider a graph window query on the social graph in Fig.~\ref{fig:attributed} using the 1-hop window function.
We have $W(B) = \{A,B,D,F\}$ and $W(C) = \{A,C,D,E,F\}$ sharing three common nodes $A$, $D$, and $F$.
By identifying the set of common nodes $S=\{A,D,F\}$, its aggregation 
$\Sigma_{v \in S} v.A$ can be computed only once
and then reuse to compute the aggregation for $\Sigma_{v \in W(B)} v.A$ and $\Sigma_{v \in W(C)} v.A$.

Given a window function $W$ and a graph $G=(V,E)$,
we refer to a non-empty subset $B \subseteq V$ as a {\it block}.
Moreover, if $B$ contains at least two nodes and $B$ is contained by at least two different windows
(i.e., there exists $v_1, v_2 \in V$, $v_1 \neq v_2$, $B \subseteq W(v_1)$, and $B \subseteq W(v_2)$),
then $B$ is referred to as a {\it dense block}.
Thus, in the last example, $\{A,D,F\}$ is a dense block.

We say that a window $W(X)$ is {\it covered} by a collection of disjoint blocks $\{B_1,\cdots,B_n\}$
if the set of nodes in the window $W(X)$ is equal to the union of all nodes in the collection of disjoint blocks;
i.e., $W(X) = \bigcup_{i=1}^{n} B_i$ and $B_i \cap B_j = \emptyset$ if $i \neq j$.

To maximize the sharing of aggregation computations for a graph window query, 
the objective of \DBIndex\ is to identify a small set of blocks $\blockset$ such that
for each $v \in V$, $W(v)$ is covered by a small subset of disjoint blocks in $\blockset$.
Clearly, the cardinality of $\blockset$ is minimized if $\blockset$ contains a few large dense blocks.

Thus, given a window function $W$ and a graph $G=(V,E)$,
a \DBIndex\ to evaluate $W$ on $G$ consists of three components in the form of a bipartite graph.
The first component is a collection of nodes (i.e., $V$);
the second component is a collection of blocks; i.e., $\blockset = \{B_1,\cdots,B_n\}$ where each $B_i \subseteq V$;
and the third component is a collection of links from blocks to nodes
such that if a set of blocks $B(v) \subseteq \blockset$ is linked to a node $v \in V$,
then $W(v)$ is covered by $B(v)$.
Note that a \DBIndex\ is independent of both the aggregation function (i.e., $\Sigma$) and the attribute to be aggregated (i.e., $A$).

Fig.~\ref{fig:dbi_agg}(a) shows an example of a \DBIndex\ wrt the social graph in Fig.~\ref{fig:attributed} and the 1-hop window function.
Note that the index consists of a total of seven blocks of which three of them are dense blocks.

\begin{figure}[t]
\centerline{\epsfig{figure=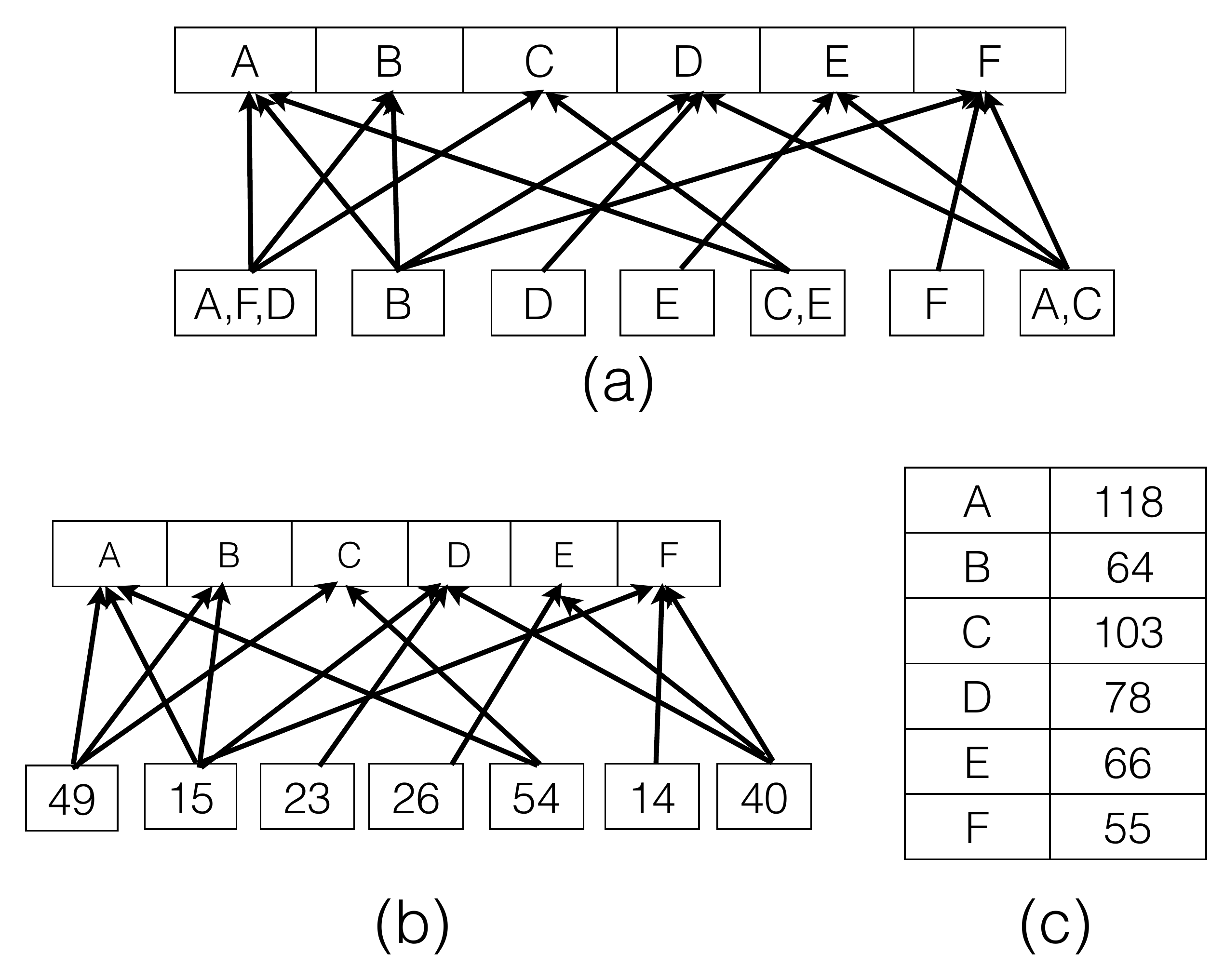,width=62.7mm} }
	\caption{Window Query Processing using DBIndex. (a) provides the DBIndex for 1-hop window query in Fig.~\ref{fig:attributed}; (b) shows the partial aggregate results based on the dense block; (c) provides the final aggregate value of each window. }
	\label{fig:dbi_agg}
\end{figure}

\subsection{Query Processing using \DBIndex}
Given a \DBIndex\ wrt a graph $G$ and a window function $W$, a graph window query $Q = (G, W, \Sigma, A)$ is processed by the following two steps.
First, for each block $B_i$ in the index, we compute the aggregation (denoted by $T_i$) over all the nodes in $B_i$;
i.e., $T_i = \Sigma_{v \in B_i} v.A$. 
Thus, each $T_i$ is a partial aggregate value.
Next, for each window $W(v)$, $v \in V$, the aggregation for the window is computed by aggregating over all the partial aggregates
associated with the blocks linked to $W(v)$;
i.e., if $B(v)$ is the collection of blocks linked to $W(v)$, 
then the aggregation for $W(v)$ is given by $\Sigma_{B_i \in B(v)} T_i$. 

Consider again the \DBIndex\ shown in Fig.~\ref{fig:dbi_agg}(a) 
defined wrt the social graph in Fig.~\ref{fig:attributed} and the 1-hop window function.
Fig.~\ref{fig:dbi_agg}(b) shows how the index is used to evaluate the graph window query $(G, W, sum, Posts)$
where each block is labeled with its partial aggregate value;
and
Fig.~\ref{fig:dbi_agg}(c) shows the final query results.


\subsection{\DBIndex\ Construction} 

In this section, we discuss the construction of the \DBIndex\ (wrt a graph $G=(V,E)$ and window function $W$) which has two key challenges.

The first challenge is the time complexity of the index construction. 
From our discussion of query processing using DBIndex, we note that the number of aggregation computations is determined by both the number of blocks as well as the number of links in the index; 
the former determines the number of partial aggregates to compute
while the latter determines the number of aggregations of the partial aggregate values.
Thus, to maximize the shared aggregation computations using DBIndex, both the number of blocks in the index as well as the number of blocks covering each window should be minimized. 
However, finding the optimal \DBIndex\ to minimize this objective is NP-hard\footnote{
Note that a simpler variation of our optimization problem has been proven to be NP-hard \cite{vassilevska2004finding}.}.
Therefore, efficient heuristics are needed to construct the \DBIndex.

The second challenge is the space complexity of the index construction.
In order to identify large dense blocks to optimize for query efficiency,
a straightforward approach  is to first derive the window $W(v)$ for each node $v \in V$ and
then use this derived information to identify large dense blocks.
However, this direct approach incurs a high space complexity of $O(|V|^2)$.
Therefore, a more space-efficient approach is needed in order to scale to handle large graphs.

In this section, we present two heuristic approaches, namely {\it MC} and {\it EMC}, to construct \DBIndex.
The second approach EMC is designed to improve the efficiency of the first approach MC for constructing \DBIndex\
(wrt k-hop window function) by using some approxmation techique at the expense of possibly sacrificing 
the ``quality'' of the dense blocks (in terms of their sizes).

\subsubsection{MinHash Clustering (MC)}

To reduce both the time and space complexities for the index construction,
instead of trying to identify large dense blocks among a large collection of windows,
we first partition all the windows 
into a number of smaller clusters of similar windows and then identify large dense blocks for each of the smaller clusters.
Intuitively, two windows are considered to be highly similar if they share a larger subset of nodes.
We apply the well-known {\it MinHash based Clustering (MC)} algorithm \cite{broder1997syntactic} to partition the windows into clusters of similar windows.
The MinHash clustering algorithm is based on using the
\emph{Jaccard Coefficient} to measure the similarity of two sets.  
Given the two window $W(v)$ and $W(u)$, $u,v \in V$,
their \emph{Jaccard Coefficient} is given by
\begin{equation} \label{eq:jacc_sim}
	J(u,v) = \frac{|W(u) \cap W(v)|}{|W(u) \cup W(v)|}
\end{equation}
The \emph{Jaccard Coefficient} ranges from 0 to 1, where  a larger value means that the windows are more similar.

\begin{figure}[htb]
\centering
\includegraphics[width=65mm,height=55mm]{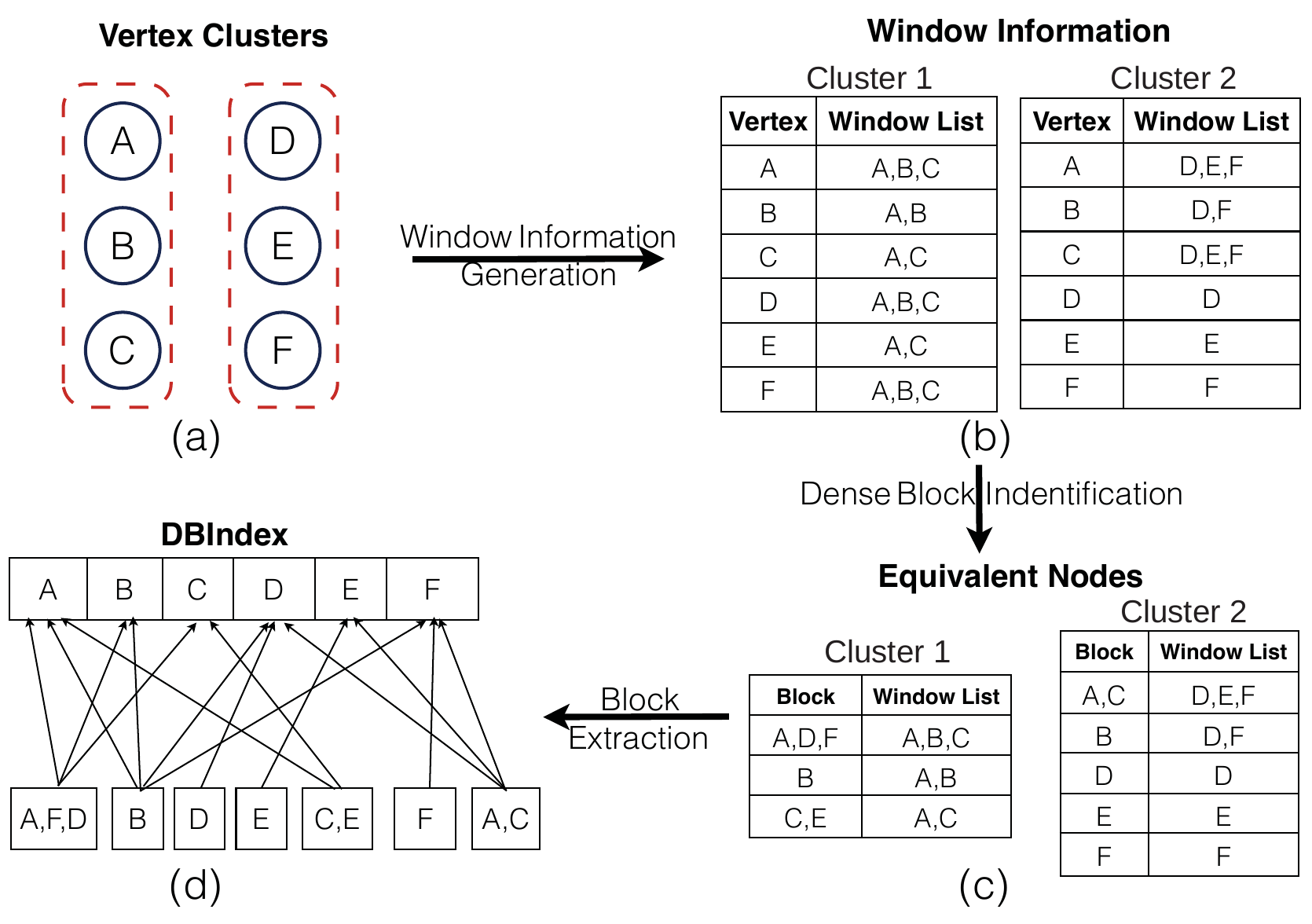}
\caption{DBIndex Construction over Social Graph in Fig.~\ref{fig:attributed}. (a) shows two clusters after MinHash clustering; (b) shows the window information of involved vertices within each cluster; (c) shows the dense blocks within each cluster; (d) provides the final DBIndex.}
\label{fig:dbi-indexing}
\end{figure}

Our heuristic approach to construct \DBIndex\ $I$ operates as follows.
Let $nodes(I)$, $blocks(I)$, and $links(I)$ denote, respectively,
the collection of nodes, blocks, and links that form $I$.
Initially, we have $nodes(I) = V$,
$blocks(I) = \emptyset$,
and
$links(I) = \emptyset$.
\begin{algorithm}
\caption{CreateDBIndex}
\begin{algorithmic}[1]  \small
\REQUIRE Graph $G=(V,E)$, window function $W$
\ENSURE \DBIndex\ $I$
\STATE Initialize \DBIndex\ $I$: $nodes(I)=V$, $blocks(I)=\emptyset$, $links(I)=\emptyset$
\FORALL{$v \in V$}
	\STATE Traverse $G$ to determine $W(v)$
	\STATE Compute the hash signature $H(v)$ for $W(v)$
\ENDFOR
\STATE Partition $V$ into clusters $\clusterset = \{C_1,C_2,\cdots\}$ based on hash signatures $H(v)$
\FORALL{$C_i \in \clusterset$}
	\FORALL{$v \in C_i$} 
		\STATE Traverse $G$ to determine $W(v)$
	\ENDFOR
	\STATE {\tt IdentifyDenseBlocks} $(I,G,W,C_i)$
\ENDFOR
\RETURN $I$
\end{algorithmic}
\label{algo:k-hop-dbi}
\end{algorithm}

The first step is to partition the nodes in $V$ into clusters
using MinHash algorithm such that nodes with similar windows belong to the same cluster. 
For each node $v \in V$, we first derive its window $W(v)$ by an appropriate traversal of the graph $G$.
Next, we compute a hash signature (denoted by $H(v)$) for $W(v)$ based on applying $m$ hash functions on the set $W(v)$.
Nodes with identical hash signatures are considered to have highly similar windows and are grouped into the same cluster.
To ensure that our approach is scalable,
we do not retain $W(v)$ in memory  after its hash signature $H(v)$ has been computed and used to cluster $v$;
i.e., our approach does not materialize all the windows in the memory to avoid high space complexity.
Let $\clusterset = \{C_1,C_2,\cdots\}$ denote the collection of clusters obtained from the first step,
where each $C_i$ is a subset of nodes.

The second step is to identify dense blocks from each of the clusters computed in the first step.
The identification of dense blocks in each cluster $C_i$ is based on the notion of node equivalence defined as follows.
Two distinct nodes $u, v \in C_i$ are defined to be equivalent (denoted by $u \equiv v$)
if $u$ and $v$ are both contained in the same set of windows;
i.e., for every window $W(x), x \in C_i$, $u \in W(x)$ iff $v \in W(x)$.
Based on this notion of node equivalence, $C_i$ is partitioned into blocks of equivalent nodes.
To perform this partitioning, we need to again traverse the graph for each node $v \in C_i$ to 
determine its window $W(v)$\footnote{
Note that although we could have avoided deriving $W(v)$ a second time if we had materialzed all the derived windows the first time, our approach is designed to avoid the space complexity of materializing all the windows in memory at the cost of computing each $W(v)$ twice. We present an optimization in Section~\ref{sec:optimized} to avoid the recomputation cost.
}.

However, since $C_i$ is now a smaller cluster of nodes, we can now materialize all the windows for the nodes in $C_i$ in memory without exceeding the memory space. In the event that a cluster $C_i$ is still too large for all its vertex windows to be materialized in main memory, we can further partition $C_i$ into equal size sub-clusters. This re-partition process can be recursively performed until the sub clusters created are small enough such that the windows for all nodes in the sub clusters fit in memory.

Recall that a block $B$ is a dense block if $B$ contains at least two nodes and
$B$ is contained in at least two windows.
Thus, we can classify the nodes in each $C_i$ as either dense or non-dense nodes:
a node $v \in C_i$ is classified as a {\it dense node} if $v$ is contained in a dense block;
otherwise, $v$ is a non-dense node.

For each dense block $B$ in $C_i$,
we update the blocks and links in the \DBIndex\ $I$ as follows:
we insert $B$ into $blocks(I)$ if $B \not\in blocks(I)$,
and we insert $(B,v)$ into $links(I)$
for each $v \in C_i$ where $B \subseteq W(v)$.
If all the blocks in $C_i$ are dense blocks, then we are done with identifying dense blocks in $C_i$;
otherwise, there are two cases to consider.
For the first case, if all the nodes in $C_i$ are non-dense nodes,
then we also terminate the process  of identifying dense blocks in $C_i$
and update the blocks and links in the \DBIndex\ $I$ as before:
we insert each non-dense block $B$ into $blocks(I)$,
and we insert $(B,v)$ into $links(I)$ for each $v \in C_i$ where $B \subseteq W(v)$.
For the second case, if $C_i$ has a mixture of dense and non-dense nodes,
we remove the dense nodes from $C_i$ and recursively identify dense blocks in $C_i$
following the above two-step procedure.

Note that since the blocks are identified independently from each cluster,
it might be possible for the same block to be identified from different clusters.
We avoid duplicating the same block in $blocks(I)$ by checking that a block $B$ is not already in $blocks(I)$
before inserting it into $blocks(I)$.
The details of the construction algorithm are shown as Algorithms
\ref{algo:k-hop-dbi},
\ref{algo:identify},
and
\ref{algo:refine}.


\begin{algorithm}
\caption{IdentifyDenseBlocks}
\begin{algorithmic}[1] \small
\REQUIRE \DBIndex\ $I$, Graph $G=(V,E)$, window function $W$, a cluster $C_i \subseteq V$
\STATE Partition $C_i$ into blocks based on node equivalence
\STATE Initialize $DenseNodes = \emptyset$
\FORALL{dense block $B$}
	\STATE Insert $B$ into $blocks(I)$ if $B \not\in blocks(I)$
	\STATE Insert $(B,v)$ into $links(I)$ for each $v \in C_i$ where $B \subseteq W(v)$
	\STATE $DenseNodes = DenseNodes \cup B$
\ENDFOR
\IF{($DenseNodes = \emptyset$)}
	\FORALL{block $B$}
		\STATE Insert $B$ into $blocks(I)$ if $B \not\in blocks(I)$
		\STATE Insert $(B,v)$ into $links(I)$ for each $v \in C_i$ where $B \subseteq W(v)$
	\ENDFOR
\ELSIF{($C_i - DenseNodes \neq \emptyset$)}
	\IF{($C_i \neq DenseNodes$)}
	\STATE {\tt RefineCluster} $(I,G,W,C_i - DenseNodes)$
	  \ENDIF
\ENDIF
\end{algorithmic}
\label{algo:identify}
\end{algorithm}

Fig.~\ref{fig:dbi-indexing} 
illustrates the construction of the DBIndex with respect to the social graph in 
Fig.~\ref{fig:attributed}(a) and 1-hop window using the MC algorithm.
First, the set of graph vertices are partitioned into clusters using MinHash clustering;
Fig.~\ref{fig:dbi-indexing}(a)
shows that the set of vertices $V = \{A, B, C, D, E, F \}$ are partitioned into two clusters $C_1=\{A, B, C\}$ and $C_2=\{D, E, F\}$. 

For convenience, cluster 1 in 
Fig.~\ref{fig:dbi-indexing}(b) shows for each $v \in C_1$, the set of vertices whose windows contain $v$;
i.e., $\{u |\ v \in W(u)\}$.
Similarly, Cluster 2 in Fig.~\ref{fig:dbi-indexing} (b)
shows for each $v \in C_2$, the set of vertices whose windows contain $v$.
Consider the identification of dense blocks in cluster $C_1$.
As shown in Fig.~\ref{fig:dbi-indexing} (c), based on the notion of equivalence nodes,
cluster $C_1$ is partitioned into three blocks of equivalent nodes:
$B_1=\{A,D,F\}$, $B_2=\{B\}$, and $B_3\{C,E\}$.
Among these three blocks, only
$B_1$ and $B_3$ are dense blocks.
The MC algorithm then tries to repartition the window $A,B,C$ using non-dense nodes in $C_1$,
(i.e., $B_2$). Since $B_2$ is the only non-dense node, it directly outputs.
At the end of processing cluster $C_1$,
the DBIndex $I$ is updated as follows:
$blocks(I) = \{B_1, B_2, B_3\}$ 
and
$links(I) = \{ (B_1,\{A,B,C\}), (B_2, \{A,B\}),$, $(B_3, \{A,C\}) \}$. The identification of dense blocks in cluster $C_2$ 
is of similar process.

We find it is non-trivial to precisely analyze the complexity of Algorithm~\ref{algo:k-hop-dbi}. Here, we only offer a brief analysis. Suppose the MinHash cost is $H$ and the total cost for k-bounded BFS for all vertex is $B$, Lines 1-5 has the complexity of $O(H + B)$.  Lines 7-10 has the complexity of $O(B)$. A single execution of Algorithm~\ref{algo:identify}  has the  complexity of $O(|V|)$, since we can simply partition nodes using hashing. Suppose the iteration runs for $K$ times, the total cost for Algorithm~\ref{algo:identify} and Algorithm~\ref{algo:refine} is $O(K|V|)$. Therefore the overall complexity of Algorithm~\ref{algo:k-hop-dbi} is $O(H+2*B + O(K|V|))$. $H$ depends on the number of vertex-window mappings for a given query and $B$ depends on the graph structure and number of hops. As we demonstrate in Section 6, the $H$ and $B$ are the major contribution of the indexing time. To reduce the index time, we provide further optimization techniques.

\begin{algorithm}
\caption{RefineCluster}
\begin{algorithmic}[1]
\REQUIRE \DBIndex\ $I$, Graph $G=(V,E)$, window function $W$, a cluster $C \subseteq V$
\FORALL{$v \in C$}
	\STATE Compute the hash signature $H(v)$ for $W(v) \cap C$
\ENDFOR
\STATE Partition $C$ into clusters $\clusterset = \{C_1,C_2,\cdots\}$ based on hash signatures $H(v)$
\FORALL{$C_i \in \clusterset$}
	\STATE {\tt IdentifyDenseBlocks} $(I,G,W,C_i)$
\ENDFOR
\end{algorithmic}
\label{algo:refine}
\end{algorithm}
\eat{
We adopt the \emph{MinHash} clustering  
\cite{broder1997syntactic} 
in index construction to discover the dense blocks. MinHash has been well known for its effectiveness of set clustering by using the Jaccard Coefficient. Our proposed MC approach to generate the index as follows: First, k min-wise independent functions are used to generate $k$ hash signatures for each window. Note that this is a general procedure in MinHash clustering. As all the vertex window mapping is large, we avoid materializing this mapping by dynamically computing the window for each vertex. The k-hop window collection for one vertex can be easily done by using the KBBFS. Once each window is collected, its $k$ hash signatures will be generated. We emphasize that once the signatures are generated, the window information can be safely erased from to memory. To store the signatures, it can be stored in a signature matrix in $V \times k$, where cell $(i,j)$ records the $j^{th}$ hash signature of window $i$. The signature matrix is then sorted based on the row values.

The first step clustering is conducted based on the sorted signature matrix. All the similar rows are clustered together. This clustering helps us identify all the similar windows. 

Within each cluster, we further perform a fine-grained similarity check to discover the dense blocks among the similar windows. To do this, for each cluster, we build one clustered matrix, where each column corresponds to one window in the cluster. The window information can be obtained by via the KBBFS using the vertex. Over each clustered matrix, we adopt an iterative indexing scheme. The scheme works as follows: for each cluster, we create a \emph{clustered matrix} whose column maps each member of cluster, and each column contains the value of that window. Given the clustered matrix, we sort the matrix based on rows, then merge the rows with identical value. The merged row indexes are effectively a dense block, thus can be output. For unmerged portion of the matrix, we use the similar \emph{MinHash} to cluster them, and pass the newly created clusters to the next iteration. In implementation, we can ignore all the zero rows, which is more space friendly. The index creation algorithm is shown as in Algo. \ref{algo:k-hop-dbi}.

\begin{algorithm}
\caption{DBIGen - Iterative DBI Index Creation }
\begin{algorithmic}[1]
\REQUIRE $C$, \textbf{optional} $M$ \COMMENT{Window cluster, Lookup table}
\GLOBAL $DBI$ \COMMENT{Dense Block Index}
\STATE $DBI.window.add(C)$  \label{code:dbi-add-window}
\STATE $CW$ \COMMENT{Clustered window}
	\STATE $CW \leftarrow [\ ][\ ]$ \label{code:dbi-window-partition-start}
	\FORALL{$w \in C$}
		\STATE $CW[\ ][w] \leftarrow M[w]\ ||\ getWindow(w)$
	\ENDFOR \label{code:dbi-window-partition-end}
	\STATE sort $CW$ by $rows$ \label{code:dbi-sort}
	\FORALL{$row$ occurs twice in $rows$}
		\STATE $block.addAll(row.index)$ \label{code:dbi-add-block-start} \COMMENT{Create new block}
		\STATE $DBI.blocks.add(block)$
		\FOR{$col$ with value $CM[row][col]$ is 1}
			\STATE $DBI.links.add(col, block)$		
		\ENDFOR \label{code:dbi-add-block-end}
		\STATE $CW.remove(row)$ \label{code:dbi-remove-rows}
	\ENDFOR
	\FORALL{$c \in minHash(CW)$} \label{code:dbi-remain-min}
		\IF{$c.size > 2$}
			\STATE $dbiGen(c, CW)$ \label{code:dbi-recursive}
		\ENDIF
	\ENDFOR
\end{algorithmic}
\label{algo:k-hop-dbi}
\end{algorithm}

Algo.~\ref{algo:k-hop-dbi} requires a cluster of window IDs as input. It optionally takes a lookup table from which the contents of a given window can be fetched. The algorithm starts with inserting window IDs into the window field of the global structure \emph{DBI}(Line~\ref{code:dbi-add-window}). Then $CW$ stores the clustered matrix for this cluster. It first attempts to get the information from the lookup table $M$, if failed, it calls $getWindow$ function which dynamically computes the window information. Then the $CW$ is sorted based on rows. For each rows with identical value, a dense block is created and recorded in \emph{DBI} (Line~\ref{code:dbi-add-block-start}-\ref{code:dbi-add-block-end}). Those rows then removed from $CW$ (Line~\ref{code:dbi-remove-rows}). The the remaining $CW$ is supplied to $minHash$ for clustering. For the new clusters with size greater than 2, it recursively calls Algo.~\ref{algo:k-hop-dbi}.

In order to reduce the memory usage of Algo.~\ref{algo:k-hop-dbi}, a corner case needs to be taken care of. In Algo.~\ref{algo:k-hop-dbi}, a clustered matrix $CM$ is used temporarily to expand the window information for the cluster. If the clusters generated from the first minhash is too large, we need to split the cluster into equal size. The split works since it preserves the similarity measure, that is if a cluster contains similar elements, after split, each of the resulting cluster also contains similar elements. The split reduces the memory burden. In Line.~\ref{code:dbi-recursive}, the $CW$ is optionally passed to the minHash. Since $CW$ is already in memory, it can be passed to the next recursive call safely without causing memory issue.
 
The hashing results in a $V \times k$ matrix, with cell $(i,j)$ being the $j^{th}$ hash signature of window $i$. The matrix is then sorted in row major order and subsequently is traversed in column order. During traversal, rows with identical values are grouped into the same cluster. The hashing complexity is $O(k|nnz|+w|V|)$, where $w$ is cost for computing each window , the sorting is of complexity $O(k|V|log|V|)$, and the clustering is of complexity $O(|V|)$. Since $k$ is a small constant, the complexity of clustering algorithm is of $O(w|V|+|nnz|+|V|log(|V|))$. 

An example of \emph{MinHash} based \emph{DBIndex Indexing} is shown in Fig.~\ref{fig:dbi-indexing}. The top left part is the \emph{1-hop} window matrix for a attributed graph. The top right part is the partition result based on \emph{MinHash}. The bottom right part is the dense blocks identified by row condensing. And the bottom left part is the \emph{DBI} created based on the dense blocks.
}


\subsubsection{Estimated MinHash Clustering (EMC)}
\label{sec:optimized}

The MC approach described in the previous section requires the window of each node (i.e., $W(v), v \in V$)
to be computed twice in order to avoid the high space complexity of materializing all the windows in main memory.
For k-hop window function with a large value of $k$, the cost of graph traverals to compute the k-hop windows
could incur a high computation overhead. Moreover, the cost of initial MinHash in MC approach equals to the initial number of vertex-window mappings, which is of the same order as graph traversal. For the larger hops, MinHash clustering would incur high computation cost.

To address these issues, we present an even more efficient approach,
referred to as {\it Estimated MinHash Clustering (EMC)}, 
to optimize the construction of the \DBIndex\ for k-hop window function with larger k.

The key idea behind $EMC$ is based on the observation that for any two nodes $u, v \in V$,
if their $m$-hop windows, $W_m(u)$ and $W_m(v)$, are highly similar 
and they are grouped into the same cluster, 
then it is likely that the $n$-hop windows of these two nodes, where $n > m$,
would also be highly similar and grouped into the same cluster.

Using the above observation, we could reduce the overhead cost for constructing a \DBIndex\ wrt a $k$-hop window 
function by clustering the nodes based on their $k'$-hop windows, where $k' < k$, instead of their $k$-hop windows.

To reduce the overhead of window computations,
our $EMC$ approach is similar to the MC approach except   
for the first round of window computations
(line 3 in Algorithm \ref{algo:k-hop-dbi}):
$EMC$ uses lower hop windows to approximate $k$-hop windows for the purpose of clustering the nodes in $V$.
Thus, the hash signatures used for partitioning $V$ are based on lower hop windows.
This approximation clearly has the advantage of improved time-efficiency as traversing and Minhashing on lower hop window is of
order of magnitude faster. For the extreme case, adapting 1-hop window of a node $v$ requires only accessing the adjacent nodes of $v$. 
The tradeoff for this improved efficiency is that the ``quality'' of the dense blocks might be reduced (in terms of their sizes).
However, our experimental results show that this reduction in quality is actually only marginal which makes this approximation a worthy tradeoff.

\subsubsection {Justification of Heuristic}
In the following, we show the theoretical justification of our heuristic: the Jaccard coefficient
is increasing wrt number of hops for a large class of graphs. We 
assume that the degree of vertices follows the same distribution, which is true in most
real-networks and random graph models\footnote{E.g. Social network, Preferential Attachment model etc. follow
power-law distribution. However, our analysis do not restrict on power-law distribution.}.  
This implies we can analyze vertices 
with their neighborhoods structure using a unified way. 

We use $d_i$ to indicate the degree of vertex $i$. For any vertex pair $(u,v)$, their 
intersection on $k$-hop window consists of three part. We name them using $A=W_k(u)-W_k(v)$, 
$B=W_k(v)-W_k(u)$ and $C_k = W_k(u) \cap W_k(v)$. Clearly the Jaccard coefficient at $k$-hops
can be expressed as follows:
\begin{equation}
	J_k(u,v) = \frac{|C_k|}{|A_k| + |B_k| + |C_k|}
\end{equation}

To deduct the relationship between $J_k(u,v)$ and $J_{k+1}(u,v)$, we prove the following lemma first:
\begin{thm}
Let $S$ be a collection of connected vertices, the number of newly discovered
vertices by one-hop expansion from $S$ is bounded by a function on $|S|$.
\end{thm}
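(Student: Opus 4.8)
The plan is to bound the number of newly discovered vertices by the size of the \emph{edge boundary} of $S$ and then invoke the common-degree-distribution assumption to convert this into a bound depending only on $|S|$. Write $N^+(S) = \left(\bigcup_{v \in S} N(v)\right) \setminus S$ for the set of newly discovered vertices, where $N(v)$ denotes the neighbor set of $v$, and let $\partial S$ denote the set of edges with exactly one endpoint in $S$. Every $w \in N^+(S)$ is by definition adjacent to at least one vertex of $S$, so each new vertex witnesses at least one boundary edge; the proof proceeds by first showing $|N^+(S)| \le |\partial S|$ and then bounding $|\partial S|$ via the degrees of $S$.

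For the first inequality I would exhibit an injection $N^+(S) \to \partial S$: to each $w \in N^+(S)$ assign the single designated edge $e_w$ joining $w$ to its smallest-indexed neighbor in $S$. Since $w \notin S$ the edge $e_w$ lies in $\partial S$, and since the external endpoint of $e_w$ is $w$ itself, distinct new vertices receive distinct edges, so the map is injective and $|N^+(S)| \le |\partial S|$. To bound $|\partial S|$ I would count degrees: each boundary edge contributes once to $\sum_{v \in S} d_v$ while each edge internal to $S$ contributes twice, giving $\sum_{v \in S} d_v = 2|E(S)| + |\partial S|$, where $E(S)$ is the set of edges with both endpoints in $S$. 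Here the connectivity hypothesis is essential: a connected set on $|S|$ vertices spans at least $|S|-1$ edges, so $|E(S)| \ge |S|-1$ and hence $|\partial S| \le \sum_{v \in S} d_v - 2(|S|-1)$, yielding the deterministic bound $|N^+(S)| \le \sum_{v \in S} d_v - 2(|S|-1)$.

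Finally, I would apply the assumption that all vertex degrees share a common distribution with mean $\bar d$. By linearity of expectation $\mathbb{E}\!\left[\sum_{v \in S} d_v\right] = |S|\,\bar d$, so the expected number of newly discovered vertices satisfies $\mathbb{E}\!\left[|N^+(S)|\right] \le |S|(\bar d - 2) + 2$, a function of $|S|$ alone, as claimed. The main obstacle I anticipate is the gap between the deterministic combinatorial bound and the distributional conclusion: the degree assumption controls the degrees only in expectation, and conditioning on $S$ being connected may bias the degrees of its members, so I would state carefully whether the lemma asserts the deterministic bound $\sum_{v \in S} d_v - 2(|S|-1)$ or its expectation $|S|(\bar d-2)+2$. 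Either reading gives a bound that is a function of $|S|$, which is all that the subsequent monotonicity argument for the Jaccard coefficient requires.
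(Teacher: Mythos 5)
Your proof is correct, but it takes a genuinely different route from the paper's. The paper argues probabilistically from start to finish: it treats the $d_i$ edges of each vertex $v_i \in S$ as randomly placed (with one edge reserved to point back into $S$), computes the hypergeometric probability $\Pr(Y_i = y \mid v_i \in S)$, derives the exact per-vertex expectation $E(Y_i \mid v_i \in S) = \frac{(|V|-|S|)(d_i-1)}{|V|-1}$, and then applies a union bound over all vertices of $S$ to conclude $E(X\mid S) \le \frac{|S|(|V|-|S|)(\overline{d}-1)}{|V|-1}$. You instead establish a purely deterministic combinatorial bound first --- the injection $N^+(S) \to \partial S$, the handshake identity $\sum_{v\in S} d_v = 2|E(S)| + |\partial S|$, and the spanning-tree bound $|E(S)| \ge |S|-1$ --- and invoke the degree-distribution assumption only at the last step to replace $\sum_{v\in S} d_v$ by $|S|\overline{d}$. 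Your approach buys two things: the core inequality $|N^+(S)| \le \sum_{v\in S} d_v - 2(|S|-1)$ holds for \emph{every} graph with no random-graph model at all, and it uses connectivity more strongly than the paper does (the paper only reserves one back-edge per vertex, whereas you account for all $|S|-1$ internal spanning edges); consequently, when $|S|$ is small relative to $|V|$ and $|S|\ge 2$, your expected bound $|S|(\overline{d}-2)+2$ is slightly tighter than the paper's $\approx |S|(\overline{d}-1)$. What the paper's route buys is the saturation factor $(|V|-|S|)/(|V|-1)$: as $S$ approaches $V$ its bound correctly shrinks toward zero, which your boundary-edge count misses unless you cap it at $|V|-|S|$ by hand. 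Your closing caveat --- deterministic bound versus expectation, and possible degree bias from conditioning on connectivity --- is well taken but harmless here: the paper itself only proves the expectation version (it substitutes $\overline{d}$ for $d_i$ in exactly the same way), and either reading yields a function of $|S|$ modulo the graph constants $\overline{d}$ and $|V|$, which is all the downstream Jaccard-monotonicity argument needs.
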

\begin{proof}
Consider a random variable $Y_i$ indicate the newly 
discovered vertices from one-hop expansion from vertex $i$. 
Then the probability of $|Y_i|=y$ is can be analyzed as follows: there 
are $d_i$ edges for vertex $i$. Since $|Y_i|$ is connected with $S$, one edge
is fixed to link with a vertex in $S$. There are remaining $d_i-1$ edges with
$y$ edges linked to the new vertices. In total, there are $|V|-1 \choose d_i -1$
combinations with $d_i$ edges. Therefore, the probability can be written as:
\begin{equation}
Prob(Y_i = y| v_i \in S) = \frac{{|S| -1 \choose d_i - y -1}{|V|-|S| \choose y}}{{|V|-1 \choose d_i -1}}
\end{equation}
Thus, the expectation of $Y_i$ is:
\begin{equation}
\begin{split}
E(Y_i|v_i \in S) & = \Sigma( y * Prob(Y_i = y| v_i \in S) )\\
	& = \Sigma_{y=1}^{y=d_i -1} ( \frac{{|S| -1 \choose d_i - y -1}{|V|-|S| \choose y}}{{|V|-1 \choose d_i -1}} * y )\\
	& = \Sigma_{y=1}^{y=d_i -1} (\frac{{|S| -1 \choose d_i - y -1}{|V|-|S| -1 \choose y - 1}}{{|V|-1 \choose d_i -1}} * (|V|-|S|))\\
	& = (|V|-|S|) * \Sigma_{y=1}^{y=d_i -1}\frac{{|S| -1 \choose d_i - y -1}{|V|-|S| -1 \choose y - 1}}{{|V|-1 \choose d_i -1}} \\
	& = (|V|-|S|) * \frac{{|V|-2 \choose d_i - 2}}{{|V|-1 \choose d_i -1}} = \frac{(|V|-|S|)*(d_i-1)}{|V| - 1}
\end{split}
\end{equation}
Taking the expectation over all vertices in $S$, we can find the expectation of $E(Y|S) = \frac{(|V|-|S|)*(\overline{d}-1)}{|V| - 1}$, 
where $\overline{d}$ is the average degree of the graph. We then define the event $X=\cup_{i=1}^{i=|S|} Y_i$, i.e. $X$ is the number
of newly discovered vertices for one-hop expansion of entire $S$. By union bound, 
the expectation of $X$ is:
\begin{equation} 
\begin{split}
E(X|S) &= E(\cup_{i=1}^{i=|S|}Y_i|S) \\
& \leq \Sigma_{i=1}^{i=|S|}E(Y_i|S) \\ 
&= \frac{|S|(|V|-|S|)(\overline{d} -1)}{|V|-1} = f(|S|)
\end{split}
\end{equation}
The bound is achieved when each vertices in $S$ discovers non-overlapping neighbors, such
as in the case of tree structure. Therefore, the newly discovered vertices are tightly bounded 
by a quadratic function on $|S|$.
\qed
\end{proof}

We thus use $f(m)$ to denote the number of newly discovered vertices
from a base set of $m$ connected vertices. Since $u,v$ have 
identical degree distribution, their expected value $S_u=E(|W_k(u)|)$
and $S_v=E(|W_k(v)|)$ is likely to be the same, i.e. $S_u \cong S_v$.
We further use $\alpha = \frac{|C|}{|A|+|C|}$ to denote the portion of shared
components in $u$'s $k$-hop neighborhood. Likewise, we use $\beta = \frac{|B|}{|B|+|C|}$ for 
$W_k(v)$. Since vertices have identical degree distribution, $\alpha$ and $\beta$ are likely
to be the same, i.e. $\alpha \cong \beta$.
Now, the $J_{k+1}(u,v)$ for ($k+1$)-hop can be represented as follows:
\begin{equation}
\begin{split}
J_{k+1}(u,v) & = \frac{|C_{k+1}|}{|A_{k+1}| + |B_{k+1}| + |C_{k+1}|} \\
	& = \frac{|C_k| + \alpha * f(S_u) + \beta * f(S_v)}{|A_k| +|B_k| +|C_k| +f(S_u) +  f(S_v) -\Delta} 
\end{split}
\end{equation}
, the $\Delta$ here is to compensate the doubly counted portion on the
overlapping: $(A_{k+1} \cup B_{k+1}) \cap C_{k+1}$. Therefore, it subsequently follows:
\begin{equation}
\begin{split}
J_{k+1}(u,v) & \geq \frac{|C_k| + \alpha * f(S_u) + \beta * f(S_v)}{|A_k| +|B_k| +|C_k| +f(S_u) +  f(S_v)}  \\
		& \cong \frac{|C_k| + 2\alpha * f(S_u)}{|A_k|+|B_k|+|C_k| + 2 * f(S_u)}
\end{split}
\end{equation}
Due to the fact that $\alpha = \frac{|C_k|}{|C_k|+|A_k|} \geq \frac{|C_k|}{|A_k|+|B_k|+|C_k|}$, it follows:
\begin{equation}
\begin{split}
	\frac{\alpha f(S_u)}{f(S_u)} & \geq \frac{|C_k|}{|A_k|+|B_k|+|C_k|} \Leftrightarrow \\
	\frac{|C_k| + 2\alpha * f(S_u)}{|A_k|+|B_k|+|C_k| + 2 * f(S_u)} & \geq \frac{|C_k|}{|A_k|+|B_k|+|C_k|} \\
	& = J_k(u,v)
\end{split}
\end{equation}

Therefore, our analysis shows that $J_k(u,v)$ is most likely increasing for random graphs with identical
degree distribution.

\eat{In the following, we justify the soundness of the approximation technique in $EMC$.
Consider the Jaccard coefficient for two $k$-hop windows wrt nodes $u$ and $v$, denoted by $J_k(u,v)$.
For convenience, let $A$, $B$, and $C$, denote 
the sets $W(u)$-$W(v)$, 
$W(v)$-$W(u)$, and 
$W(u) \cap W(v)$, respectively. 
Thus, $J_k(u,v)$ can be rewritten as $\frac{|C|}{|A|+|B|+|C|}$, where $|\centerdot|$ denotes the cardinality of a set. 
Since a \emph{(k+1)-hop} window can be viewed as a \emph{1-hop} 
expansion from a \emph{k-hop} window, the $J_{k+1}(u,v)$ can be estimated by:
\begin{equation} \label{eq:jacc-esimation}
\begin{split}
J_{k+1}(u,v) & = \frac{\alpha |C| + \Delta}{\beta |A|+ \beta |B|+\alpha |C| - \Delta} \\
\end{split}
\end{equation}  
Here, $\alpha$ denotes the expansion factor of $C$,
and $\beta$ denotes the expansion factor of $A$ and $B$. 
The expansion factor measures the additional number of nodes that are added to a set based on the
\emph{1-hop} expansion of that set. 
$\Delta$ denotes the additional nodes that are common in the the expanded sets $A$ and $B$ from  their \emph{1-hop} expansions. 
On average, both $\alpha$ and $\beta$ should be close to the average degree of the graph. 
Thus, this shows that $J_{k+1}(u,v) > J_{k}(u,v)$. 
In other words, if two nodes are grouped into the same cluster wrt to their k-hop windows,
then these two nodes are likely to be also grouped into the same cluster as the value of $k$ increases. 
}

\subsection{Handling Updates}

In this section, we overview how our DBIndex is maintained when there are updates to the input graph.
There are two types of updates for graph data: updates to the attribute values associated with the nodes/edges and updates to the graph structure 
(e.g., addition/removal of nodes/edges).
Since the DBIndex is an index on the graph structure which is independent of the attribute values in the graph,
the DBIndex is not affected by updates to the graph's attribute values.

The efficient maintenance of the DBIndex in the presence of structural updates is challenging as a single structural change (e.g., adding an edge) could affect many vertex windows.  To balance the tradeoff between efficiency of index update and efficiency of query processing, 
we have adopted a two-phase approach to maintain the DBIndex.
The first phase is designed to optimize update efficiency where the DBIndex is updated incrementally whenever there are structural updates to the graph.
The incremental index update ensures that the updated index functions correctly but does not fully optimize the query efficiency of the updated index
in terms of maximizing the shared computations.
The second phase is designed to optimize query efficiency where the DBIndex is periodically re-organized to maximize share computations.

As an example of how the DBIndex is updated incrementally, consider a structural change where a new edge is added to the input graph.
Let $S$ denote the subset of graph vertices whose windows have expanded (with additional vertices) as a result of the insertion of the new edge.
Let $W'(v)$ denote the set of additional vertices in the vertex window of $v$ for each vertex $v \in S$.
Based on the identified changes to the vertex windows (i.e., $S$ and $\{W'(v) |\ v \in S\}$), 
we construct a secondary DBIndex which is then merged into the primary DBIndex.
As the identified changes are small relative to the entire graph and collection of vertex windows,
the construction and merging of the secondary index can be processed efficiently relative to an index reorganization to fully optimize query efficiency.

\eat{To reduce the overhead of window computations,
our $EMC$ approach is equivalent to the MC approach except that  
for the first round of window computations
(line 3 in Algorithm \ref{algo:k-hop-dbi}),
$EMC$ uses 1-hop windows to approximate $k$-hop windows for the purpose of clustering the nodes in $V$.
Thus, the hash signatures used for partitioning $V$ are based on 1-hop windows.
This approximation clearly has the advantage of improved time-efficicency as computing the
1-hop window of a node $v$ requires only accessing the adjacent nodes of $v$. 
The tradeoff for this improvement efficiency is that the ``quality'' of the dense blocks might be reduced (in terms of their sizes).
However, our experimental results show that this reduction in quality is actually only marginal which makes this approximation a good tradeoff.}

\eat{
To formally describe our approximation, we denote $J_k(u,v)$ as the Jaccard coefficient for \emph{k-hop} window on $u,v$. We use $A,B,C$, to represent the set of $f(u)$-$f(v)$, $f(v)$-$f(u)$, and $f(u) \cap f(v)$ respectively. $J_k(u,v)$ can then be represented as $\frac{|C|}{|A|+|B|+|C|}$, where $|\centerdot|$ denotes the cardinality. Since the \emph{(k+1)-hop} window can be viewed as \emph{1-hop} BFS expansion from \emph{k-hop} window, the $J_{k+1}(u,v)$ can thus be averagely estimated as:
\begin{equation} \label{eq:jacc-esimation}
\begin{split}
J_{k+1}(u,v) & = \frac{\alpha |C| + \Delta}{\beta |A|+ \beta |B|+\alpha |C| - \Delta} \\
\end{split}
\end{equation}  
The $\alpha$ and $\beta$ are the expansion factor of $C$ and $A,B$ respectively. The expansion factor measures how much more new elements are joined into a set based on \emph{1-hop} expansion of original set. 
The $\Delta$ is the increased intersection from \emph{1-hop} expansion of $A$ and $B$. On average, both $\alpha$ and $\beta$ are close to $d$, which is the average degree of the graph. This shows that $J_{k+1}(u,v) > J_{k}(u,v)$. In other words, if the windows are two vertices are clustered together in a lower k, they will be clustered together when k increases.

Using the aforementioned feature, we can conservatively estimate the dense block clustering of $k-hop$ window from \emph{j-hop} window where $j < k$. An extreme case is to use the $1-hop$ result to derive the $k-hop$. The advantage of using the $1-hop$ is that it is efficient to collect the window by the adjacent list and incurs less overhead to generate the signature. In this case, compared with the MC approach, only one k-hop KBBFS is needed which will highly improve the index construction time. The disadvantage of this approach is that some larger dense blocks may be lost. However, through our experiments, this loss of performance is marginal and acceptable in most of the cases. 

As a high-level overview of the $EMC$ algorithm, it is a two-step approach to construct the $k-hop$ window index. First, it loads a smaller \emph{hop} window to supply \emph{MinHash} clustering. Second, it performs Algo.~\ref{algo:k-hop-dbi} to create \emph{DBI}. For the space limitation, we omit the detail discussion here. 
}

\begin{figure}[t]
\centering
\includegraphics[width=60mm,height=30mm]{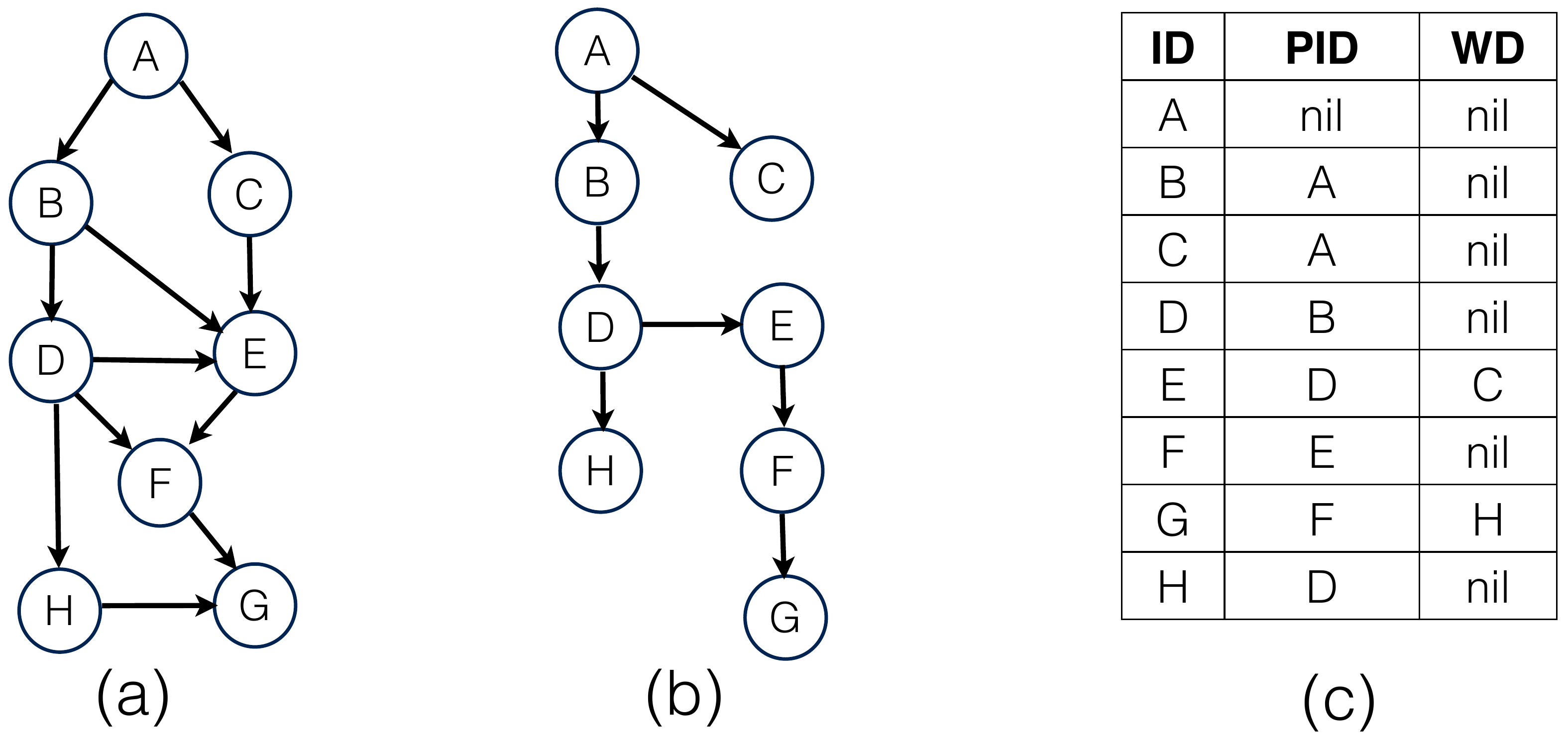}
	\caption{I-Index Construction over the Pathway DAG in Fig.~\ref{fig:topological}. (a) shows the DAG structure; (b) provides the inheritance relationship discovered during the index construction; (c) shows the final I-Index.}
	\label{fig:diff-index}
\end{figure}

\section{Inheritance Index}
\textit{DBIndex} is a general index that can support both k-hop as well as
topological window queries. 
However, the evaluation of a topological window function, $W_t$, 
can be further optimized due to its containment feature. 
In other words, the window of a descendant vertex 
completely covers that of one of its ancestors. 
This feature can be formally formulated in the following theorem. 

\begin{thm}
\label{thm:containment}
In a DAG, if vertex u is the ancestor of vertex v, the topological window of $v$, $W_t(v)$ completely contains the window of $u$, $W_t(u)$, i.e., $W_t(u) \subset W_t(v)$.   
\end{thm}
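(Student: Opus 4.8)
The plan is to prove the containment directly from the definition of the topological window as an ancestor (reachability) set, exploiting the transitivity of directed reachability in the DAG. Recall that $w \in W_t(x)$ holds precisely when there is a directed path from $w$ to $x$, where $x$ itself is included via the trivial zero-length path (consistent with the running example, in which $E \in W_t(E)$ and $H \in W_t(H)$). I would pin down this convention at the very start so that the inclusion chain and the properness argument below are both stated consistently.

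First I would fix an arbitrary vertex $w \in W_t(u)$ and unfold the definition: there exists a directed path $P_1$ from $w$ to $u$. Since $u$ is an ancestor of $v$ by hypothesis, there is also a directed path $P_2$ from $u$ to $v$. Concatenating $P_1$ and $P_2$ at their common endpoint $u$ yields a directed path from $w$ to $v$, so $w \in W_t(v)$. Because $w$ was arbitrary, this establishes the inclusion $W_t(u) \subseteq W_t(v)$. This transitivity-of-reachability step is the core of the argument.

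To upgrade this to the proper containment $W_t(u) \subsetneq W_t(v)$, I would exhibit a witness in $W_t(v) \setminus W_t(u)$. The natural candidate is $v$ itself: we have $v \in W_t(v)$ by the trivial path, whereas $v \notin W_t(u)$, since a directed path from $v$ to $u$ combined with the existing directed path from $u$ to $v$ would form a directed cycle, contradicting the acyclicity of $G$. Hence $v$ separates the two windows and the containment is strict.

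I do not anticipate any serious obstacle, as the argument reduces to transitivity of reachability plus a one-line acyclicity check. The only point that genuinely requires care is the self-membership convention $x \in W_t(x)$: it is what makes the containment proper and what the witness $v$ relies on, so I would state it explicitly rather than leave it implicit in the definition.
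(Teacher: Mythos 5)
Your proof is correct and its core coincides with the paper's own argument: transitivity of reachability ($w \leadsto u$ and $u \leadsto v$ yield $w \leadsto v$) gives $W_t(u) \subseteq W_t(v)$. The only difference is that you go further: the paper's proof stops at the non-strict inclusion, whereas your witness argument ($v \in W_t(v)$ by the self-membership convention, while $v \notin W_t(u)$ since otherwise $v \leadsto u \leadsto v$ would form a directed cycle) actually establishes the strict containment $W_t(u) \subset W_t(v)$ asserted in the statement, a point the paper's proof leaves unaddressed.
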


\begin{proof}
In a DAG, if $u$ is the ancestor of $v$, then $u \leadsto v$. $\forall w \in W_t(u)$, then $w \leadsto u$. As $u \leadsto v$, then $w \leadsto v$. Thus, $w \in W_t(v)$ and the theorem is proved.   
\end{proof}

Let us consider the BioPathway graph in Fig.~\ref{fig:topological} as
an example.
Fig.~\ref{fig:diff-index} (a) shows its abstract DAG. In (a), 
D is the ancestor of E. In addition, we can see that the window of $D$, 
$W_t(D)$ is $\{A, B, D\}$ and the window of $E$, $W_t(E)$ is $\{A, B, D, C, E\}$. It is easy to see that $W_t(D) \subset W_t(E)$. 

Now, Theorem~\ref{thm:containment} provides us with opportunities for optimizing
the space and computation of topological window queries. 
First, since the set of vertices corresponding to the
window of a node, say $u$, is a superset of the set of vertices of its
parent node, say $v$, there is no need to maintain the full set of 
vertices of the window at $u$. Instead, 
we only need to maintain the difference between 
$W_t(u)$ and $W_t(v)$. We note that in a DAG, it is possible for $u$ to have
multiple parents, $v_1, \cdots, v_k$. In this case, the parent
which has the smallest difference with $u$ can be used; where there
is a tie, it is arbitrarily broken.
We refer to this parent as the {\em closest} parent. For instance, in Fig.~\ref{fig:diff-index} (a), instead of maintaining 
$\{A, B, D, C\}$ for $W_t(E)$, it can simply maintain the difference 
to $W_t(D)$ which is $\{C\}$. This is clearly more space efficient.

Second, using a similar logic, the aggregate computation
at a node $u$ can actually reuse the 
aggregate result of its closest parent, $v$.
Referring to our example, the aggregation result of $W_t(D)$ can be 
simply passed or inherited to $W_t(E)$ and further aggregated with the difference 
set ($\{C\}$) in $W_t(E)$ to generate the aggregate value for $W_t(E)$. 
Fig.~\ref{fig:diff-index} (b) indicates the inheritance relationship that 
the values of the father can be inherited to the child in the tree. 

Thus, we propose a new structure, called the \textbf{inheritance index}, 
$I$-$Index$, to support efficient processing of topological window queries. 
In $I$-$Index$, each vertex $v$ maintains two information. 
\begin{itemize}
\item The first information is the ID of the closest parent (say $u$) 
of $v$. We denote this as PID($v$).
\item The second information is the difference between 
$W_t(v)$ and $W_t(u)$. We denote this as WD($v$). 
\end{itemize}

With PID($v$), we can retrieve $W_t(u)$, and combining with 
WD($v$), we can derive $W_t(v)$. 
Likewise, we can retrieve the aggregation result of $u$ 
which can be reused to compute $v$'s aggregation result.

Fig.~\ref{fig:diff-index}(c) shows the I-index of our example
in Fig.~\ref{fig:diff-index}(a). In the figure, I-Index
is represented in a table format; 
the second column is the PID and the third indicates the WD. 

\subsection{Index Construction} 

Building an I-Index for a DAG
can be done efficiently. 
This is because the containment relationship can be easily 
discovered using a topological scan.
Algorithm~\ref{algo:piconstruction} lists the pseudo code for 
index creation. 
The scheme iterates through all the vertices in a topological order.
For vertex $v$, the processing involves two steps.
In the first step, we determine the closest parent
of $v$. This is done by comparing the cardinalities of 
the windows of $v$'s parents, and 
find the parent with largest value. 
The corresponding PID is recorded in the \emph{PID} field of 
I-Index (Lines~\ref{algo:tp-step1-start}-\ref{algo:tp-step1-stop}). 
In the second step, the window of $v$, $W_t(v)$, is pushed to 
its children 
(Lines~\ref{algo:tp-step2-start}-\ref{algo:tp-step2-stop}). 
When the processing of $v$ finishes, its window can be discarded. This
frees up the memory space, which makes the scheme memory efficient.

We not that the complexity of Algorithm~\ref{algo:piconstruction} is non-trivial to analyze. 
This dues to the difficulty of analyzing of the number of ancestors of each vertex. Suppose the 
average number of ancestors for each vertex is $H$, then Algorithm~\ref{algo:piconstruction} is of
complexity $O(H|V|*d)$, where $d$ is the average degree of the graph. This complexity is close to the
output complexity. That is to gather the all vertex-window mapping, at least $O(H|V|)$ elements needs
to be outputted. Thus the indexing time complexity is reasonably efficient.

We further note that the size of \emph{I-Index} is hard to be precisely evaluated. 
This dues to the difficulty of analyzing the window difference. Assume the average size of window difference
is $D$, then the size of \emph{I-Index} is $O(D|V|)$. Although $D$ can be as large $O(|V|)$, our 
experimental results indicate that the index size is always 
comparable to the graph size. We defer this discussion to 
section \ref{sec:experiments}. Furthermore, it is possible to
reduce the index size (should it be a concern) by employing
compression techniques. 

\begin{algorithm}[h]
\label{alg:piconstruction}
\caption{CreateI-Index}
\begin{algorithmic}[1]
\REQUIRE Input graph: $G$ 
\ENSURE Inheritance Index: $IIndex$ 
\STATE $IIndex \leftarrow ()$
\STATE $p \leftarrow ()$ \COMMENT{stores the window for each vertex}
\STATE $c \leftarrow ()$ \COMMENT{stores the cardinality of window for each vertex}
\FORALL{$v \in$ topological order}
\STATE $WD \leftarrow -\infty$ \COMMENT{the window difference}
\STATE $bestu \leftarrow nil$
\FORALL{$u \in v.parent$} \label{algo:tp-step1-start}
	\IF{$c[u] > diff$} 
		\STATE $diff \leftarrow c[u]$
		\STATE $bestu \leftarrow u$
	\ENDIF
\ENDFOR \label{algo:tp-step1-stop}
\STATE $IIndex[v].WD \leftarrow WD$
\STATE $IIndex[v].PID \leftarrow bestu$
\STATE $p[v] \leftarrow p[v] \cup v$
\FORALL{$u \in v.child$} \label{algo:tp-step2-start}
\STATE $p[u] \leftarrow p[u] \cup p[v]$
\ENDFOR \label{algo:tp-step2-stop}
\STATE $c[v] \leftarrow |p[v]|$ \COMMENT{update window cardinality}
\STATE $p[v] \leftarrow ()$ \COMMENT{release memory}
\ENDFOR
\end{algorithmic}
\label{algo:piconstruction}
\end{algorithm}


\subsection{Query Processing using I-Index}

By employing the I-Index, window aggregation can be processed efficiently 
for each vertex according to 
the topological order. Algorithm \ref{alg:dstw-index} 
provides the pseudo code for the query processing. 
Each vertex $v$'s window aggregation value can be calculated 
by using the following formula: 
\begin{equation}
\Sigma (W_t(v)) = \Sigma (W_t(v.PID),\Sigma(v.WD))
\end{equation}
where $\Sigma$ is the aggregate function. As the vertex is 
processed according to the topological order, $W_t(v.PID)$ 
would have already been calculated while processing $v$'s parent and 
thus can be directly used for $v$ without any recompution. 
In general, $v$'s window aggregation is achieved by utilizing its 
parent's aggregate value and window difference sets. 
This avoids repeated aggregate computation and achieves 
the goal of computation sharing between a vertex and its parent. 
In so doing, the computation overhead can be further reduced. 
Take the index  provided in Fig.~\ref{fig:diff-index} (c) as an example, 
assume the query wants to calculate the sum value over each window 
for every vertex. As a comparison, the number of add operations 
are 33, 22, 16 for the cases without any index, 
with DBIndex and with I-Index index respectively.

\begin{algorithm}
\caption{QueryProcessingOverIIndex}
\begin{algorithmic}[1]
\REQUIRE Input graph $G$, aggregate function $\Sigma$, inheritance index $IIndex$ 
\ENSURE $w$ \COMMENT{The aggregation result of each vertex}
\STATE $w \leftarrow ()$
\FORALL{$v \in$ topological order} \label{code:dstw-index-tp1}
\STATE $u \leftarrow IIndex[v].PID$
\STATE $WD \leftarrow IIndex[w].WD$
\STATE $S \leftarrow v.val$
\STATE $S \leftarrow \Sigma(S, w[u])$
\FORALL{$t \in WD$ } \label{code:dstw-index-tp2}
\STATE $S\leftarrow \Sigma(S, t.val)$
\ENDFOR
\STATE $w[v] \leftarrow S$
\ENDFOR
\RETURN $w$
\end{algorithmic}
\label{alg:dstw-index}
\end{algorithm}

As the query processing in Algorithm~\ref{alg:dstw-index} basically scans the \emph{I-Index}, the query
complexity essentially correlates to the index size. As we shown in the experiment session, the query can be performed efficiently 
in various graph conditions. We defer the discussion to Section 6.

\subsection{Handling Updates}
We cover the updates handling in this section. For attribute updates, \emph{I-Index} is not affected since \emph{I-Index} is only structure related. Structure updates on \emph{I-Index} consists of node updates and edge updates. It is easy to handle the case where an isolated node is added or delete, since it does not affect any other nodes in the graph. Adding (resp. deleting) a node with edges can be done via edge insertions (resp. deletions). Here we focus on describing single edge insertion and deletion. We use $I$ to denote the \emph{I-index} and $I(v)$ to denote the index entry of $v$. 

During an update of edge $e(s,t)$, there are two types of vertices are affected. The first type contains single node $t$, which is the endpoint of edge $e$. The second type of nodes contains all the descendants of $t$. There are altogether four special cases needs to be consider during the updates. We illustrate the four cases with aids of Fig.~\ref{fig:dag_update}. In Fig.~\ref{fig:dag_update}, the dashed edge $e(s,t)$ is the edge to be updated (added or deleted). The node $u$ is the \emph{lowest common ancestor}(LCA) of $t$ and $s$. The cloud shape $A$ and $B$ are nodes in between of $u,s$ and $u,t$. Since $u$ is the \emph{LCA}, $A$ and $B$ are thus disjoint. Bold edge $e(c,d)$ indicate that $I(d).PID$ is $c$. We distinguish and handle the four cases as follows:

\textbf{Case I ($I(t).PID = s$)}: As shown in Fig.~\ref{fig:dag_update} (a), during deletion, $t$ needs to choose a parent from $A$ to be its $I(t).PID$. $I(t).WD$ needs to be updated accordingly. In this case, no insertion needs to be considered since if there is no edge between $s$ and $t$, $I(t).PID$ cannot be $s$. 

\textbf{Case II ($I(t).PID \neq s$)}: As shown in Fig.~\ref{fig:dag_update} (b), during insertion, $B$ needs to be excluded from $I(t).WD$. During deletion, any node in $B$ that cannot reach $t$ needs to be included in $I(t).WD$. Since $A$ and $B$ are disjoint, every node in $B$ needs to be removed from $I(t).WD$

\textbf{Case III ($t \leadsto I(v).PID$)}: As shown in Fig.~\ref{fig:dag_update} (c), during insertion, any node in $B$ needs to be removed from $I(v).WD$. During deletion, any node in $B$ that reaches $v$ but cannot reach $a$ needs to be added to $I(v).WD$.

\textbf{Case IV ($t \nrightarrow I(v).PID$)}: As shown in Fig.~\ref{fig:dag_update} (d), during insertion, any node in $B$ that cannot reach $r$ needs to be included into $I(v).WD$. During deletion, any node in $B$ that cannot reach $v$ needs to be excluded from $I(v).WD$.

During structure updates, essential operations are computing $A,B$ and performing reachability queries. Computing $A,B$ can be done via existing techniques such as \cite{bender2005lowest,czumaj2007faster} while reachability query can be supported by indexing methods such as \cite{yildirim2013dagger}. We defer exploring for more efficient updating algorithms to future work

\begin{figure*}[t]
\centering
\begin{subfigure}{0.20\linewidth}
\centering
  \includegraphics[width=\textwidth]{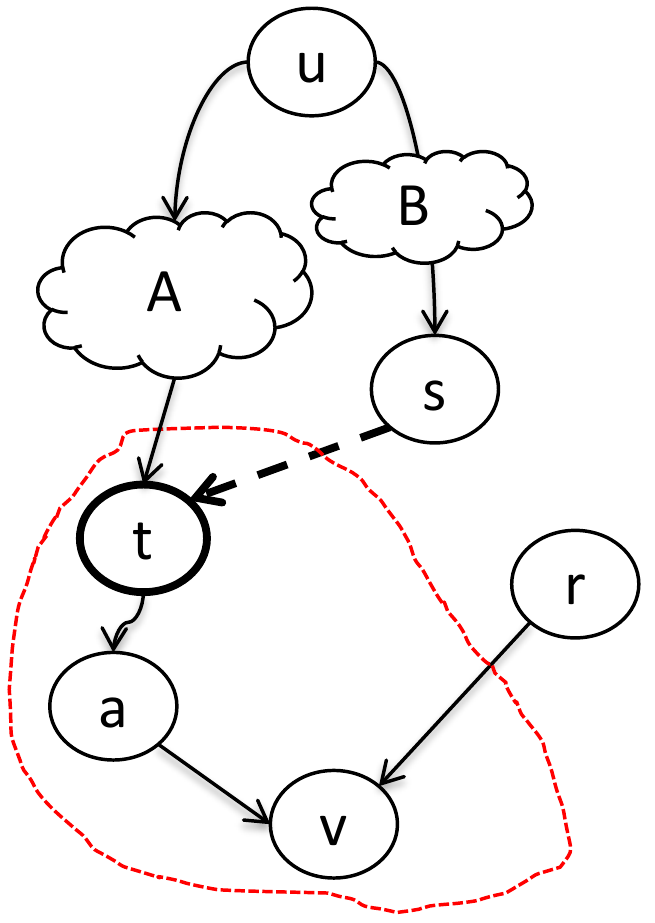}
  \caption{Edge Update Case 1}
\end{subfigure}%
\begin{subfigure}{0.20\linewidth}
\centering
  \includegraphics[width=\textwidth]{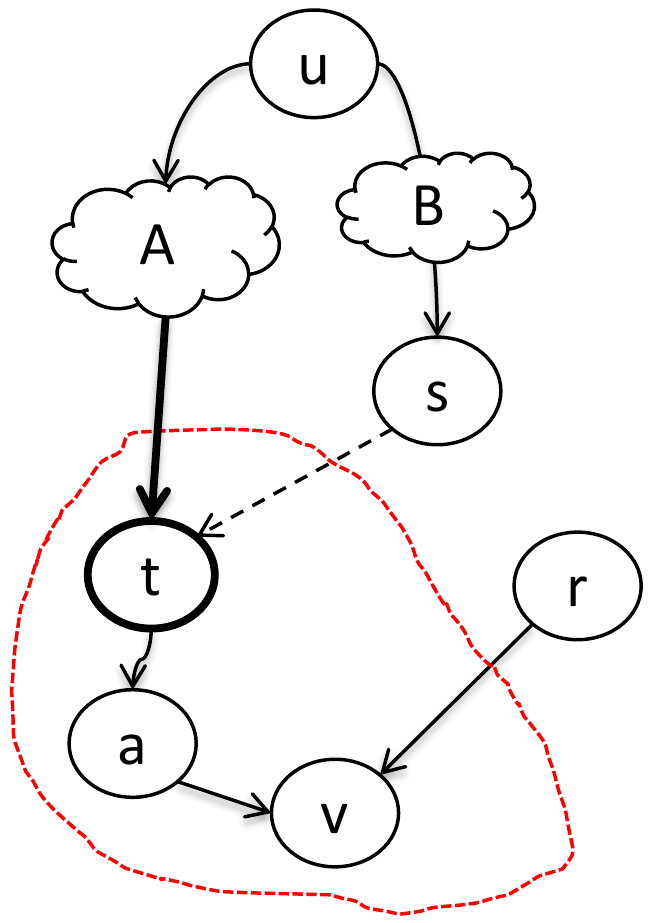}
  \caption{Edge Update Case 2}
\end{subfigure}
\begin{subfigure}{0.19\linewidth}
\centering
  \includegraphics[width=\textwidth]{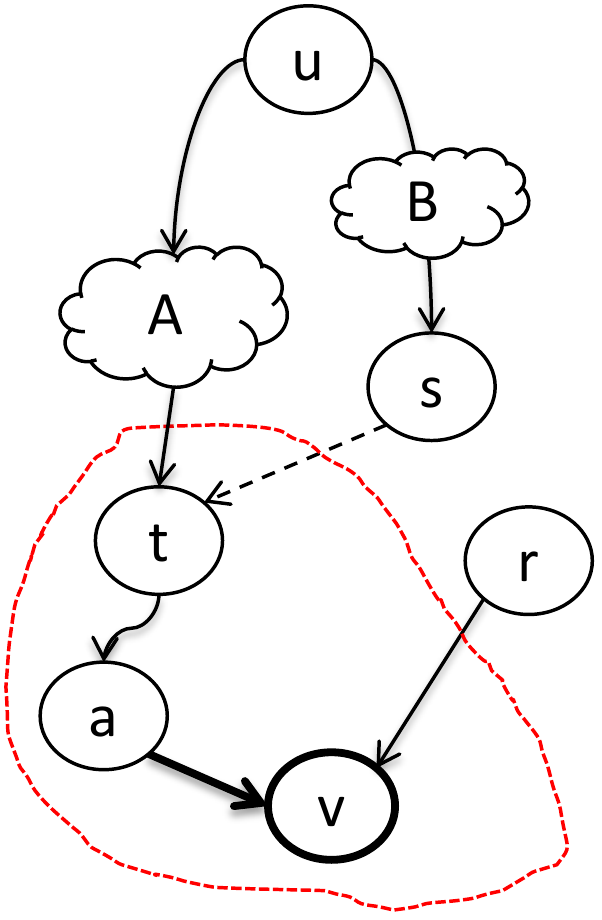}
  \caption{Edge Update Case 3}
\end{subfigure}
\begin{subfigure}{0.20\linewidth}
\centering
  \includegraphics[width=\textwidth]{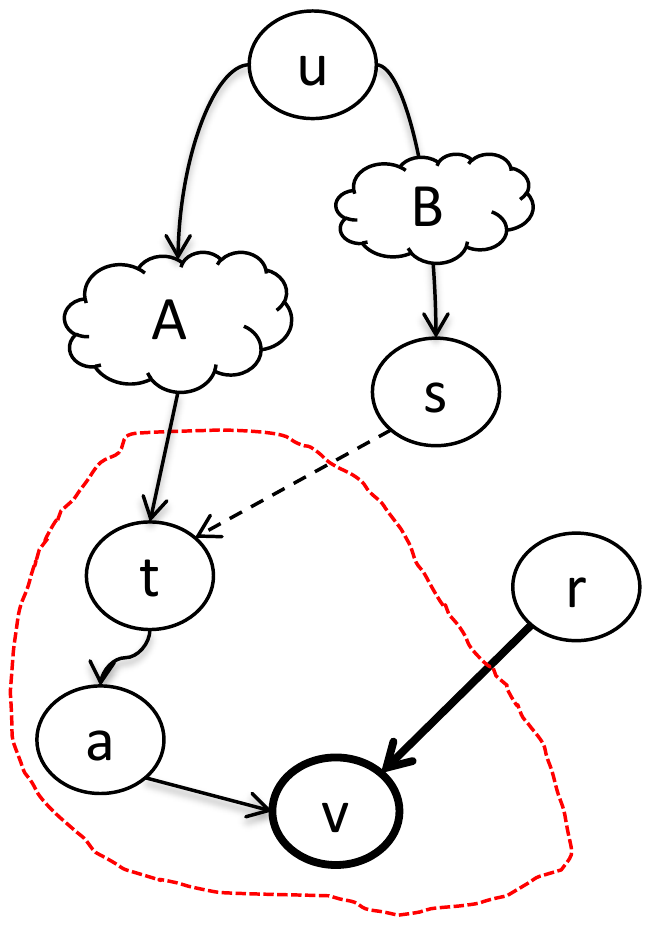}
  \caption{Edge Update Case 4}
\end{subfigure}
\caption{Updates on \emph{I-Index}. Cloud shape indicate the nodes in the subgraph between the endpoint nodes. The dashed circle indicate the affected range of updates. The bold arrow indicates the $PID$ field \emph{I-Index}}
\label{fig:dag_update}
\end{figure*}

\section{Experimental Evaluation}\label{sec:experiments}
In this section, we present a comprehensive experimental evaluation 
of our solutions using several real-world information networks and 
various synthetic datasets. Since the focus of this paper is on query processing efficiency, 
we do not evaluate the efficiency of index updates. 
All experiments are conducted on an 
Amazon EC2 r3.2xlarge machine\footnote{http://aws.amazon.com/ec2/pricing/}, 
with an 8-core 2.5GHz CPU, 60GB memory and 320GB hard drive 
running with 64-bit Ubuntu 12.04. As the source code of EAGR is not
available, we implemented it and used it as a reference in our comparative
study.  
All algorithms are implemented in Java and run under JRE 1.6.

\begin{table}[h]
\begin{tabular}{|l|l|l|l|}
\hline 
\rule[-1ex]{0pt}{2.5ex} Name & Type & \# of Vertices & \# of Edges \\ 
\hline 
\rule[-1ex]{0pt}{2.5ex} LiveJournal1 & undirected & 3,997,962 & 34,681,189 \\ 
\hline 
\rule[-1ex]{0pt}{2.5ex} Pokec & directed & 1,632,803 & 30,622,564 \\ 
\hline 
\rule[-1ex]{0pt}{2.5ex} Orkut & undirected & 3,072,441 & 117,185,083 \\ 
\hline 
\rule[-1ex]{0pt}{2.5ex} DBLP & undirected & 317,080 & 1,049,866 \\ 
\hline 
\rule[-1ex]{0pt}{2.5ex} YouTube & undirected & 1,134,890 & 2,987,624 \\ 
\hline 
\rule[-1ex]{0pt}{2.5ex} Google & directed & 875,713 & 5,105,039 \\ 
\hline 
\rule[-1ex]{0pt}{2.5ex} Amazon & undirected & 334,863 & 925,872 \\ 
\hline 
\rule[-1ex]{0pt}{2.5ex} Stanford-web & directed & 281,903 &  2,312,497 \\ 
\hline 
\end{tabular}
\caption{Large Scale Real Data}
\label{tab:realdata}
\end{table}

\textbf{Datasets.} For real datasets, we use 8 information networks 
which are available at the Stanford \emph{SNAP} 
website \footnote{http://snap.stanford.edu/snap/index.html}: 
LiveJournal1, Pokec, Orkut, DBLP, YouTube, Google, Amazon and Stanford-web. 
The detail description of these datasets is provided in 
Table~\ref{tab:realdata}. 

For synthetic datasets, we use two widely used graph data generators. 
We use the \emph{DAGGER} generator \cite{yildirim2013dagger} to generate 
all the synthetic DAGs and the SNAP graph data generator at the
Stanford SNAP website to generate non-DAG datasets. For each dataset, each vertex is associated with an integer attribute.

\textbf{Query.} In all the experiments, the window query is conducted 
by using the SUM() as the aggregate function over the integer attribute in each dataset. 

\begin{figure*}[t]
\centering
\begin{subfigure}{0.23\textwidth}
  \includegraphics[width=\textwidth]{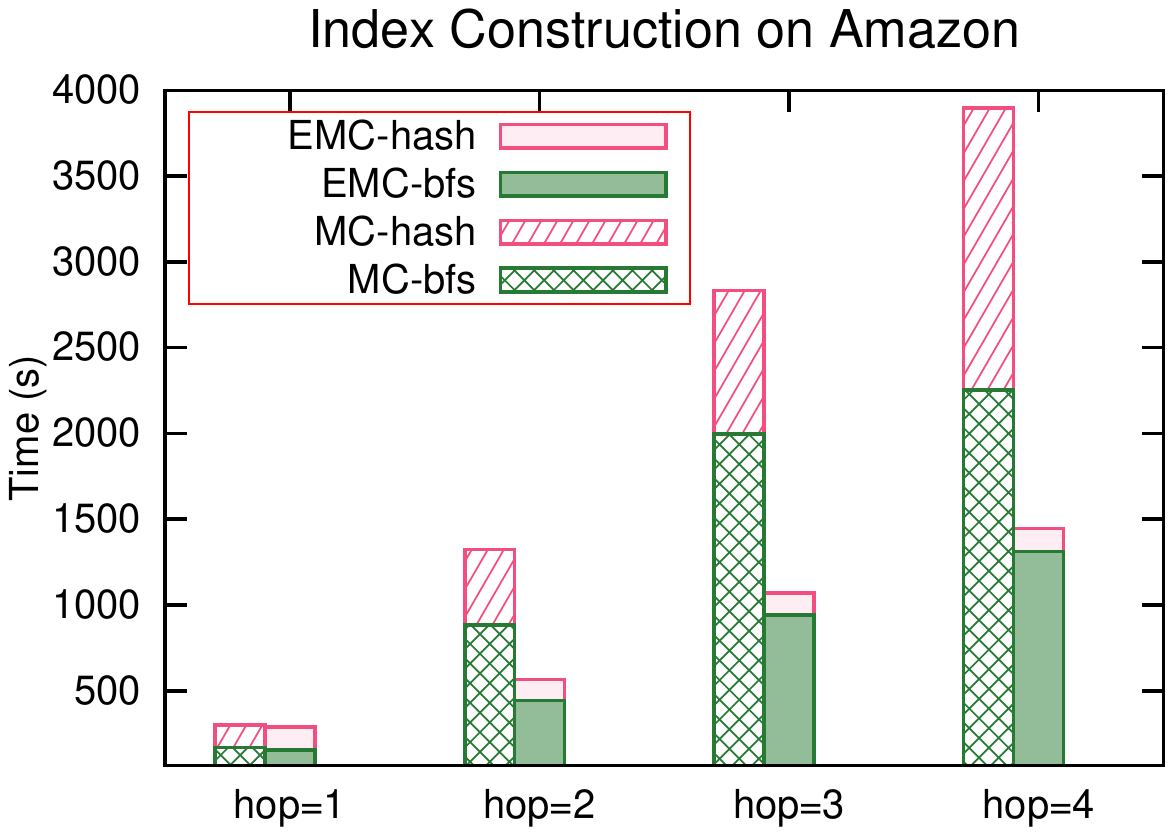}
  \caption{Index Built on Amazon}
\end{subfigure}
\begin{subfigure}{0.23\textwidth}
  \includegraphics[width=\textwidth]{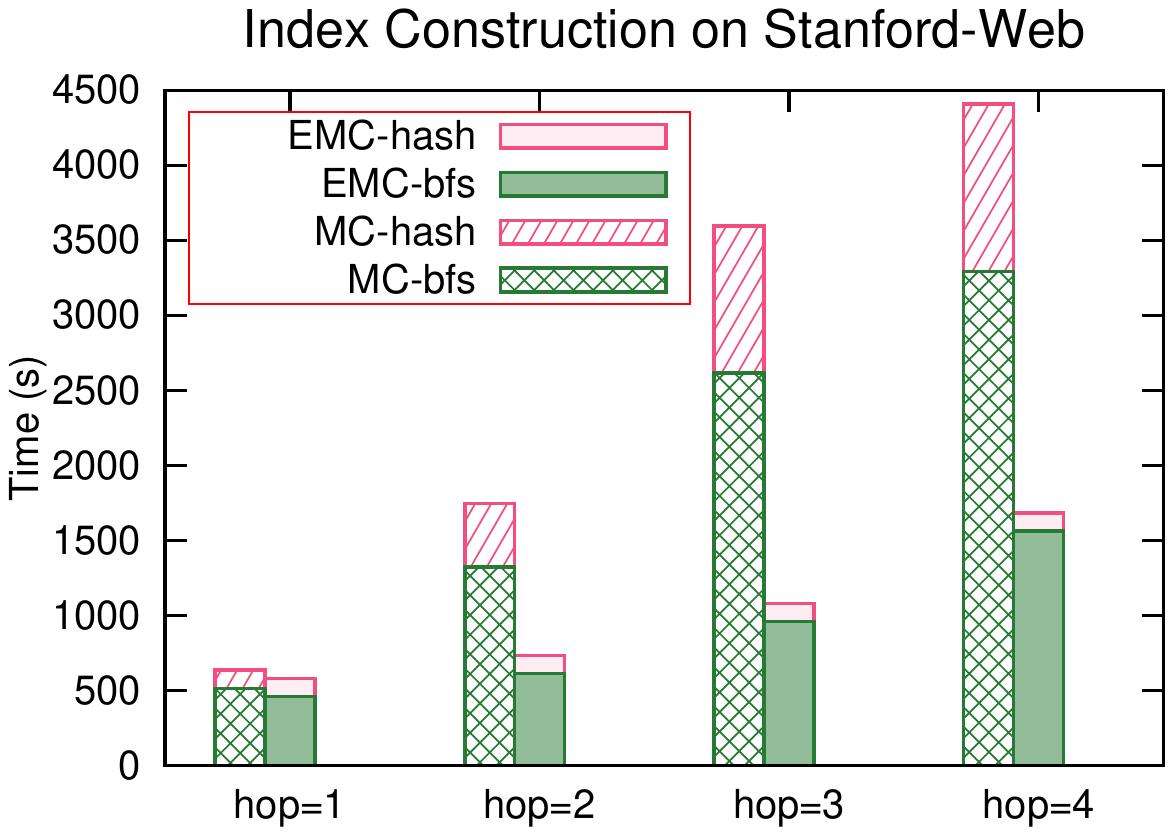}
  \caption{Index Built on Stanford-web}
\end{subfigure}
\begin{subfigure}{0.23\textwidth}
  \includegraphics[width=\textwidth]{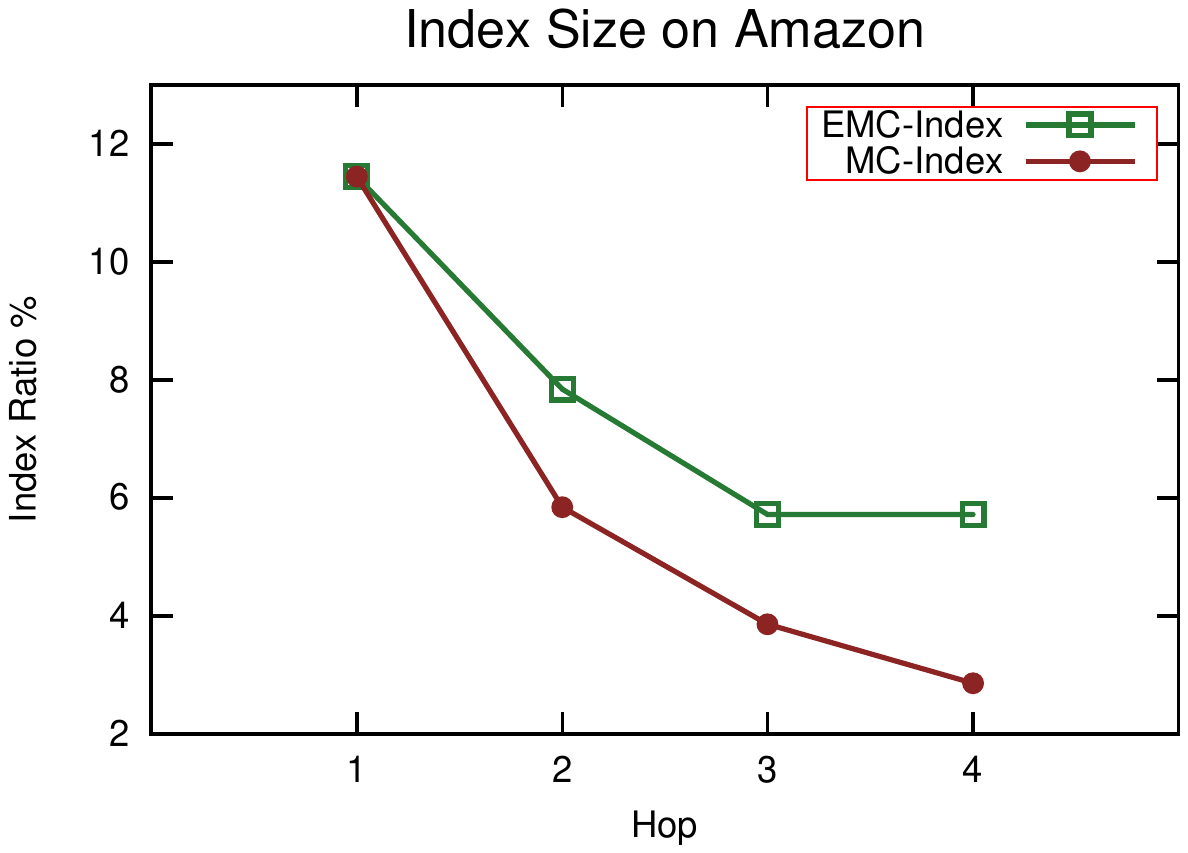}
  \caption{Index Size on Amazon}
\end{subfigure}
\begin{subfigure}{0.23\textwidth}
  \includegraphics[width=\textwidth]{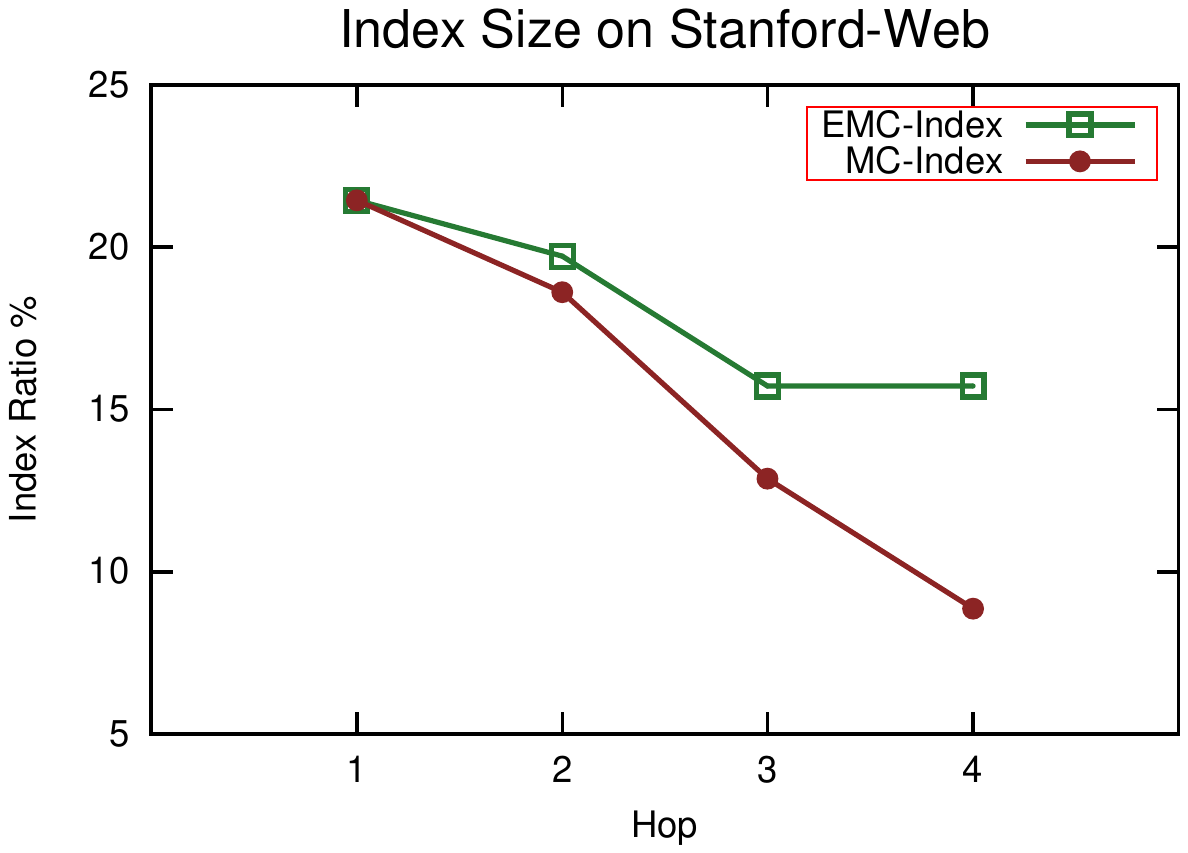}
  \caption{Index Size on Stanford-web}
\end{subfigure}
\caption{Index Construction Analysis for EMC and MC. (a) and (b) 
depict the index time for the Amazon and Stanford-web networks; 
(c) and (d) shows the index size for the Amazon and Stanford-web datasets}
\label{fig:index_analysis_emc_mc}
\end{figure*}

\subsection{Comparison between MC and EMC}
We first compare the effectiveness of the two DBIndex 
construction algorithms: MinHash Clustering (MC) and Estimated 
MinHash Clustering (EMC). We look at 
the index construction time, index sizes and query performance. 
All these experiments are conducted based on two real world datasets: 
Amazon and Stanford-web.For both datasets, we run a series of k-hop queries.\footnote{For the Stanford-web graph, which is directed,
the k-hop windows are directed k-hop windows where $u \in W(k)$ if there is a directed path
of at most $k$ hops from vertex $v$ to vertex $u$.
} For queries with hop count larger than 1, 
EMC uses 1-hop information for the initial clustering. 

\textbf{Index Construction. } Figs.~\ref{fig:index_analysis_emc_mc} 
(a) and (b) compare the index construction time between MC and EMC 
when we vary the windows from 1-hop to 4-hop for the Amazon and Stanford-web 
graphs respectively. To better understand the time difference, 
the construction time is split into two parts: 
the MinHash cost (EMC-hash or MC-hash) and the breadth-first-search traversal (to compute the k-hop window)  cost (EMC-bfs or MC-bfs). The results show the same trend 
for the two datasets. We made several observations.
First, as the number of hops increases, the indexing time increases as well. 
This is expected as a larger hop count results in a larger window size 
and the BFS and computation time increase correspondingly. 
Second, as the hop count increases, the difference between the
index time of EMC and that of MC widens. 
For instance, as shown in Figs.~\ref{fig:index_analysis_emc_mc}(a) and (b), for the 4-hop window queries, compared to MC, 
EMC can save $62\%$ and $66\%$ construction time for the Amazon and Stanford-Web datasets respectively.
EMC benefits from both the low MinHash cost and low BFS cost. 
From Figs.~\ref{fig:index_analysis_emc_mc} (a) and (b), 
we can see that the MinHash cost of MC increases as the number of hops 
increases, while that for EMC remains almost the same as the 1-hop case. 
This shows that the cost of MinHash becomes more significant for larger windows. 
Thus, using 1-hop clustering for larger hop counts reduces the MinHash cost 
in EMC. Similarly, as EMC saves on BFS cost for k-hop queries where $k > 1$, 
the BFS cost of EMC is much smaller than that of MC as well. 

\textbf{Index Size.} Figs.~\ref{fig:index_analysis_emc_mc} (c) and (d) 
present the effect of hop counts on the index size for the Amazon and Stanford-web 
datasets respectively. The y-axis shows the index ratio which is the index size over the original graph size. The insights we derive are: 
First, the index size is rather small compared to the original graph - it
varies from $3\%$ to $12\%$ of the original graph for the Amazon dataset 
and from $8\%$ to $22\%$ for the Stanford-web dataset. 
Second, the index size decreases as the number of hops increases. 
While this appears counter-intuitive initially, it is actually reasonable - a larger hop results in a bigger window, which leads
to more dense blocks. Third, the index ratio of EMC is slightly larger 
than that of MC for larger hop count. This indicates that MC can find more dense blocks 
than EMC to reduce the index size. Fourth, the index ratio on 
the Amazon dataset is much smaller than the ones on the Stanford-web dataset. 
This is because the Amazon dataset is undirected while the Stanford-web dataset
is directed. For the Stanford-web dataset, since we use directed k-hop windows, the window size is naturally smaller. 

\textbf{Query Performance.}       
Figs.~\ref{fig:khop_effective_query_time} (a) and (b)
present the query time of MC and EMC 
on the two datasets respectively 
as we vary the number of hops from 1 to 4.
To appreciate the benefits of an index-based scheme, we also implemented
a \emph{Non-indexed} algorithm which computes window aggregate by performing k-bounded breadth
first search for each vertex individually in real time.
In Figures~\ref{fig:khop_effective_query_time} (a) and (b),
the execution time shown on the y-axis is in log scale. The results show that the index-based 
schemes outperform the non-index approach by four orders of magnitude. 
For instance, for the 4-hop query over the Amazon graph, 
our algorithm is 13,000 times faster than the non-index approach. 
This confirms that it is necessary to have well-designed index support 
for efficient window query processing. By utilizing DBIndex, 
for these graphs with millions of edges, every aggregation query 
can be processed in just between 30ms to 100ms for the Amazon graph and 
between 60ms to 360ms for the Stanford-web graph. In addition, we can 
see that as the number of hops increases, the query time decreases. 
This is the case because a larger hop count eventually results in a larger
number of dense blocks where more (shared) computation can be salvaged. 
Furthermore, we can see that the query time of EMC is slightly longer 
than that of MC when the number of hops is large. This is expected as 
EMC does not cluster based on the complete window information; instead, it
uses only partial information derived from the 1-hop windows. 
However, the performance difference is quite small even for 4-hop queries- for
the Amazon dataset, the difference is only 20ms; and for the Stanford-web
graph, the difference is 35ms. 
For small number of hops, the time difference is even smaller. 
This performance penalty is acceptable as tens of milliseconds time 
difference will not affect user's experience.  As EMC is significantly more 
efficient than MC in index construction, EMC may still be a 
promising solution to many applications. As such, 
in the following sections, we adopt EMC for DBIndex in
our experimental evaluations.  
 
\begin{figure}[h]
\centering
\begin{subfigure}{0.48\linewidth}
\centering
  \includegraphics[width=\textwidth]{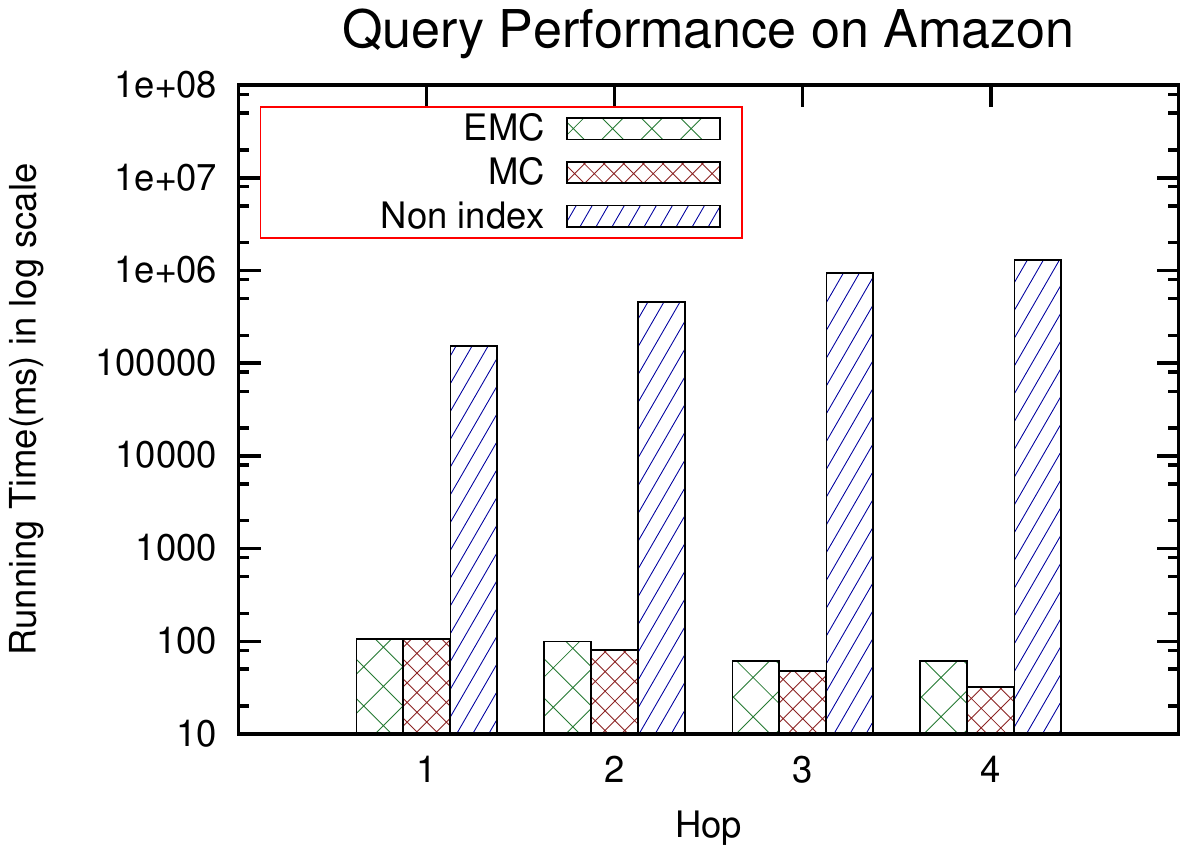}
  \caption{Amazon Graph}
\end{subfigure}%
\begin{subfigure}{0.48\linewidth}
\centering
  \includegraphics[width=\textwidth]{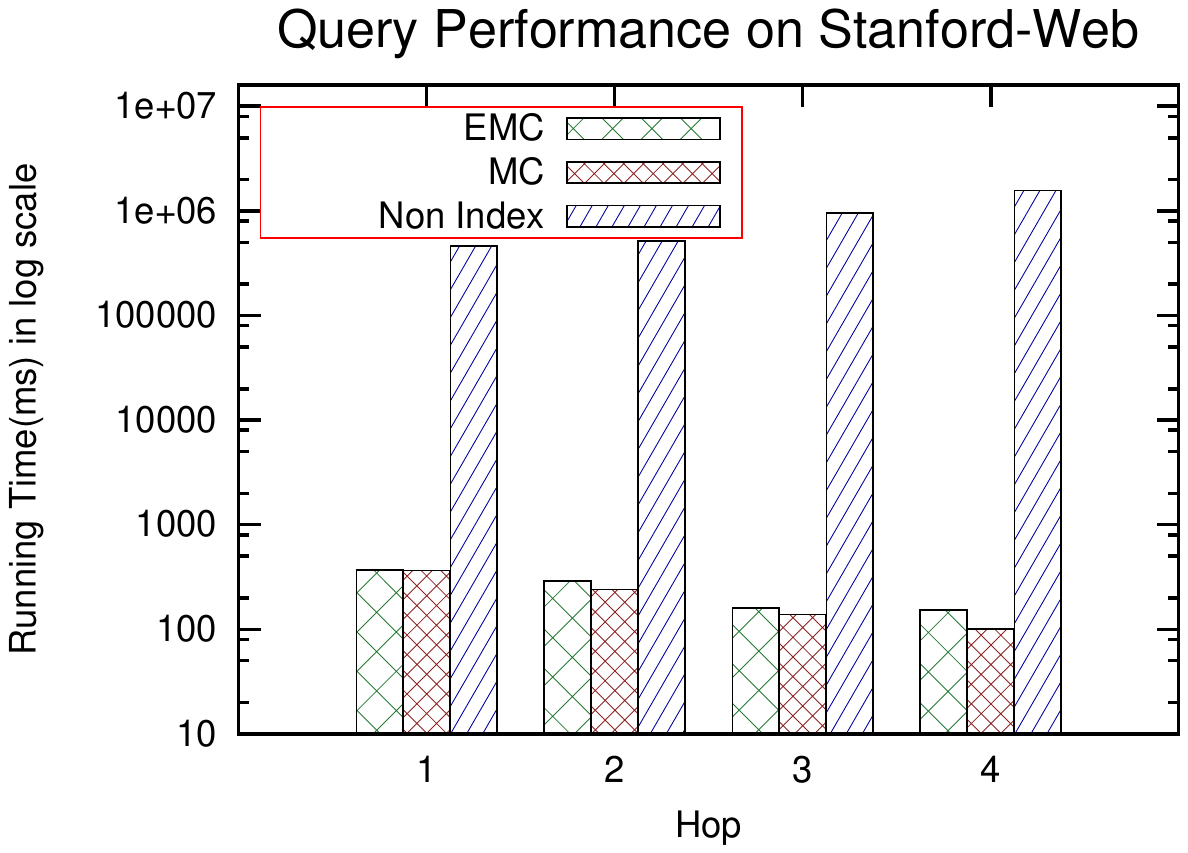}
  \caption{Stanford-web Graph}
\end{subfigure}
\caption{Query Performance Comparison of MC and EMC}
\label{fig:khop_effective_query_time}
\end{figure}

\subsection{Comparison between DBIndex and EAGR}

In this set of experiments, we compare DBIndex and 
EAGR \cite{mondal2014eagr} using both large-scale real 
and synthetic datasets. Like \cite{mondal2014eagr},
for each dataset, EAGR is run for 10 
iterations in the index construction.

\subsubsection{Real Datasets} 
We first study the index construction and 
query time performance of DBIndex and EAGR for 1-hop and 2-hop windows 
using 6 real datasets: DBLP, Youtube, Livejournal, Google, Pokec and Orkut. 
The results for 1-hop window and 2-hop window are presented
in Figs.~\ref{fig:1-hop-real} and~\ref{fig:2-hop-real} 
respectively. 
As shown in Figs.~\ref{fig:1-hop-real}(a) and~\ref{fig:2-hop-real}(a), both DBIndex and EAGR can build the index for all
the real datasets for 1-hop but EAGR ran out of the memory for 2-hop window queries on LiveJournal and Orkut datasets. This further confirms that EAGR incurs 
high memory usage as it needs to build the FPT and 
maintain the vertex-window mapping information. We also observe that 
DBIndex is significantly faster than EAGR in index creation. 
We emphasize that the time is shown in logarithmic scale. 
For instance, for Orkut dataset, EAGR takes 4 hours to build the index 
while DBIndex only takes 33 minutes. 

Fig.~\ref{fig:1-hop-real} (b) and Fig.~\ref{fig:2-hop-real} (b) 
show the query performance for 1-hop and 2-hop queries respectively. 
The results indicate that the query performance is comparable. 
For most of the datasets, DBIndex is faster than EAGR.
In some datasets (e.g. Orkut and Pokec), DBIndex performs 30\% faster than the EAGR. 
We see that, for 1-hop queries on 
Youtube and LiveJournal datasets and 2-hop queries on Youtube dataset, DBIndex is slightly slower than EAGR. We observe 
that these datasets are very sparse graphs where the intersections among windows are naturally small. For very sparse graphs, 
both DBIndex and EAGR are unable to find much computation sharing. In this case, the performance of DBIndex and EAGR is very close. For instance, in the worse case for livejournal, DBIndex is 9\% slower than EAGR where the actual time difference remains tens of millionseconds.    
 
Another insight we gain is that as expected, compared to Fig.~\ref{fig:1-hop-real} (b), 
2-hop query runs faster for both algorithms. This is because there is more computation sharing for 2-hop window query. 

In summary, 
DBIndex takes much shorter time to build but offers comparable, if not much faster, query 
performance than EAGR.

\begin{figure}[h]
\centering
\begin{subfigure}{0.48\linewidth}
  \centering
  \includegraphics[width=\textwidth]{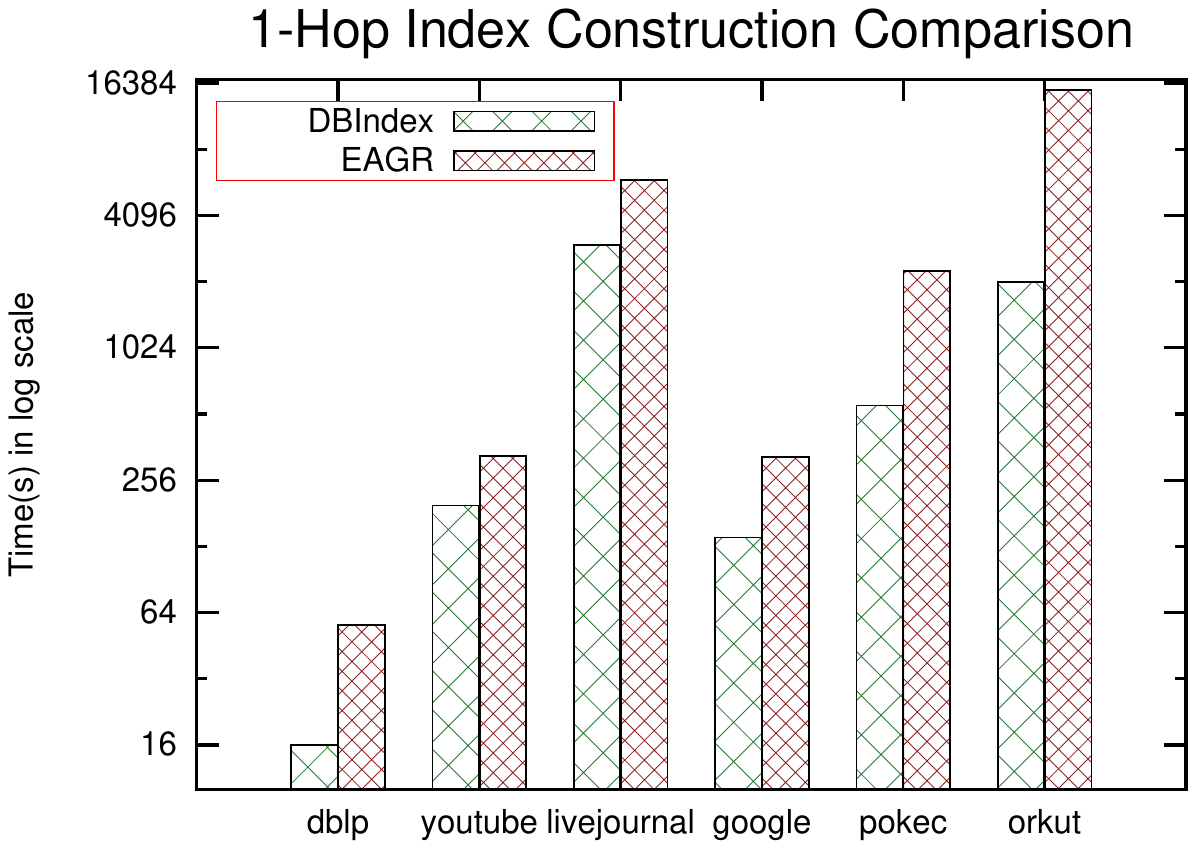}
  \caption{Index Construction}
\end{subfigure} \begin{subfigure}{0.48\linewidth}
  \centering
  \includegraphics[width=\textwidth]{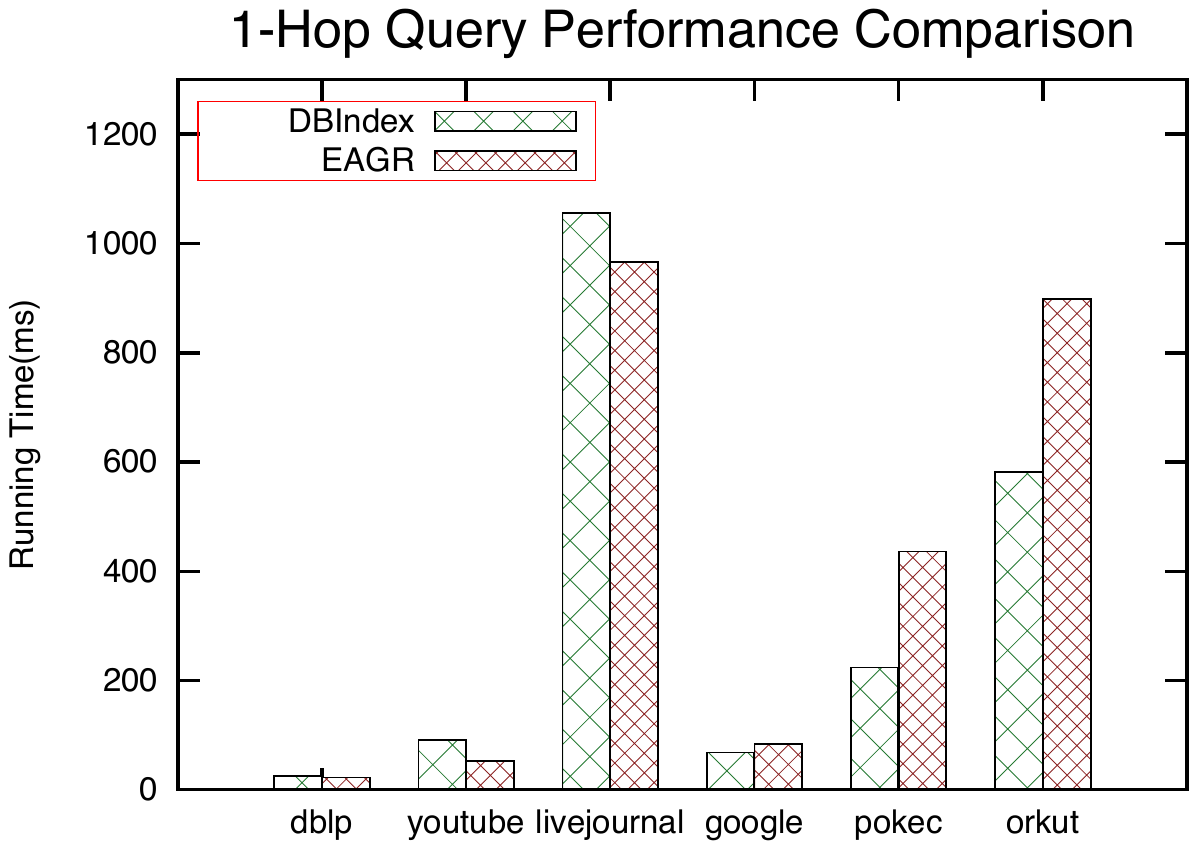}
  \caption{Query Performance}
\end{subfigure}%
\caption{Comparison between DBIndex and EAGR  for 1-hop query}
\label{fig:1-hop-real}
\end{figure}

\begin{figure}[h]
\centering
\begin{subfigure}{0.48\linewidth}
  \centering
  \includegraphics[width=\textwidth]{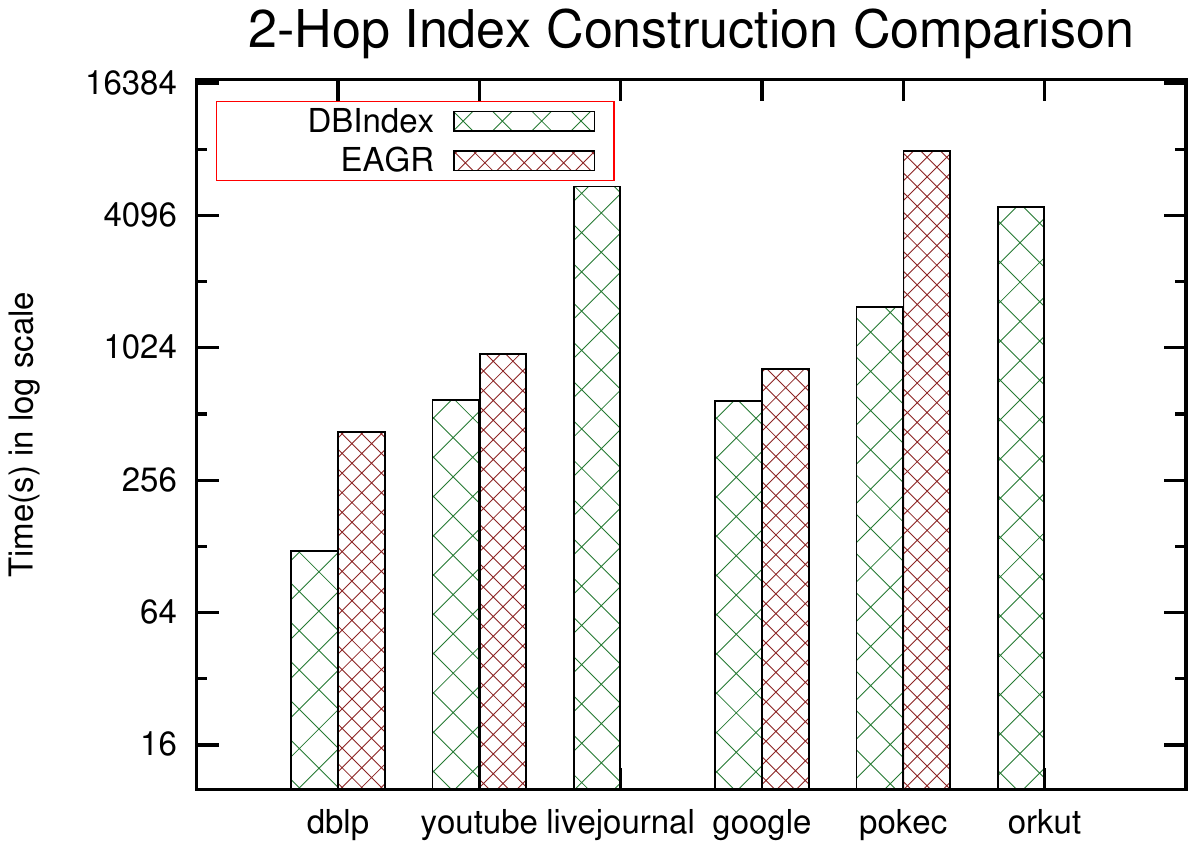}
  \caption{Index Construction}
\end{subfigure}
\begin{subfigure}{0.48\linewidth}
  \centering
  \includegraphics[width=\textwidth]{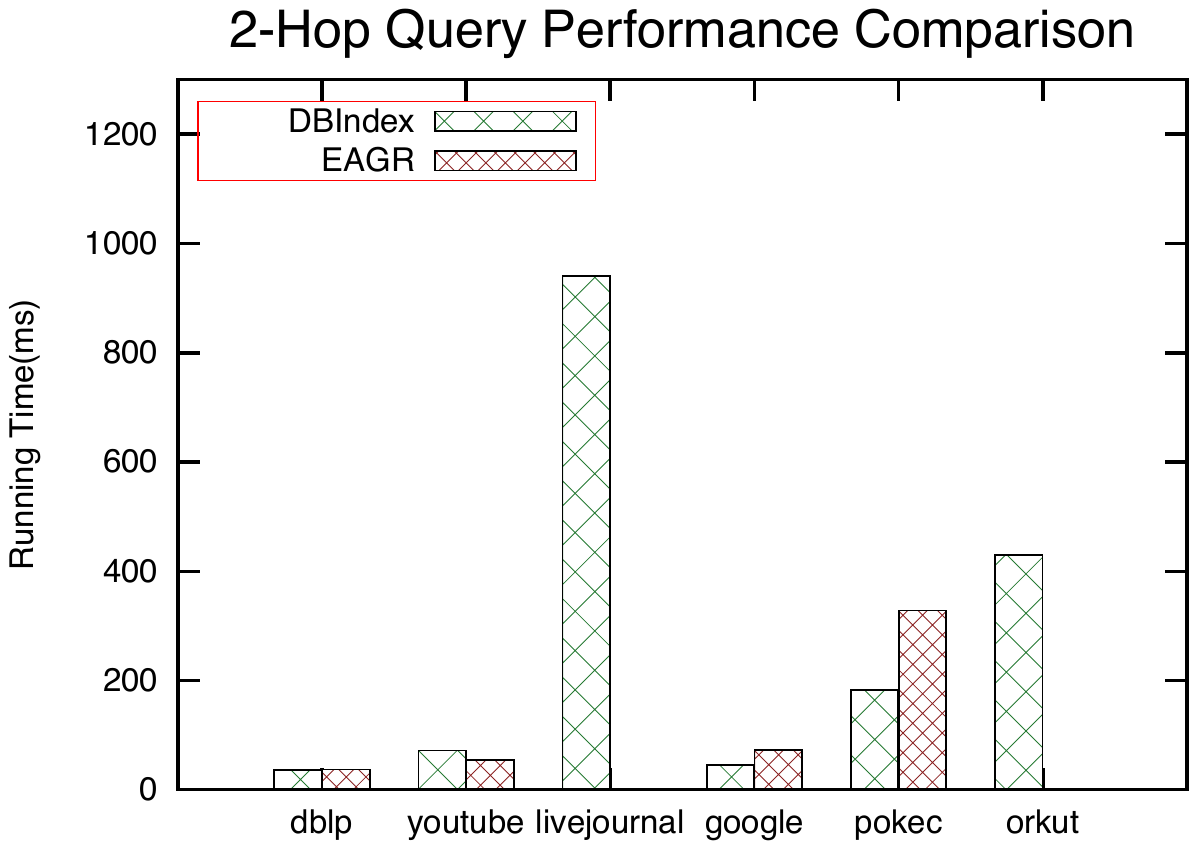}
  \caption{Query Performance}
\end{subfigure}
\caption{Comparison between DBIndex and EAGR for 2-hop query}
\label{fig:2-hop-real}
\end{figure}

\subsubsection{Synthetic Datasets}
To study the scalability of DBIndex under large-scale networks, 
we generated synthetic datasets using the SNAP generator. 


\textbf{Impact of Number of Vertices.} First, we study how the performance 
changes when we fix the degree \footnote{Degree means average degree of the graph. The generated graph is of Erdos-Renyi model } at 10 and vary the number of vertices from 
2M to 10M. Figs.~\ref{fig:khop_d10_h1} (a) and (b) show
the execution time for index construction and query performance respectively.
From the results, we can see that DBIndex outperforms EAGR in
both index construction and query performance. For the graph with 
10M vertices and 100M edges, the DBIndex query time is less than 
450 milliseconds. 
Moroever, when the number of vertices changes from 2M to 10M, 
the query performance only increases 3 times. This shows that
DBIndex is not only scalable, but offers acceptable performance..   

\begin{figure}[t]
\centering
\begin{subfigure}{0.48\linewidth}
  \centering
  \includegraphics[width=\textwidth]{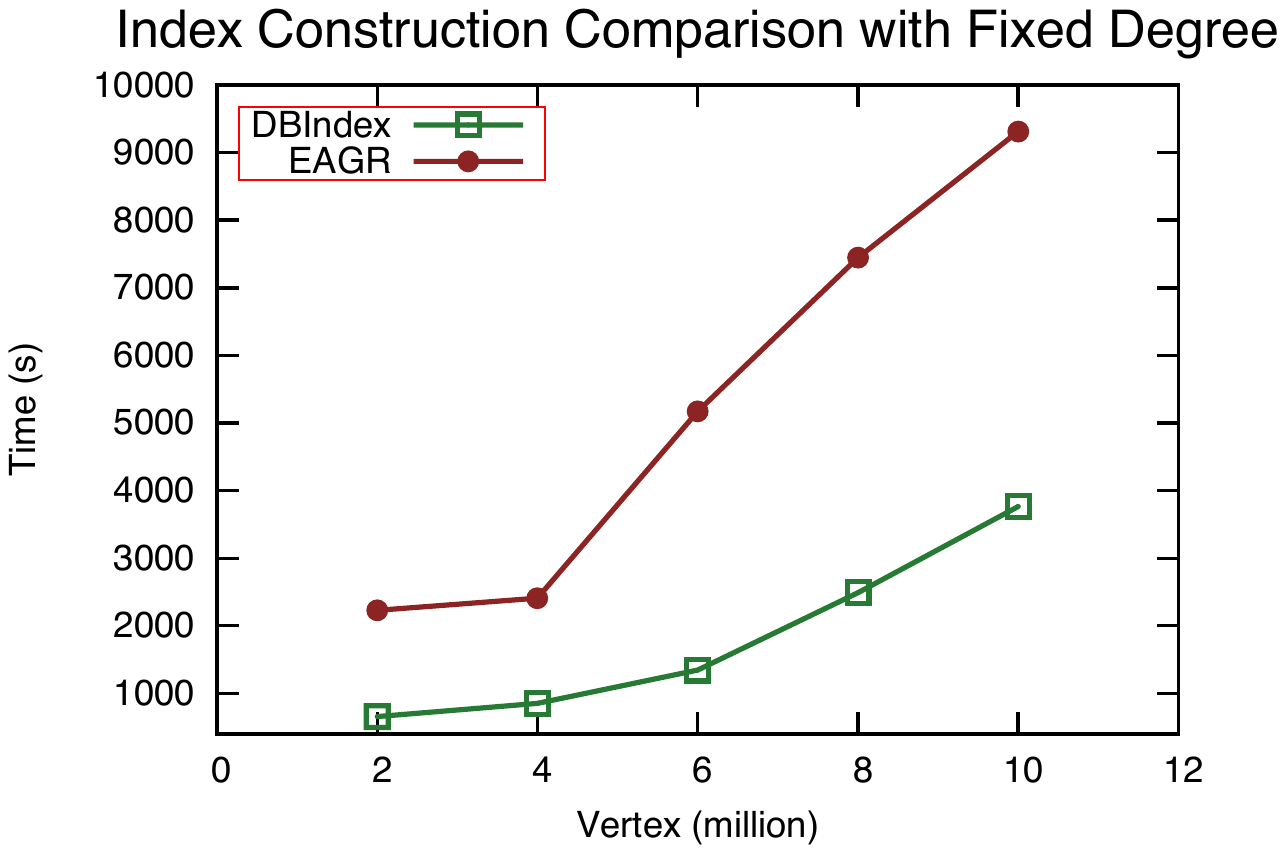}
  \caption{Index Construction}
\end{subfigure}
\begin{subfigure}{0.48\linewidth}
  \centering
  \includegraphics[width=\textwidth]{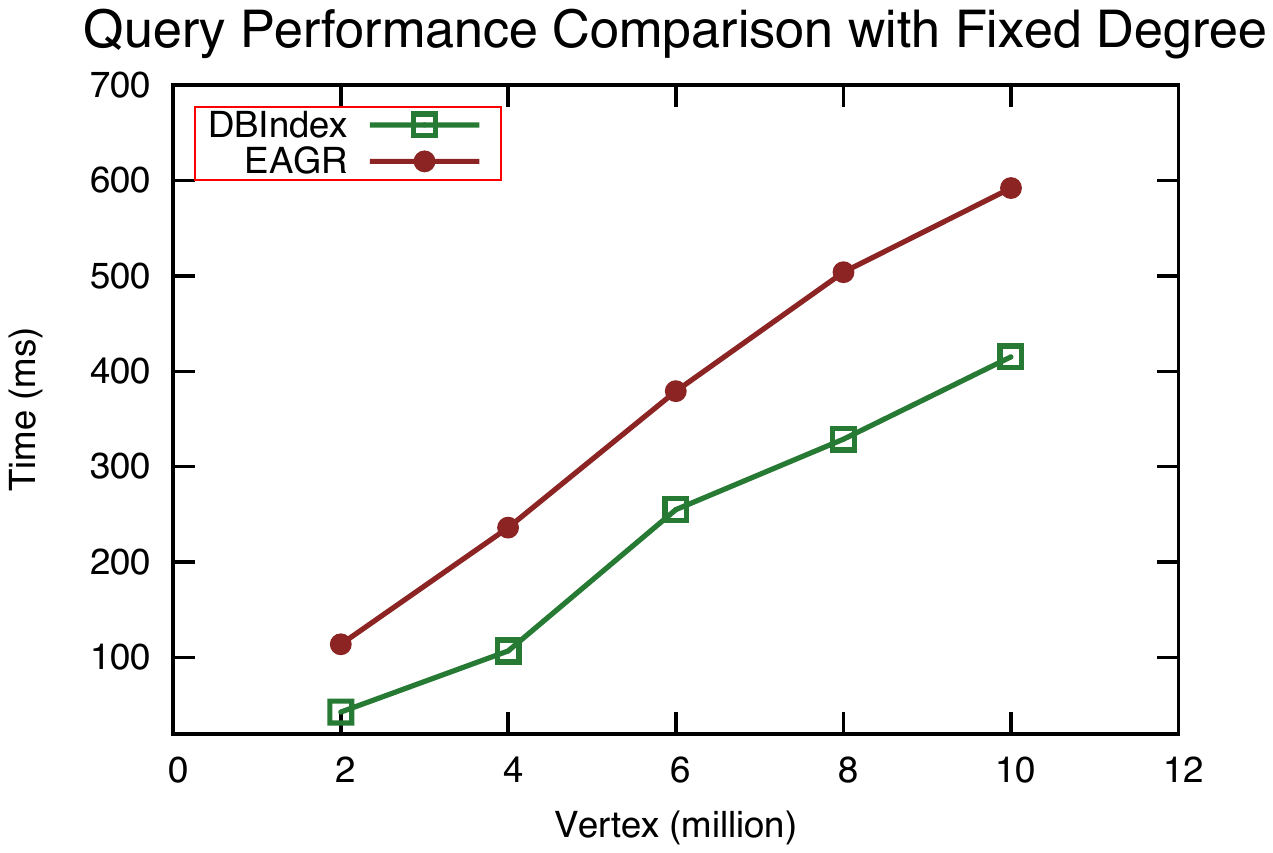}
  \caption{Query Performance}
\end{subfigure}%
\caption{Impact of number of vertices }
\label{fig:khop_d10_h1}
\end{figure}

\begin{figure}[t]
\centering
\begin{subfigure}{0.46\linewidth}
  \centering
  \includegraphics[width=\textwidth]{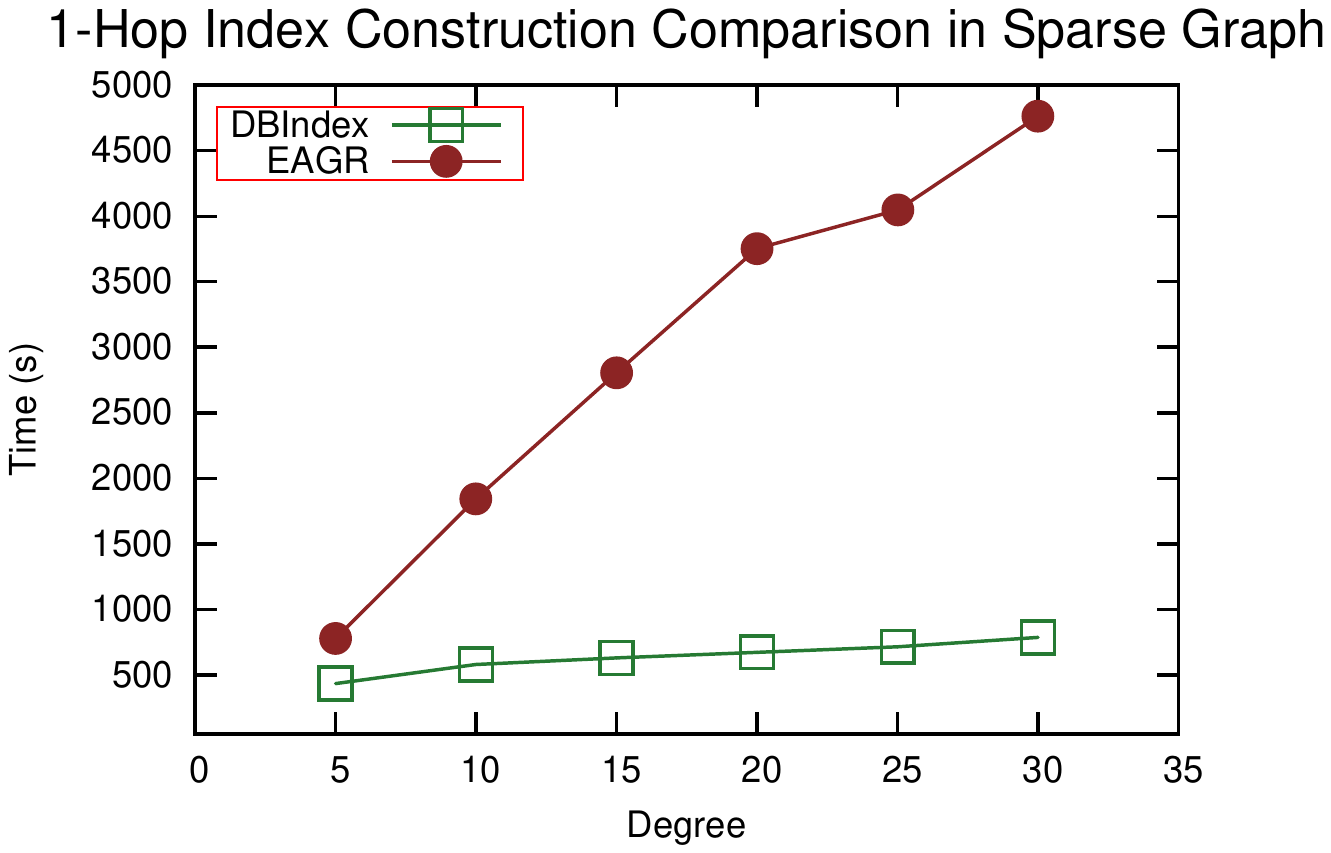}
  \caption{Index Construction}
\end{subfigure}
\begin{subfigure}{0.50\linewidth}
  \centering
  \includegraphics[width=\textwidth]{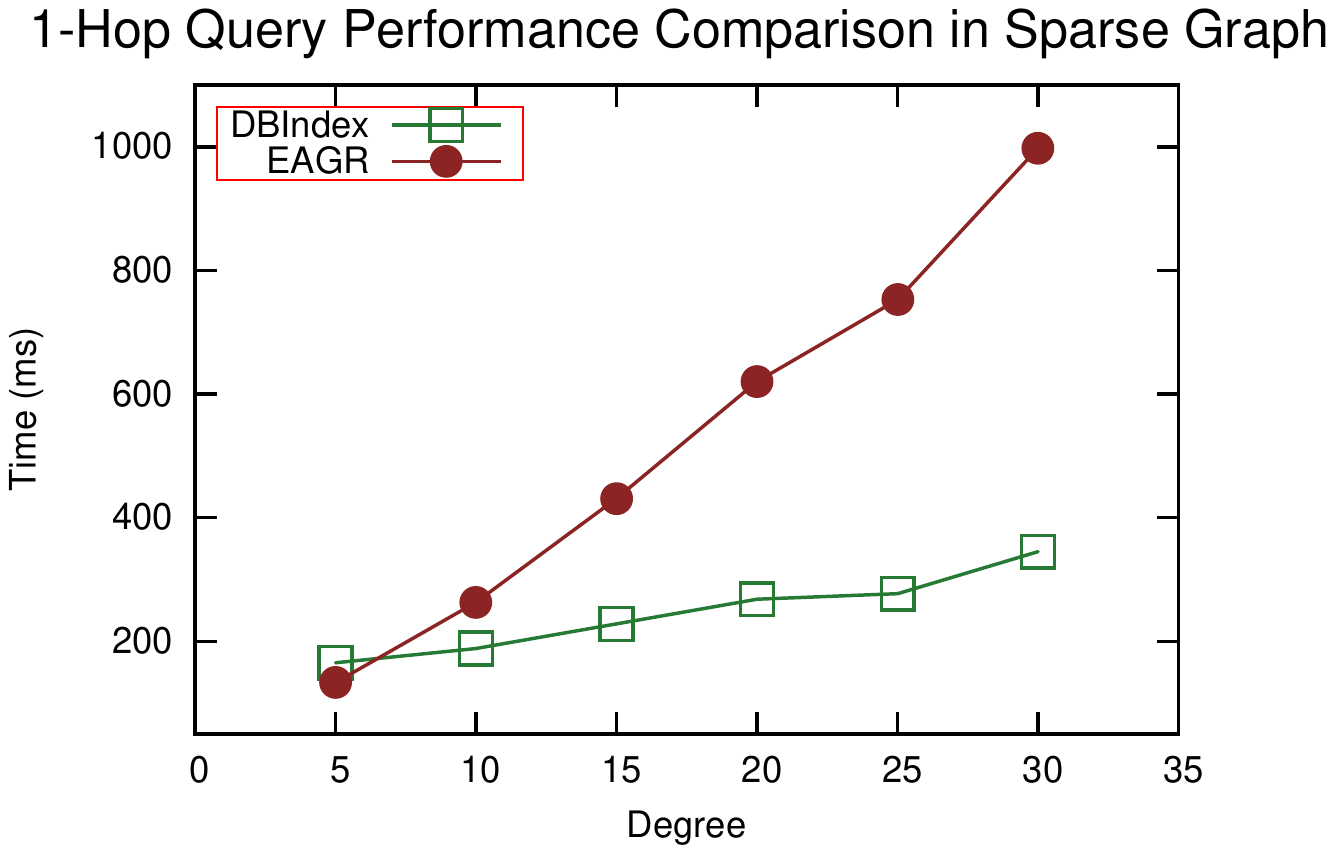}
  \caption{Query Performance}
\end{subfigure}
\begin{subfigure}{0.50\linewidth}
  \centering
  \includegraphics[width=\textwidth]{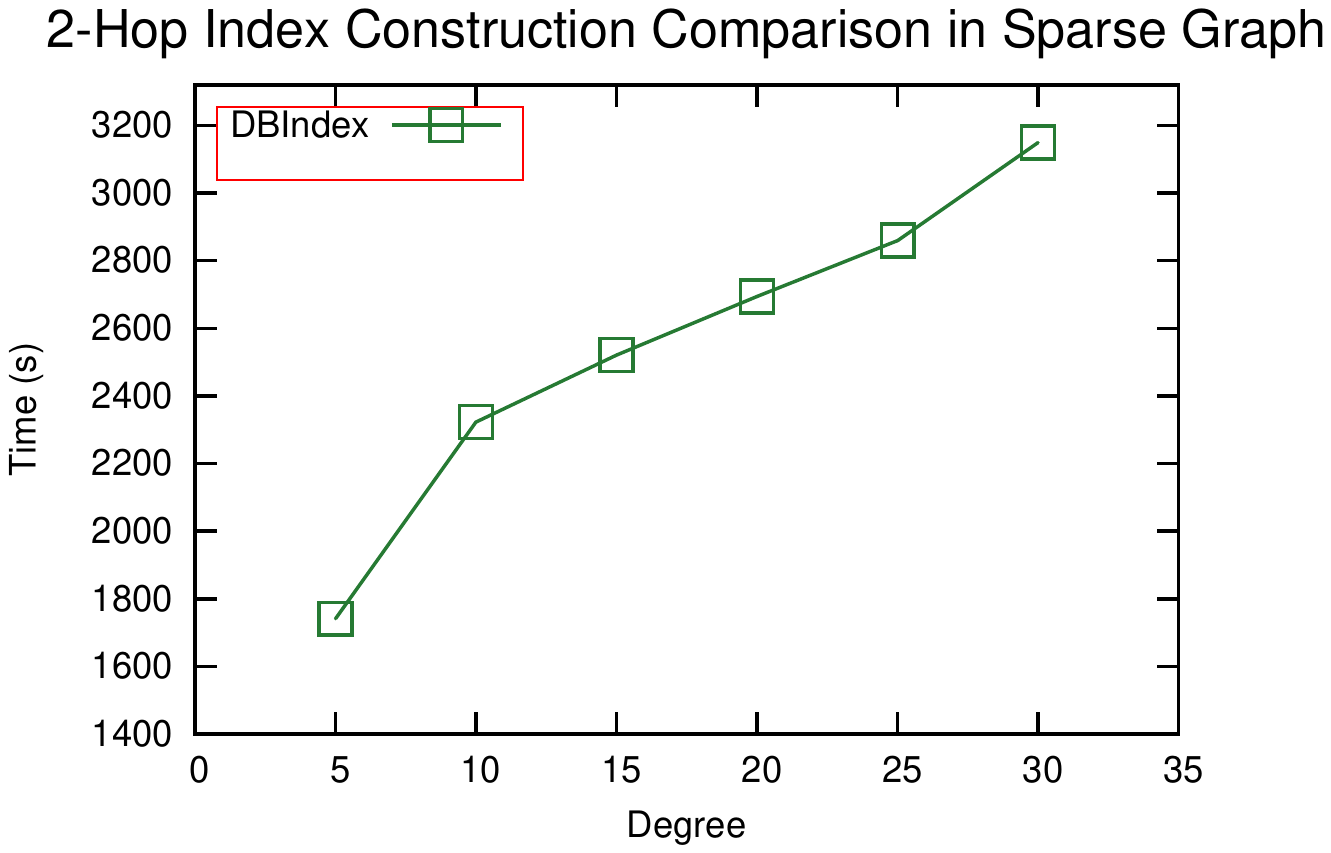}
  \caption{Index Construction}
\end{subfigure}
\begin{subfigure}{0.46\linewidth}
  \centering
  \includegraphics[width=\textwidth]{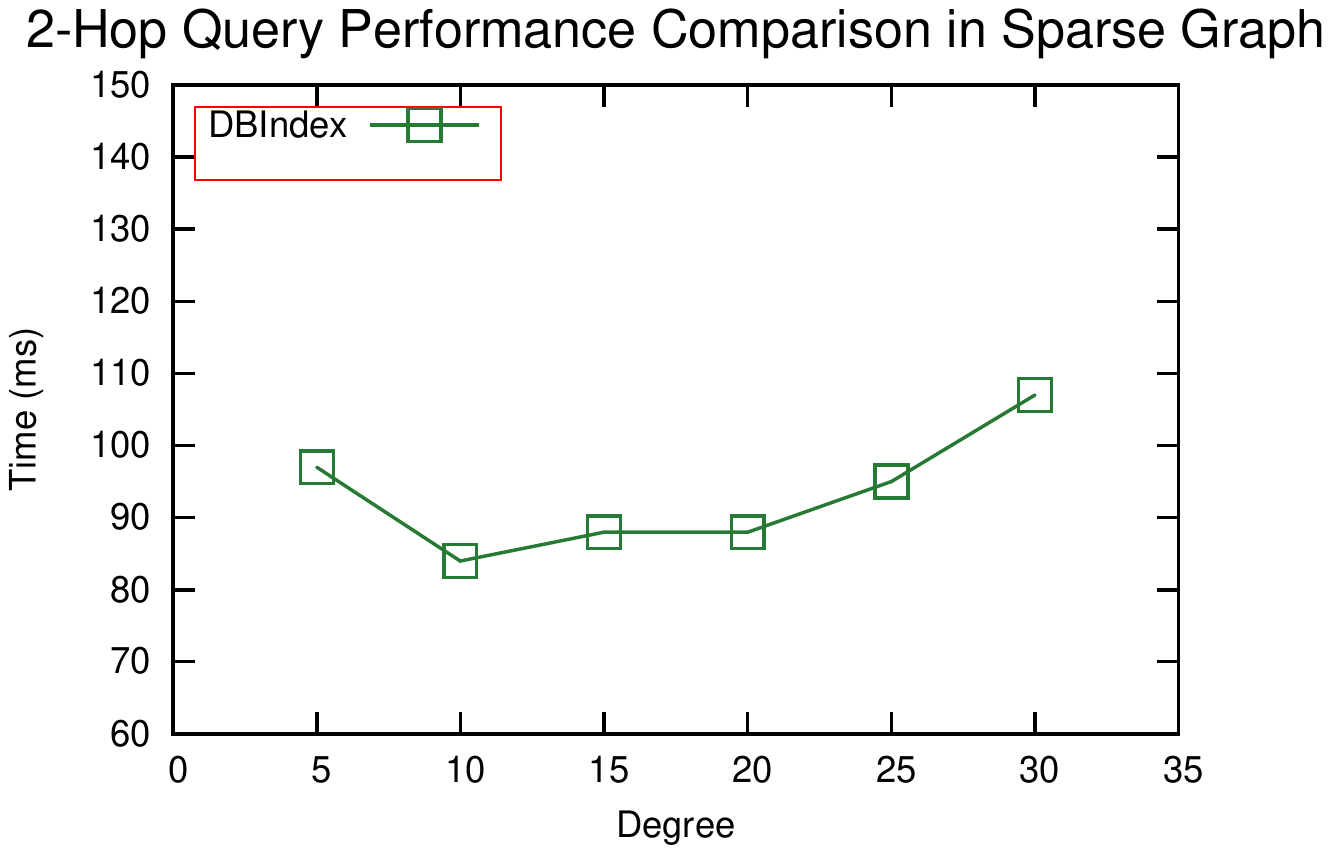}
  \caption{Query Performance}
\end{subfigure}
\caption{Impact of Degree over Sparse Graphs over 2M vertices. (a) and (b) are the results for 1-hop query; (c) and (d) are the results for 2-hop query. }
\label{fig:khop_v2m}
\end{figure}

\textbf{Impact of Degree over Sparse Graphs.} Our proposed DBIndex is effective when there is significant overlap between windows of neigboring nodes. As such, it is interesting to study how it performs for sparse graph where the nodes may not share many common neigbors. So, in these experiments, 
we study the impact of degree when the graph is relatively sparse.
We fix the number vertices of 2M and vary the vertex degree from 5 to 30. 
Figs.~\ref{fig:khop_v2m} (a) and (c) present the results 
on index construction for 1-hop and 2-hop queries respectively. 
For 1-hop queries, as degree increases, the time for index
construction also increases. 
However, the index creation time of DBIndex increases much 
slower than EAGR. This is because EAGR incurs relatively more 
overhead to handle multiple FPT creation and reconstruction. 
For 2-hop queries, EAGR failed to run. This is because even for 
a degree 5 sparse graph, the initial vertex-mapping can be as large 
as 90GB in a linked list manner, which exceeds the available memory. 
Note that the size becomes even larger when it is stored in a matrix manner.
Therefore, we can only show the results of DBIndex. 
In Fig.~\ref{fig:khop_v2m} (c), indexing time of DBIndex increases as the 
degree increases. This is expected as a bigger degree  
increases the overhead of graph traversal time to collect the window.

Fig.~\ref{fig:khop_v2m} (b) and (d)
show the results on query time 
for 1-hop and 2-hop queries respectively. 
We observe a similar pattern for the index construction time:
for 1-hop queries, the query time increases with increasing degree
but at a much slower rate than EAGR; for 2-hop queries, we observe in Fig.~\ref{fig:khop_v2m} (d) that the
query performance of DBIndex hovers around 100ms, which is much 
smaller than that of 1-hop query performance. 
This is because there are more dense blocks in the 2-hop case, 
in which case the query time can be faster compared to the 1-hop case. 

\textbf{Impact of Degree over Dense Graphs.} We study the 
impact of degree over very dense graphs with 200k vertices 
when the degree changes from 80 to 200. 
Figs.~\ref{fig:khop_v200k} (a) and (c) show
the execution time for index construction 
for 1-hop and 2-hop queries respectively. From the results, 
we can see that DBIndex performs well for dense graphs as well. 
As the degree increases, EAGR's performance degrades much faster than DBIndex. For 2-hop queries,
as shown in Figs.~\ref{fig:khop_v200k} (b) and (d), EAGR is only able to work on 
the dataset with degree 80 due to the memory issue. 
Even though the number of vertices is relatively small (only 200k), 
the number of edges is very large when the degree becomes big 
(e.g. 40M edges with degree of 200). 

Figs.~\ref{fig:khop_v200k} 
(b) and (d) show the results on query performance 
for 1-hop and 2-hop queries respectively. 
The results are consistent with that for sparse graphs
- DBIndex is superior over EAGR. 

In summary, the insight we obtain is that the scalability of EAGR 
is highly limited by its approach to build the index over 
the vertex-window mapping information. EAGR is limited 
by two factors: the graph size and the number of hops. 
DBIndex achieves better scalability as it does not need to 
create a large amount of intermediate data in memory. 

\begin{figure}[h]
\centering
\begin{subfigure}{0.48\linewidth}
  \centering
  \includegraphics[width=\textwidth]{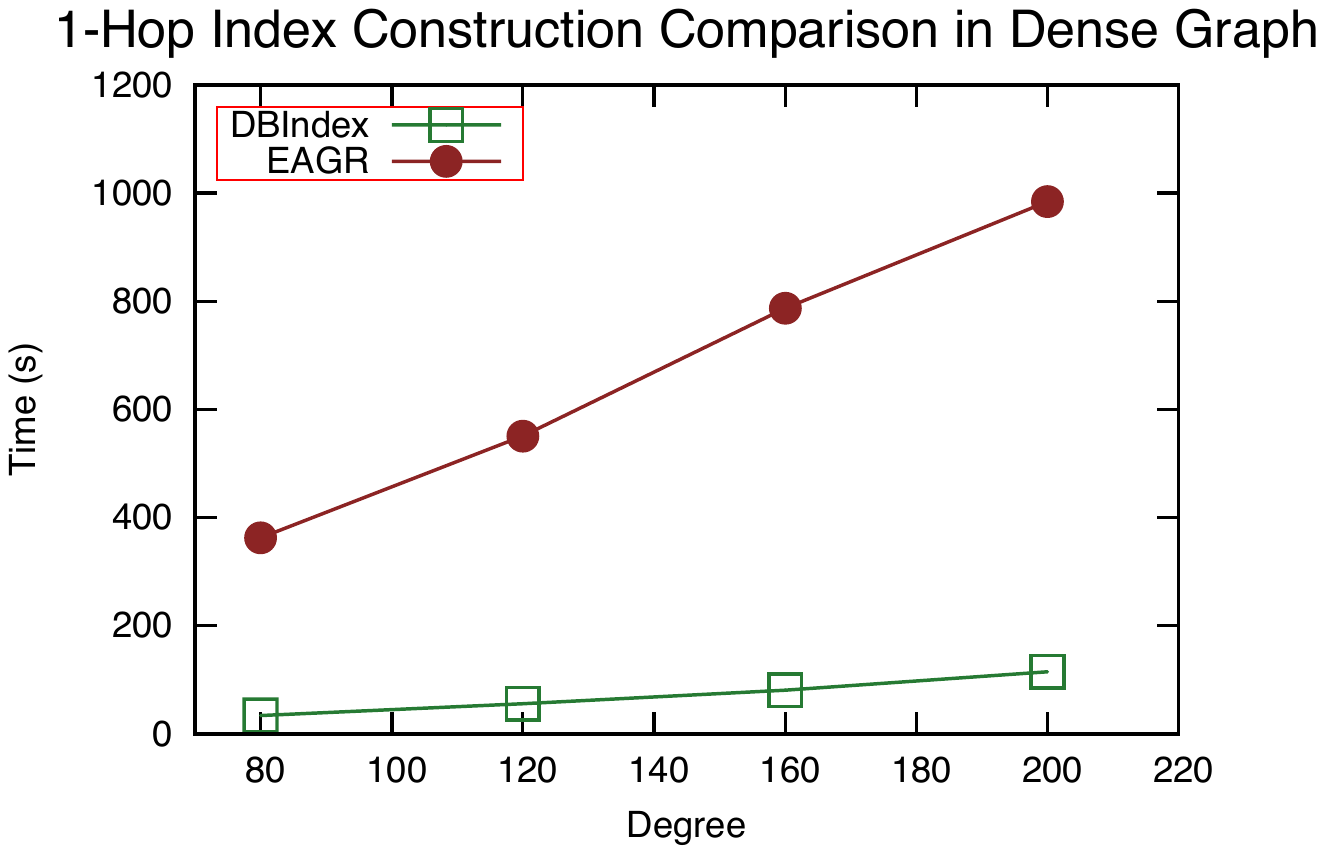}
  \caption{Index Construction}
\end{subfigure}
\begin{subfigure}{0.48\linewidth}
  \centering
  \includegraphics[width=\textwidth]{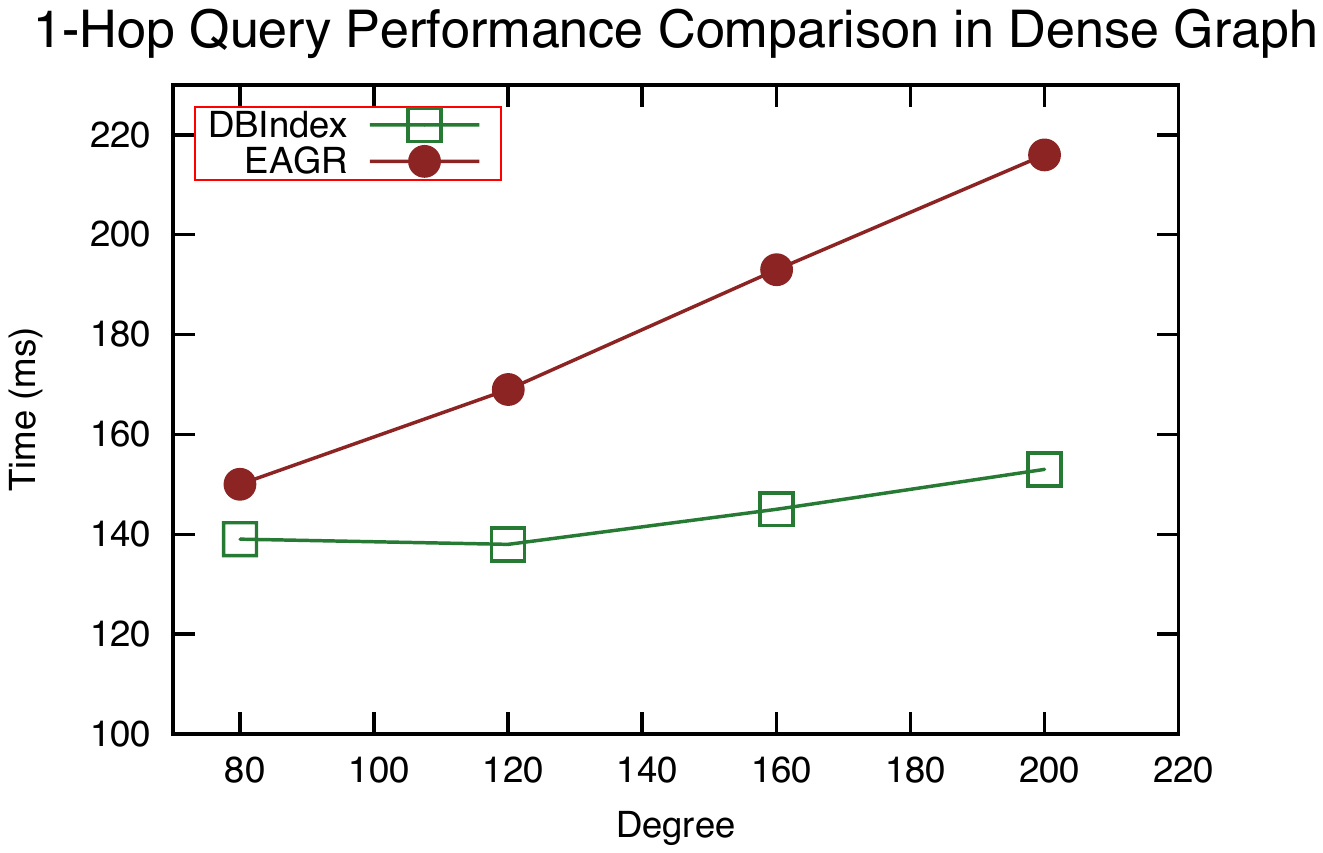}
  \caption{Query Performance}
\end{subfigure}
\begin{subfigure}{0.48\linewidth}
  \centering
  \includegraphics[width=\textwidth]{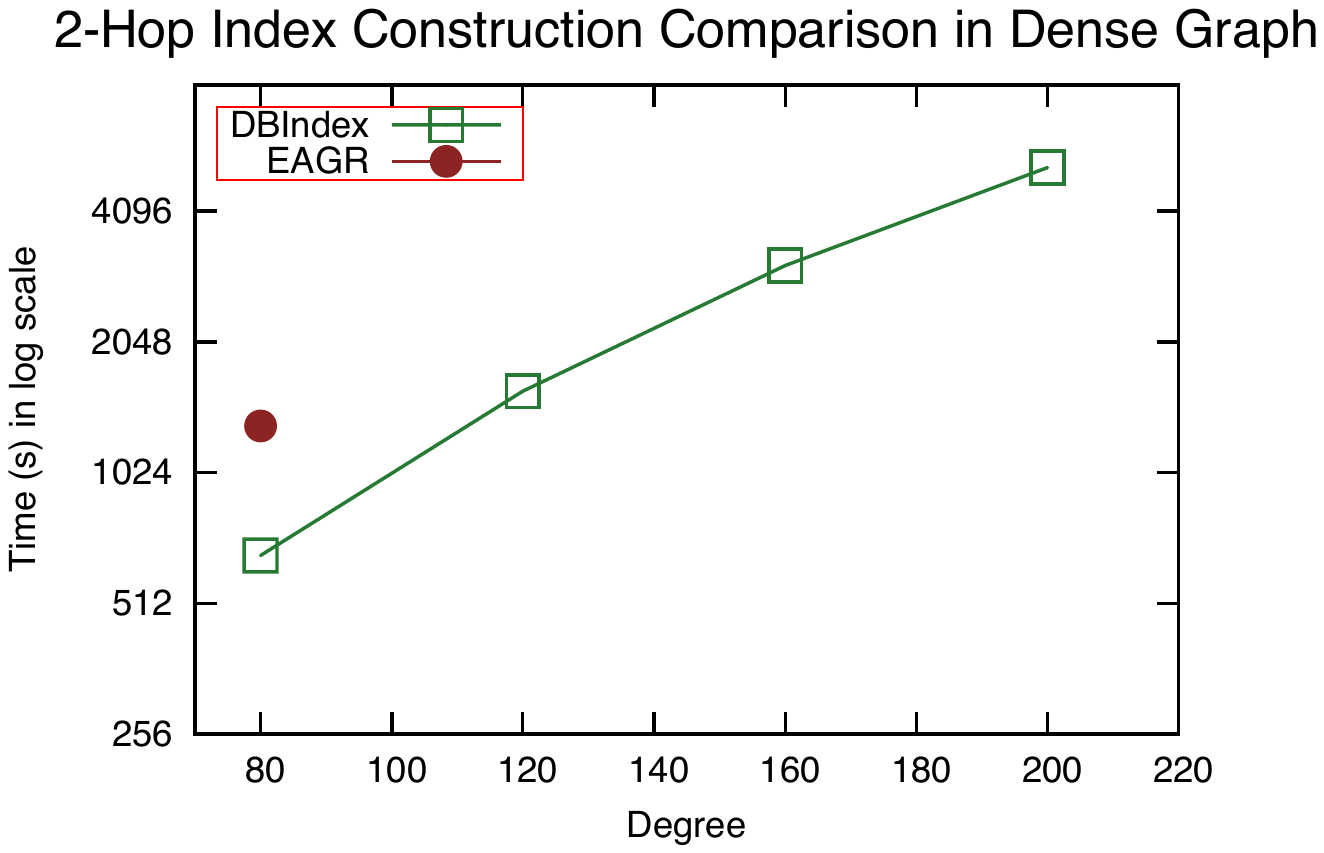}
  \caption{Index Construction}
\end{subfigure}
\begin{subfigure}{0.48\linewidth}
  \centering
  \includegraphics[width=\textwidth]{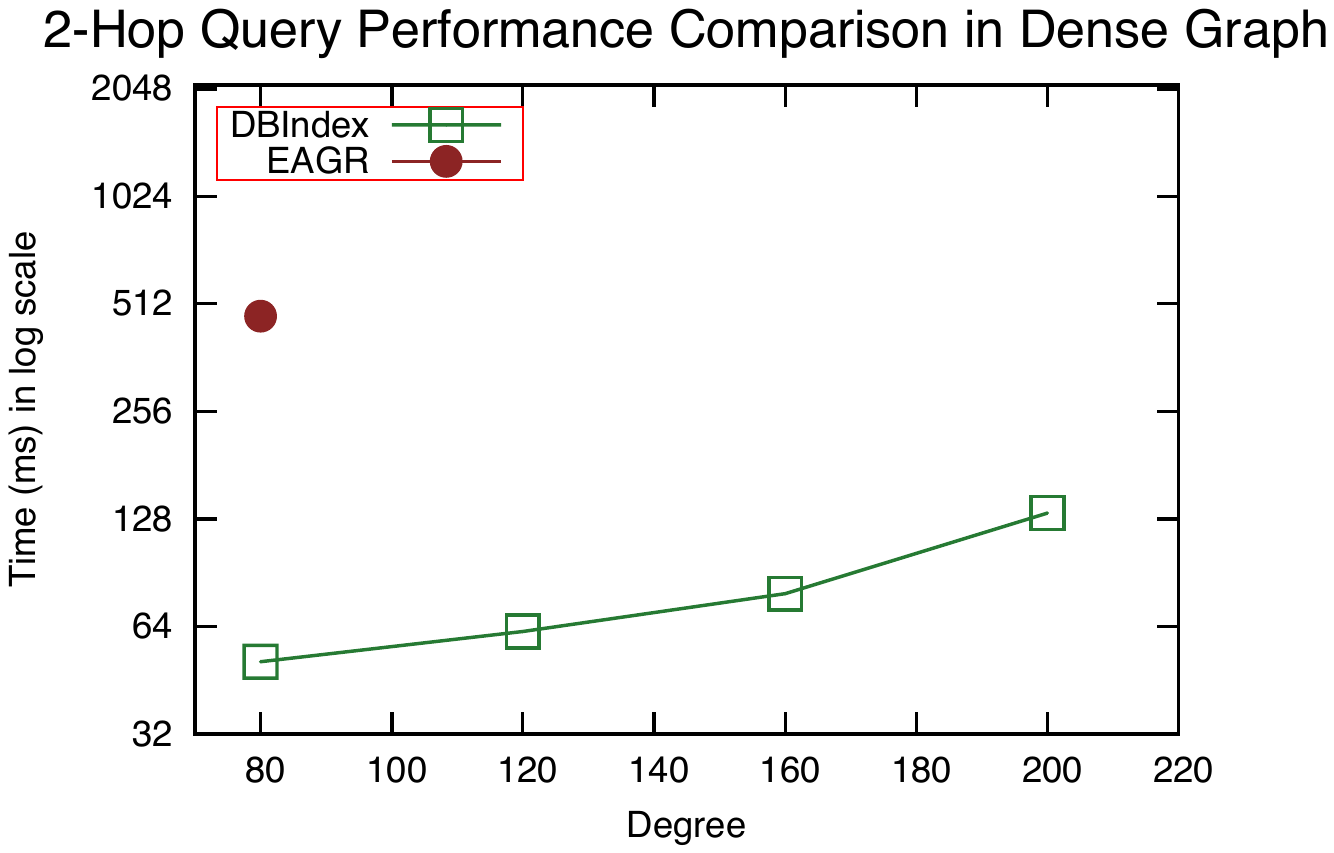}
  \caption{Query Performance }
\end{subfigure}
\caption{Impact of Degree over Dense Graphs over 200K Vertices. (a) and (b) are the results for 1-hop query; (c) and (d) are the results for 2-hop query. }
\label{fig:khop_v200k}
\end{figure}

\subsection{Evaluation of I-Index}
In this set of experiments, we evaluate I-Index. 
All the datasets are generated from the DAGGER generator.   

\textbf{Impact of Degree.} First, we evaluate the impact of degree 
changes when we fix the number of vertex as 30k and 60k.  
We compare \emph{DBIndex} with I-Index. 
In the query results, we also implement one non-index algorithm 
which dynamically calculates the window and then performs the aggregation. 
For indexing time, as shown in Figures~\ref{fig:pi_effect} (a) and (c), as the 
index size increases, both the indexing time of 
DBIndex and I-Index increase. However I-Index is more efficient than 
DBIndex, this is due to the special containment optimization used.  We observe that the 
index construction time is almost the same as the one time non-index query time. In other words, 
we can use one query time to create the index which is able to provide much faster query processing for subsequent queries. 
 In terms of query performance, shown in Figs.~\ref{fig:pi_effect} (b) and 
(d), the non-index approach is, on average, 20 times slower than 
the index-based schemes. I-Index outperforms DBIndex by 20\% to 30\%. 
The results clearly show that I-Index outperforms DBIndex for topological 
window in both index construction and query performance. Therefore, in the following experiments, we only present the results for I-Index. 
 
\begin{figure}
\centering
\begin{subfigure}{0.48\linewidth}
  \centering
  \includegraphics[width=\textwidth]{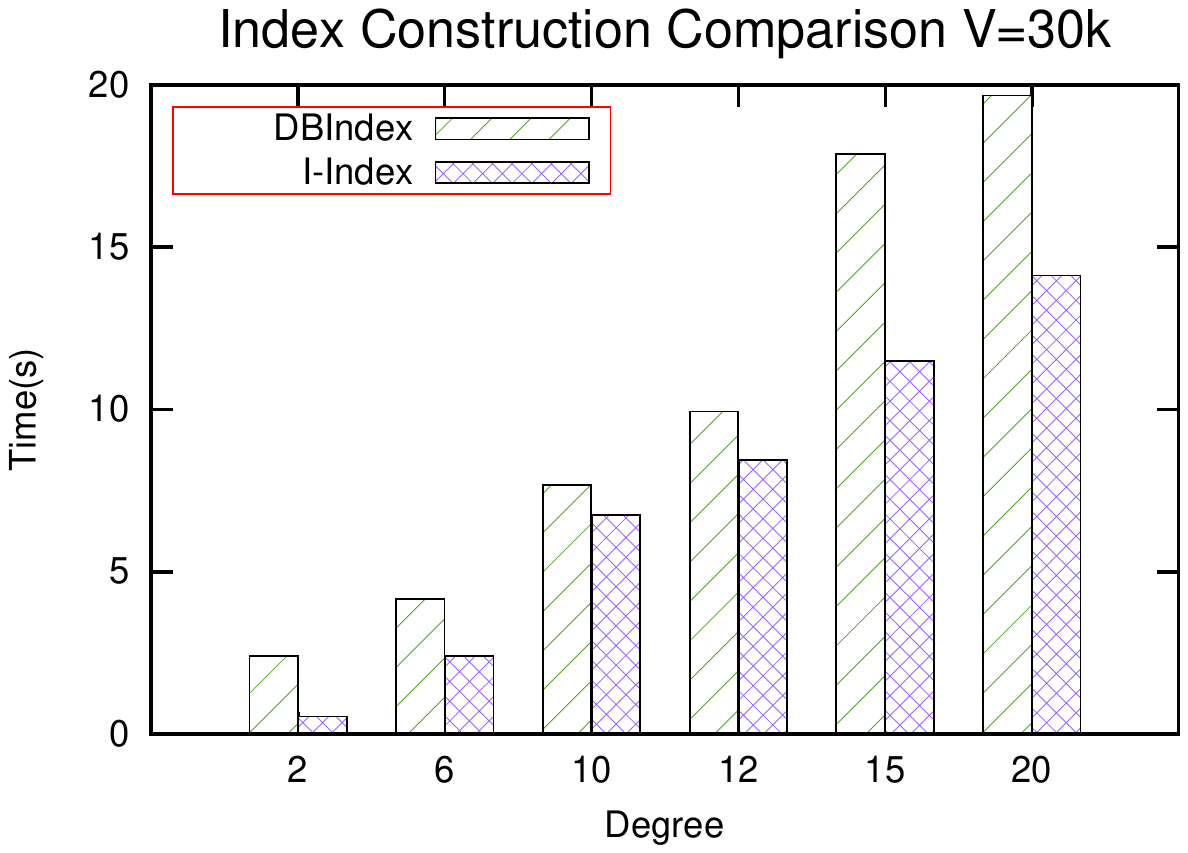}
  \caption{Index Construction}
\end{subfigure}
\begin{subfigure}{0.48\linewidth}
  \centering
  \includegraphics[width=\textwidth]{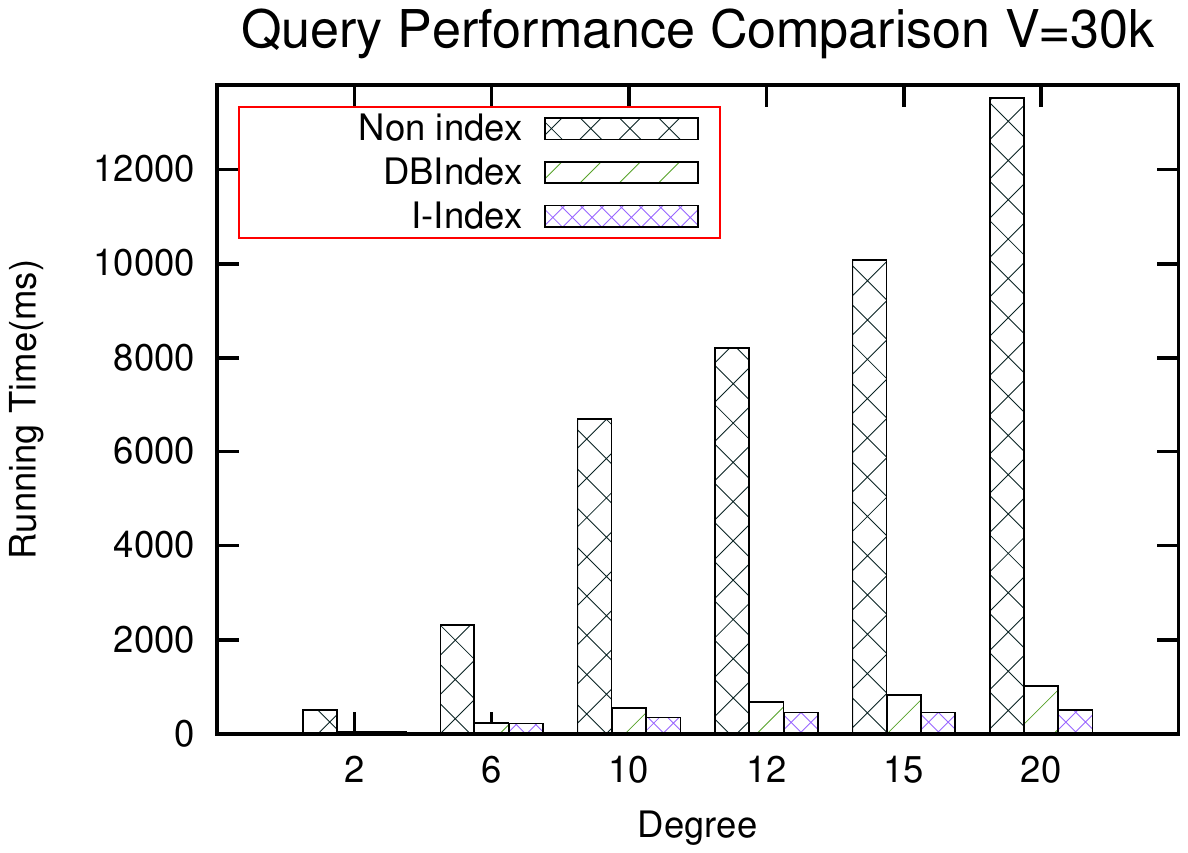}
  \caption{Query Performance}
\end{subfigure}
\begin{subfigure}{0.48\linewidth}
  \centering
  \includegraphics[width=\textwidth]{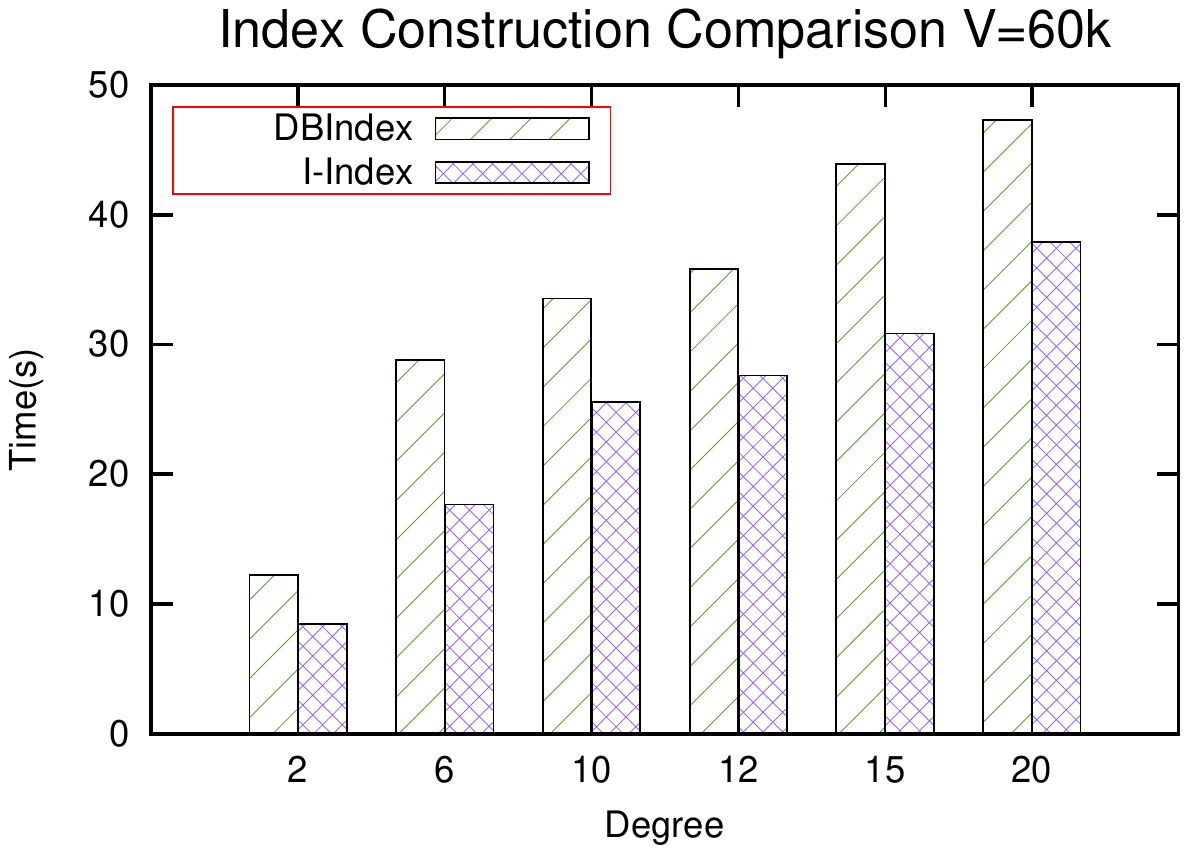}
  \caption{Index Construction}
\end{subfigure}
\begin{subfigure}{0.48\linewidth}
  \centering
  \includegraphics[width=\textwidth]{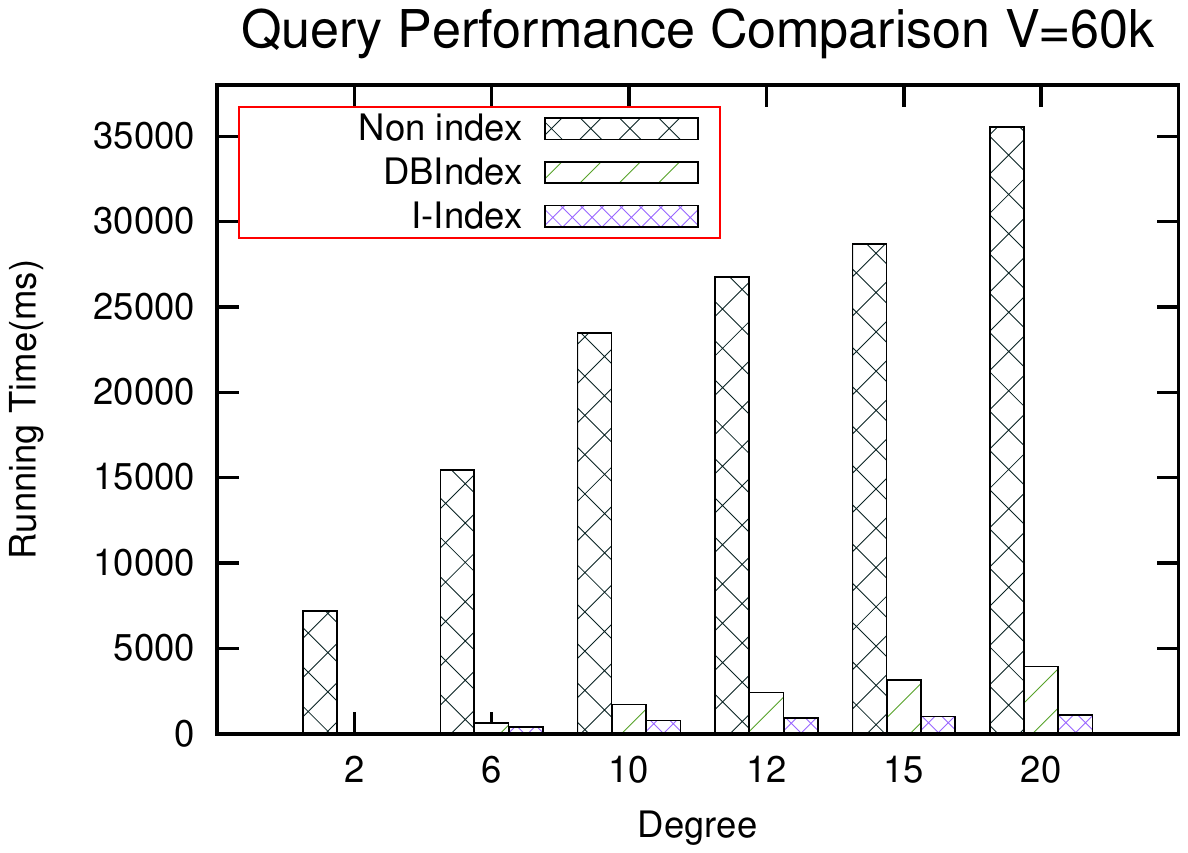}
  \caption{Query Performance}
\end{subfigure}
\caption{Impact of Degree. (a) and (b) are the results for 30K vertices; (c) and (d) are the results for 60K vertices.}
\label{fig:pi_effect}
\end{figure}

\textbf{Impact of Number of Vertices.} Next, we study how the performance of I-Index 
 is affected when we fix the degree and vary the number of vertices from 50k to 350K. 
Figs.~\ref{fig:pi_effect2} (a) and (c) show the index construction time when we fix the degree to 
10 and 20 respectively. From the results, we see that the construction time increases while the number of vertex increases and the construction time of a high degree graph is longer than that for low degree graphs. Figs.~\ref{fig:pi_effect2} (b) and (d) show the query time when we fix the degree to 10 and 20 respectively. 
As shown, the degree affects the query processing time - when the degree increases, the query time increases as well. We also observe that the query time is increasing linearly when the number of vertices increases. This shows the I-Index has good scalability.  

\textbf{Index Size.} Fig.~\ref{fig:top-index-size} presents
the index size ratio (i.e. size of index divided by the size of original graph) under different degrees from 3 to 20. 
There are four different sizes of data used with 100k, 150k, 200k and 300k vertices.  
For every vertex setting, the index size maintains the same trend in various degrees. The index size is linear to the input graph size. 
As a graph gets denser, the difference field of the I-Index
effectively shrinks. Thus, the index size in turn becomes smaller, which explains the bends in the figure.

\begin{figure}
\centering
\begin{subfigure}{0.48\linewidth}
  \centering
  \includegraphics[width=\textwidth]{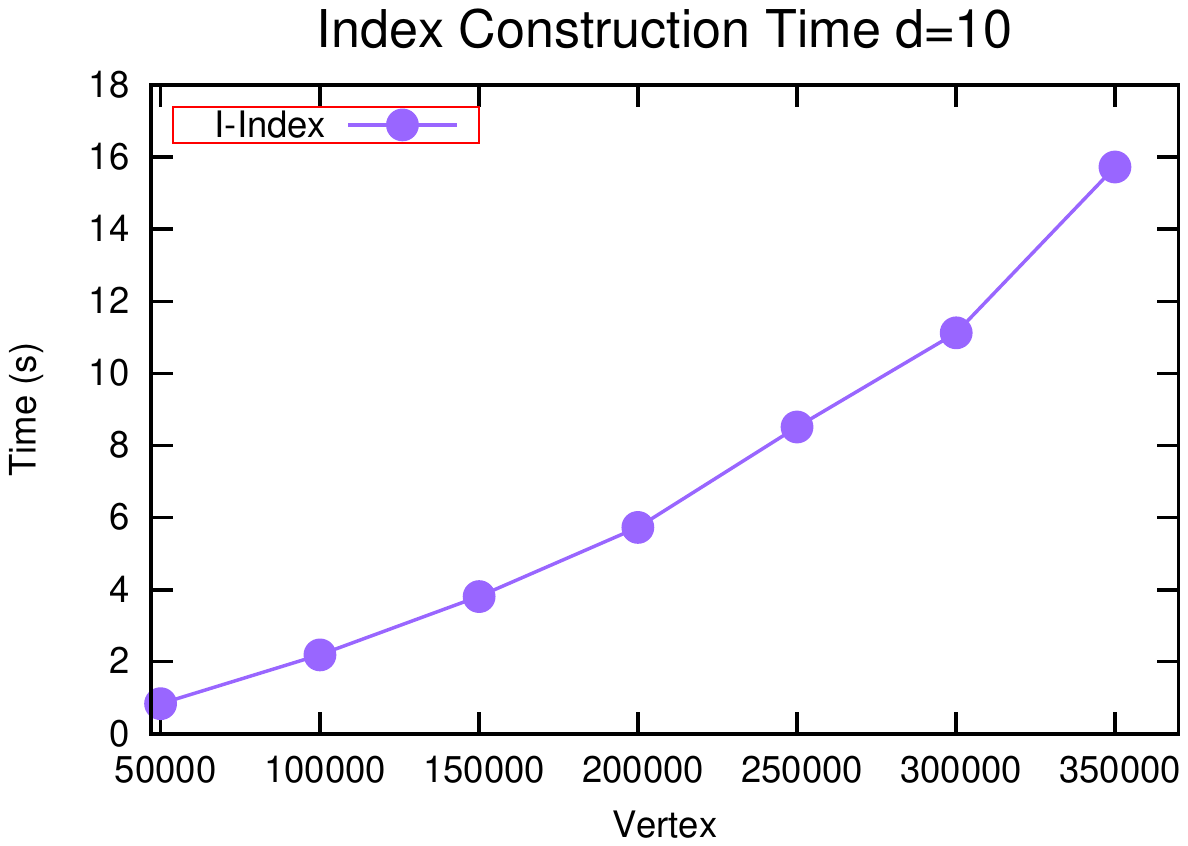}
  \caption{Index Construction }
\end{subfigure}
\begin{subfigure}{0.48\linewidth}
  \centering
  \includegraphics[width=\textwidth]{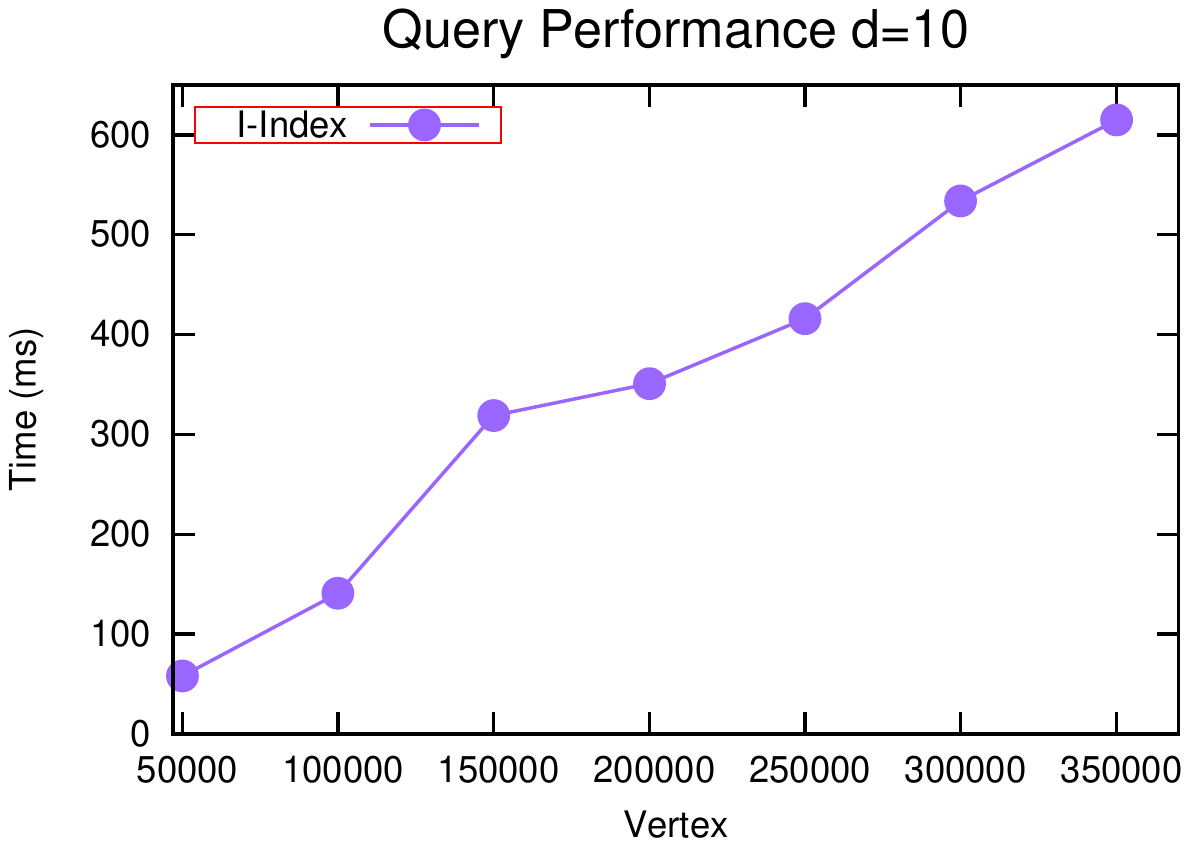}
  \caption{Query Performance}
\end{subfigure}
\begin{subfigure}{0.48\linewidth}
  \centering
  \includegraphics[width=\textwidth]{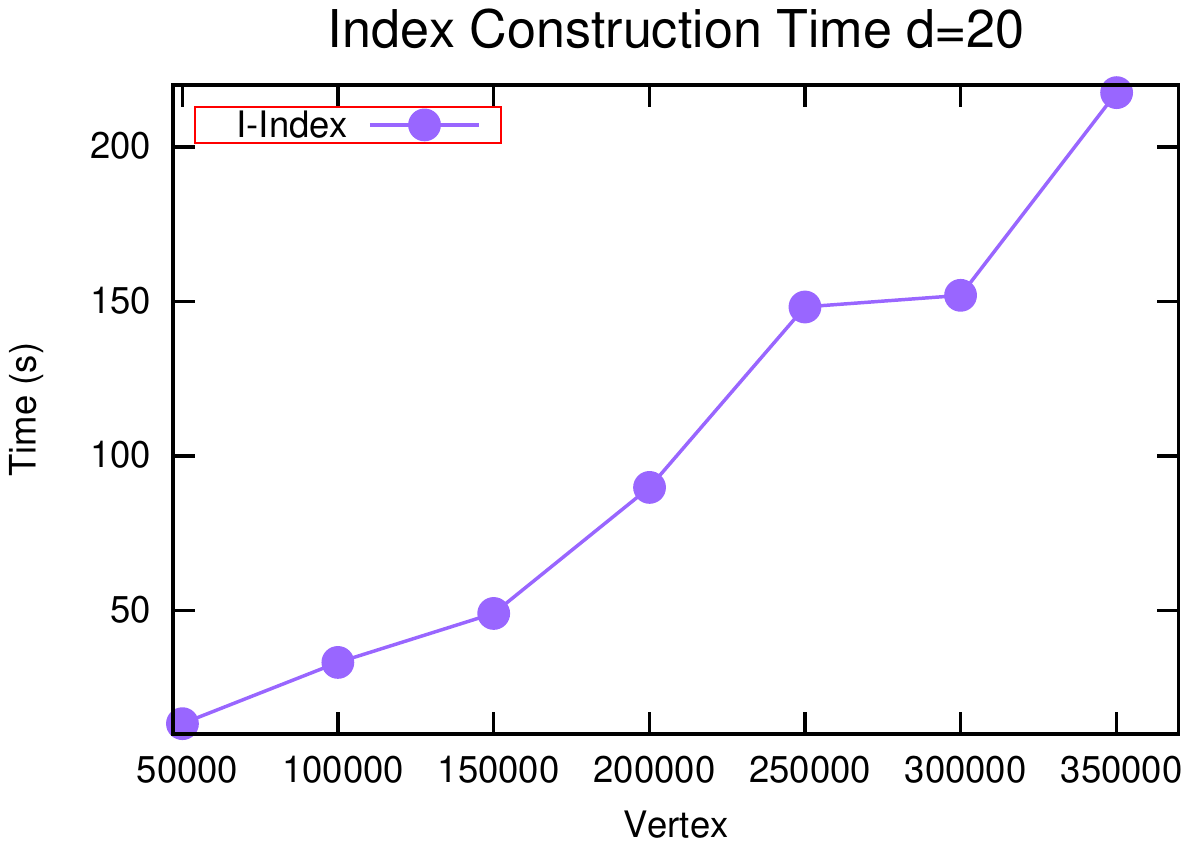}
  \caption{Index Construction}
\end{subfigure}
\begin{subfigure}{0.48\linewidth}
  \centering
  \includegraphics[width=\textwidth]{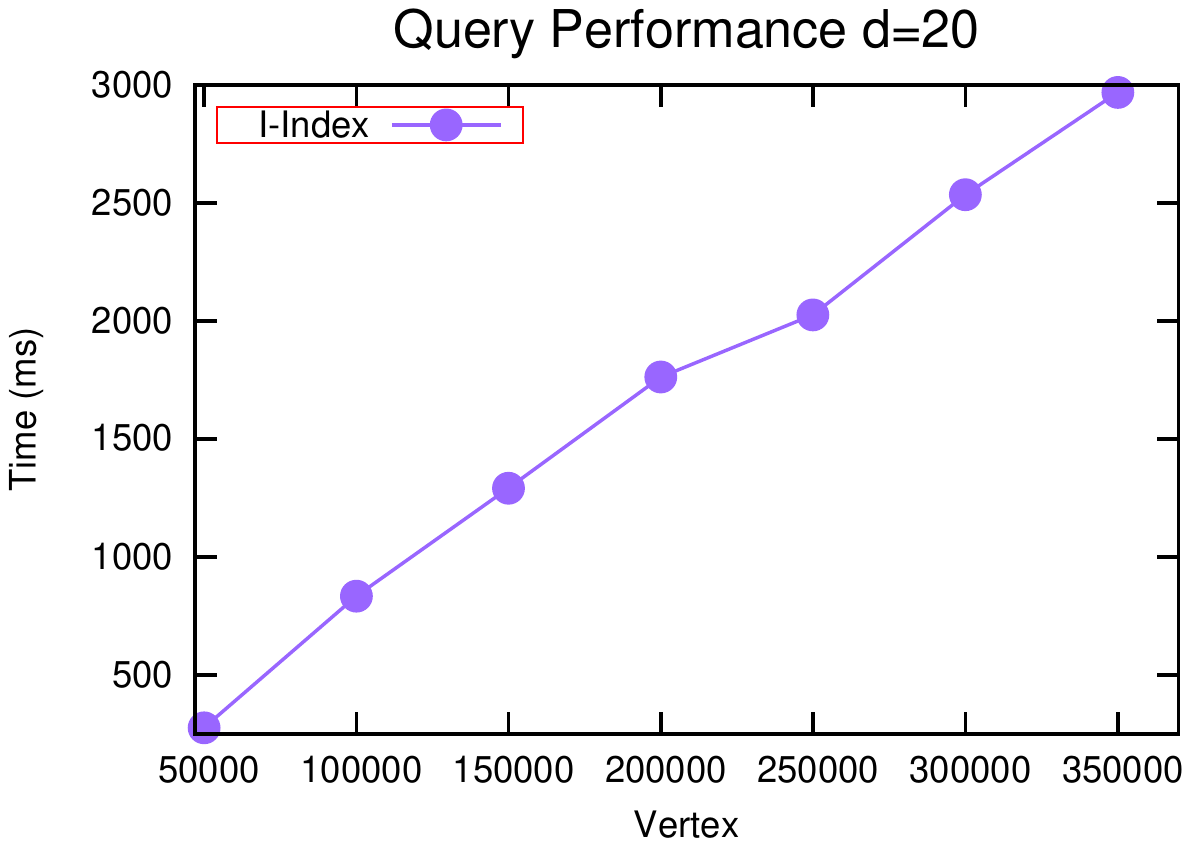}
  \caption{Query Performance}
\end{subfigure}
\caption{Impact of the number of vertices with a fixed degree. (a) and (b) 
are the results for the graphs with degree 10; (c) and (d) 
are the graphs with degree 20. }
\label{fig:pi_effect2}
\end{figure}

\begin{figure}[h]
\centering
\includegraphics[width=0.25\textwidth]{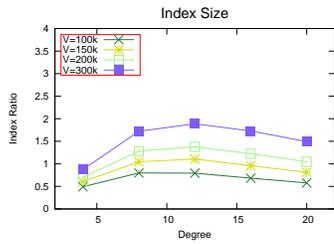}
\caption{Index Ratio of Inheritance-Index}
\label{fig:top-index-size}
\end{figure}

\section{Conclusion and Future Work}
In this paper, we have proposed a new type of graph analytic query,  \emph{Graph Window Query}. We formally defined two instantiations of graph windows: k-hop window and topological window.
We developed the Dense Block Index (DBIndex) to facilitate efficient processing of both types of graph windows. In addition, we also proposed the Inheritance Index (I-Index) that exploits a containment property of DAG to further improve the query performance of topological window queries. Both indices integrate window aggregation sharing techniques to salvage partial work done, which is both space and query efficient. We conducted extensive experimental evaluations over both large-scale real and synthetic datasets. The experimental results showed the efficiency and scalability of our proposed indices.

There remain many interesting research problems for graph window analytics. 
As part of our future work, we plan to explore structure-based window aggregations
which are complex than attri- bute-based window aggregations.
In structure-based aggregations, $W(v)$ refers to a subgraph of $G$ instead of a set of vertices,
and the aggregation function $\Sigma$ (e.g., centrality, PageRank, and {graph aggregation} \cite{wang2014pagrol,zhao2011graph}) operates on the structure of the subgraph $W(v)$. 
\section{Acknowledgment}
Qi Fan is supported by NGS Scholarship. This work is supported by the MOE/NUS grant R-252-000-500-112 and AWS in Education Grant award. 

\bibliographystyle{abbrv}
\bibliography{citations} 
\end{document}